\documentclass[12pt]{article}

\usepackage[utf8]{inputenc}
\usepackage[T1]{fontenc}
\usepackage{amsmath,amssymb,amsfonts,amsthm,mathtools}
\usepackage{graphicx}
\usepackage{subcaption}
\usepackage{booktabs}
\usepackage{bm}
\usepackage{bbm}
\usepackage{dsfont}
\usepackage{microtype}
\usepackage{hyperref}
\usepackage{cleveref}
\usepackage{enumitem}
\usepackage{float}
\usepackage{url}
\usepackage{nicefrac}
\usepackage{natbib}
\usepackage{appendix}
\usepackage{multirow} 
\hypersetup{colorlinks=true,citecolor=blue,linkcolor=blue}
\newtheorem{assumption}{Assumption}
\newtheorem{theorem}{Theorem}
\newtheorem{lemma}{Lemma}
\newtheorem{proposition}{Proposition}
\newtheorem{corollary}{Corollary}
\theoremstyle{remark}
\newtheorem{remark}{Remark}

\DeclareMathOperator*{\argmin}{arg\,min}
\DeclareMathOperator{\Var}{Var}
\newcommand{\E}{\mathbb{E}}
\newcommand{\Prob}{\mathbb{P}}
\newcommand{\R}{\mathbb{R}}
\newcommand{\one}{\mathbbm{1}}
\usepackage{algorithm}
\usepackage{comment}
\usepackage{xcolor}

\title{
Multi-Fidelity Quantile Regression}
\author{
Yixiang Liu
\and
Yao Zhang\thanks{Department of Statistics and Data Science, National University of Singapore.}
}

\begin{document}
\maketitle

\begin{abstract}
High-fidelity (HF) data are often expensive to collect and therefore scarce, making conditional quantiles difficult to estimate accurately. We propose a two-stage, model-agnostic method for multi-fidelity quantile regression. The central idea is a local quantile link: at each covariate value, the HF quantile is represented as a low-fidelity (LF) quantile evaluated at a covariate-dependent level. This reformulation reduces the problem to estimating the level function, which can be smoother than the HF quantile itself when the LF and HF conditional distributions have similar shapes. We also study the complementary regime in which this advantage weakens and introduce a correction step to improve robustness. Our theory characterizes when the proposed estimator converges faster than direct quantile regression using HF data alone and when the correction step provides further improvement. Experiments on synthetic and real data show that our method yields more accurate quantile estimates and tighter conformal prediction intervals.
\end{abstract}

\section{Introduction}

In many scientific applications, the most reliable measurements are also the most expensive and time-consuming to obtain. In molecular design, for example, wet-lab binding assays provide accurate estimates of binding affinity, but each experiment, from assay setup to quality control, may take days or even weeks to complete \citep{zhou2016systems,chen2023artificial}. 
Constraints of this kind are common across experimental science and form a major obstacle in areas such as drug discovery and materials design, where faster progress could have substantial societal impact by helping address urgent challenges such as major diseases and climate change \citep{vamathevan2019applications,tabor2018accelerating,butler2018machine}.
In recent years, researchers have increasingly turned to machine learning models trained on multi-fidelity data to accelerate scientific discovery \citep{kennedy2000predicting,legratiet2014recursive,wang2023exploring}. However, achieving highly accurate point prediction remains difficult because abundant low-fidelity data are only imperfect proxies for the scarce high-fidelity measurements obtained from real experiments. In settings like these, quantifying uncertainty in model predictions becomes crucial for decision-making, for example, in guiding future data collection or deciding which candidates should be validated through expensive real-world experiments \citep{yu2022uncertainty}.

In quantitative fields, quantile regression (QR) is a standard tool for uncertainty quantification \citep{koenker1978regression,koenker2005quantile}. It estimates conditional quantiles of the response given covariates and can be implemented using a range of machine learning models, such as random forests \citep{meinshausen2006quantile} and neural networks \citep{taylor2000quantile,cannon2011quantile}. QR is especially useful for modeling heteroscedastic, asymmetric, or heavy-tailed data, where lower and upper conditional quantiles can be combined to form prediction intervals for the response without specifying a parametric model for the full conditional distribution. More recently, \citet{romano2019conformalized} combined QR with conformal prediction to construct marginally valid prediction intervals. Under the standard i.i.d. assumption, these intervals are guaranteed to achieve the desired coverage on average over the population, regardless of the model used to estimate the quantiles.

However, with limited HF data, estimating conditional quantiles can be substantially harder than estimating conditional means. At a fixed covariate value \(x\), the variance of a conditional quantile estimator typically scales as
\begin{equation}\label{equ:fxy}
\mathrm{Var}\!\left(\hat Q_{Y\mid X}(\tau\mid x)\right)
\approx
\frac{\tau(1-\tau)}{n\, f_{Y\mid X}\!\left(Q_{Y\mid X}(\tau\mid x)\mid x\right)^2},
\end{equation}
where \(n\) is the sample size, and \(f_{Y\mid X}(\cdot\mid x)\) is the conditional density evaluated at the target quantile \citep{koenker2005quantile}. When both the sample size and this density are small, the fitted quantiles can be highly variable. Even when conformal prediction guarantees marginal coverage, the resulting intervals may still be wide or unstable if the fitted quantiles are noisy. This makes it important to exploit auxiliary information whenever it is available.

Although HF data are scarce, low-fidelity (LF) data can often be obtained at scale. In molecular design, for example, molecular docking or lower-level quantum calculations provide cheaper but less accurate approximations to binding affinity \citep{pantsar2018binding,kairys2019binding}. Multi-fidelity learning seeks to leverage such abundant LF observations to compensate for the scarcity of HF data \citep{peherstorfer2018survey}. However, most existing multi-fidelity methods focus on conditional means or other average response functionals \citep{kennedy2000predicting,legratiet2014recursive,perdikaris2017nonlinear,meng2020composite,fernandez2023review}, and extending these ideas to quantile regression is less straightforward. Unlike conditional means, conditional quantiles can differ across fidelity levels in nonlinear, level-dependent ways. For instance, even if molecular docking scores track average binding affinity well after a simple rescaling, their conditional distribution may be substantially noisier or more skewed than the wet-lab measurements. Consequently, it is not obvious which LF quantile level should correspond to a given HF quantile, or how to correct this mismatch without imposing strong parametric assumptions.

\subsection{Overview}
Motivated by these challenges, we propose Multi-Fidelity Quantile Regression (MFQR), which rests on the following link between the two fidelity levels:

\begin{assumption}[Local Quantile Link]\label{assume:qq}
For a fixed target level \(\tau\in (0,1)\) and any covariate value \(x \in \mathcal X\), the HF and LF quantile functions satisfy
\begin{equation}\label{equ:qq}
Q^H(\tau\mid x)=Q^L(u_\tau(x)\mid x),
\end{equation}
for some covariate-dependent level function \(u_\tau(x)\in(0,1)\).
\end{assumption}
Assumption~\ref{assume:qq} means that the target HF quantile can be matched locally by an LF quantile at a different, covariate-dependent probability level \(u_\tau(x)\). 
For instance, if the LF distribution is heavier-tailed than the HF distribution, then \(u_\tau(x)\) should lie closer to \(0.5\) than \(\tau\), so that a more central LF quantile matches the more extreme HF quantile. 
Thus, the assumption is mild whenever the support of the LF conditional distribution contains the target HF quantile. 
In particular, it holds formally for common full-support location-scale models, such as Gaussian, Student-\(t\), logistic, or Laplace conditional distributions. 
If the two conditional distributions are centered at different locations, we can first center the responses by subtracting conditional mean estimators and then apply the same quantile-link formulation to the resulting residuals.

When the two fidelity levels have similar conditional shapes, the shifted probability level $u_\tau(x)$ may vary more smoothly over the covariate space than the original HF quantile function. This suggests that reformulating quantile estimation on the probability scale can reduce the sample complexity of estimating the high-fidelity quantile. Motivated by this idea, we propose a model-agnostic wrapper that uses the low-fidelity distribution to learn the level function $u_\tau(x)$, together with a correction step based on the high-fidelity quantile equation to refine the quantile estimate.
Our theory studies when the wrapper estimator converges faster than direct HF quantile regression and when the correction step can yield further improvements. Experiments on synthetic and real datasets show that the proposed method produces more accurate quantile estimates and narrower conformal intervals than baseline methods while preserving marginal coverage guarantees.

The remainder of this paper is organized as follows. 
\Cref{sect:related} reviews related work. 
\Cref{sec:wrapper_method} introduces the wrapper estimator and establishes its theoretical convergence properties. 
\Cref{sec:onestep} develops a correction framework for refining the wrapper estimator.
Finally, \Cref{sec:experiments} demonstrates the empirical performance of MFQR on synthetic and scientific datasets.

\section{Related Work}\label{sect:related}

This paper relates to several strands of literature: multi-fidelity modeling, transfer learning, quantile regression, and conformal prediction.

A classical starting point for multi-fidelity modeling is the autoregressive framework of \citet{kennedy2000predicting}, which models the high-fidelity (HF) response as a scaled version of the low-fidelity (LF) response plus a discrepancy term. This framework has been extended to multiple fidelity levels \citep{legratiet2014recursive} and to nonlinear cross-fidelity mappings using Gaussian processes and neural networks \citep{perdikaris2017nonlinear,meng2020composite,fernandez2023review}. Most of this literature, however, focuses on conditional means, surrogate response surfaces, or other average response functionals. By contrast, we study nonparametric transfer of conditional quantiles across fidelity levels, without assuming simple autoregressive relationships or Gaussian response models. Our approach is complementary to existing multi-fidelity methods: one may first estimate and remove shared mean structure across fidelity levels, and then apply our quantile-transfer procedure to the residuals.

Our work is also related to transfer learning for quantile regression and transformation-based regression. In parametric settings, \citet{huang2022estimation}, \citet{zhang2022transfer}, and \citet{jin2024transfer} study transfer learning for high-dimensional linear quantile regression models by exploiting approximate similarity between source and target coefficient vectors. Classical transformation-based methods use parametric or nonparametric transformations to simplify regression structure \citep{box1964analysis,breiman1985estimating}, and \citet{fan2016multitask} developed multitask quantile regression under a transnormal model using rank-based covariance regularization. We do not assume that a global marginal transformation makes the joint distribution Gaussian, nor do we transfer linear coefficients across tasks. Instead, we transfer quantile information through a covariate-dependent level function that specifies which LF quantile locally corresponds to a target HF quantile. This formulation directly addresses a cross-fidelity matching problem that does not arise in the settings considered by coefficient-transfer or global transformation approaches.

Our quantile-link idea is also conceptually related to the changes-in-changes literature in causal inference \citep{athey2006identification,callaway2019quantile,melly2015changes}, which uses distributional transformations to identify counterfactual outcome distributions in nonlinear difference-in-differences models. The goal of these methods, however, is to identify counterfactual outcome distributions and distributional treatment effects. By contrast, the key estimand in our framework is a covariate-dependent level function. Our analysis studies when this transformation is useful and when it may fail. In particular, the LF transformation alone can be unhelpful or even misleading when the LF and HF distributions differ in ways that make the induced level function hard to estimate. This motivates our correction step, which refines the wrapper estimator using the HF quantile equation.

The correction step in our method draws on the literature on debiased quantile regression. 
In linear quantile regression, \citet{wang2024semiparametric} propose a computationally feasible one-step estimator based on semiparametric efficient scores. 
In distributed settings, \citet{pan2022note} develop a communication-efficient one-step update for quantile regression. 
In high-dimensional models, \citet{giessing2023debiased} study debiasing methods for penalized quantile regression based on rank scores and smoothing. 
These works use first-order corrections to improve the efficiency or scalability of pilot quantile estimators. 
Our goal is different: we use a one-step correction to refine the proposed estimator in multi-fidelity settings, particularly when the wrapper loses its advantage or \Cref{assume:qq} holds only approximately.

Finally, our work is connected to conformal prediction through conformalized quantile regression (CQR) \citep{romano2019conformalized}, which calibrates estimated quantiles on a hold-out set to obtain marginally valid prediction intervals. More broadly, our method relates to conformal prediction methods that adapt prediction-set or interval levels to calibrate classifiers \citep{zhang2024posterior}, restore marginal coverage after localization \citep{guan2023localized}, or handle distribution shift \citep{gibbs2021adaptive}.
These methods pursue goals different from ours: none studies multi-fidelity learning, and none seeks to learn a covariate-dependent cross-fidelity level function. We are also aware of recent work \citep{moya2024conformal} applying split conformal prediction \citep{papadopoulos2002inductive,lei2018distribution} to multi-fidelity DeepONet regression \citep{howard2022multifidelity,lu2022multifidelity}, which is aimed at physics applications involving mappings between infinite-dimensional function spaces. By contrast, we study multi-fidelity quantile regression directly, with the broader goal of improving uncertainty quantification in data-scarce settings.

\section{Multi-fidelity quantile regression (MFQR)}
\label{sec:wrapper_method}

We focus on the two-fidelity setting. Let
\[
\mathcal D^H=\{(X_i^H,Y_i^H)\}_{i=1}^{n_H},
\qquad
\mathcal D^L=\{(X_j^L,Y_j^L)\}_{j=1}^{n_L}
\]
be independent high-fidelity (HF) and low-fidelity (LF) samples, with \(n_L \gg n_H\). 
Let \(F^{H}(y\mid x)\) and \(F^L(y\mid x)\) denote the conditional HF and LF cumulative distribution functions (CDFs), respectively.
Our goal is to estimate the HF conditional quantile $q_\tau(x) := Q^H(\tau\mid x)$ for a fixed level $\tau\in(0,1)$.

In practice, it is often convenient to assume that \(Y^H\) and \(Y^L\) have the same conditional mean \(\mu(x)\), or more generally, as in \citet{kennedy2000predicting}, that their mean functions \(\mu_L(x)\) and \(\mu_H(x)\) satisfy a simple autoregressive relationship. Subtracting mean estimators from the responses adjusts for mean differences. Then MFQR can be applied to the resulting residuals.

In this section, we impose the following standing assumption.
Let \(\mathcal C^m(A)\) denote the class of functions on a set \(A\subset\mathbb R\) with continuous derivatives up to order \(m\). We write \(\lceil \beta_\rho\rceil\) for the smallest integer not smaller than \(\beta_\rho\).

\begin{assumption}
\label{ass:global_locality}
Assume that \(\mathcal X\subset\mathbb R^p\) is compact. There exist compact intervals
\(
\mathcal Y\subset\mathbb R
\)
and
\(
\mathcal U^L\subset(0,1)
\)
such that, for all \(x\in\mathcal X\),

\begin{enumerate}[label=(\roman*)]
    \item the target HF quantile \(q_\tau(x)\) lies in the interior of \(\mathcal Y\), the wrapped target level \(u_\tau(x)\) lies in the interior of \(\mathcal U^L\), and the conditional support of \(U:=F^L(Y^H\mid X)\) given \(X=x\) is contained in \(\mathcal U^L\);

    \item the conditional supports of both \(Y^H\mid X=x\) and \(Y^L\mid X=x\) are contained in \(\mathcal Y\);

    \item \(F^H(\cdot\mid x)\in \mathcal C^1(\mathcal Y)\) and \(F^L(\cdot\mid x)\in \mathcal C^1(\mathcal Y)\), with
    \[
    f^H(y\mid x)\ge c_H>0,~\forall y\in\mathcal Y,
    \qquad
    f^L(y\mid x)\ge c_L>0,~\forall y\in\mathcal Y;
    \]

    \item \(Q^L(\cdot\mid x)\in \mathcal C^2(\mathcal U^L)\).
\end{enumerate}
\end{assumption}

This standing assumption is imposed to simplify the inversion arguments used
throughout the theory. It guarantees that the relevant HF and LF quantiles lie in
regions where the conditional CDFs are strictly increasing with densities bounded
away from zero, so that the associated quantile maps are locally stable and
differentiable. In particular, it avoids repeatedly introducing
\(x\)-dependent neighborhoods around the target quantiles in later results. For
full-support distributions, the same arguments can be localized to compact
neighborhoods of the target quantiles on which the relevant densities are
bounded away from zero.

\subsection{The wrapper estimator}

Recall that our key idea in the introduction was that, under Assumption~\ref{assume:qq}, the target HF quantile can be represented as an LF quantile evaluated at a covariate-dependent probability level \(u_\tau(x)\). The next result makes this representation operational by showing that \(u_\tau(x)\) is itself a conditional quantile of a transformed response. This allows us to estimate \(u_\tau(x)\) directly by quantile regression on the transformed probability scale.

\begin{proposition}\label{prop:u}
Under Assumptions \ref{assume:qq} and \ref{ass:global_locality}, the level function \(u_\tau(x)\) in \eqref{equ:qq} is the conditional \(\tau\)-quantile of
\(
U:=F^L(Y^H\mid X),
\)
that is, for any $x\in \mathcal X$,
\[
u_\tau(x) = Q_{U\mid X}(\tau\mid x).
\]
\end{proposition}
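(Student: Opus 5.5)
The plan is to compute the conditional CDF of \(U=F^L(Y^H\mid X)\) given \(X=x\) directly, and then invert it at level \(\tau\). Fix \(x\in\mathcal X\). By Assumption~\ref{ass:global_locality}(ii)--(iii), \(F^L(\cdot\mid x)\) is continuous and strictly increasing on \(\mathcal Y\), with density at least \(c_L>0\). The conditional support of \(Y^H\mid X=x\) lies in \(\mathcal Y\). Hence, for any \(v\) in the range \(F^L(\mathcal Y\mid x)\), the events \(\{F^L(Y^H\mid x)\le v\}\) and \(\{Y^H\le Q^L(v\mid x)\}\) coincide almost surely. This uses that \(Q^L(\cdot\mid x)\) is the genuine inverse of \(F^L(\cdot\mid x)\) on that range. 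Taking conditional probabilities gives
\[
F_{U\mid X}(v\mid x)=F^H\bigl(Q^L(v\mid x)\mid x\bigr).
\]
I would also record the trivial values \(F_{U\mid X}(v\mid x)=0\) or \(1\) for \(v\) below or above the range. This is consistent with part (i), which states that the support of \(U\mid X=x\) lies in \(\mathcal U^L\).

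Next I evaluate this identity at \(v=u_\tau(x)\). This point lies in the interior of \(\mathcal U^L\) by (i). Assumption~\ref{assume:qq} gives \(Q^L(u_\tau(x)\mid x)=q_\tau(x)\), so
\[
F_{U\mid X}(u_\tau(x)\mid x)=F^H(q_\tau(x)\mid x)=\tau.
\]
The last equality holds because \(q_\tau(x)\) is interior to \(\mathcal Y\) and \(F^H(\cdot\mid x)\) is continuous and strictly increasing there.

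Finally, I show that \(u_\tau(x)\) is the unique \(\tau\)-quantile, i.e.\ that \(Q_{U\mid X}(\tau\mid x)=\inf\{v:F_{U\mid X}(v\mid x)\ge\tau\}\) equals \(u_\tau(x)\). Near \(u_\tau(x)\), the map \(v\mapsto F^H(Q^L(v\mid x)\mid x)\) composes a strictly increasing, continuous \(Q^L(\cdot\mid x)\) (its derivative is \(1/f^L>0\)) with a strictly increasing \(F^H\) (density \(\ge c_H\)). It is therefore strictly increasing and continuous. So for \(v<u_\tau(x)\) we have \(F_{U\mid X}(v\mid x)<\tau\), and the infimum is attained exactly at \(u_\tau(x)\).

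The main obstacle is the bookkeeping in the first step. I must make sure \(Q^L(v\mid x)\) is a true inverse, and that the event equivalence holds for all relevant \(v\), including \(v\) slightly below \(u_\tau(x)\). This is handled by working inside \(\mathcal Y\), where \(f^L\ge c_L\), and by noting that \(Y^H\in\mathcal Y\) almost surely. Boundary values of \(v\) outside the range of \(F^L(\cdot\mid x)\) then only contribute the trivial values \(0\) and \(1\) and do not affect the infimum.
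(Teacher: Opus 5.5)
Your proposal is correct and follows essentially the same route as the paper. Both arguments compute the conditional CDF of \(U\) given \(X=x\) as \(F^H(Q^L(\cdot\mid x)\mid x)\), evaluate it at \(u_\tau(x)\) via \(Q^L(u_\tau(x)\mid x)=q_\tau(x)\) to get \(\tau\), and obtain uniqueness from strict monotonicity of this composition (the paper phrases this through the positive density \(f^H(Q^L(u\mid x)\mid x)/f^L(Q^L(u\mid x)\mid x)\)). The only difference is ordering: the paper first rewrites \(u_\tau(x)=F^L(q_\tau(x)\mid x)\) and computes \(\Prob\{U\le u_\tau(x)\mid X=x\}\) directly, whereas you derive the general CDF formula and then substitute.
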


\begin{proof}
Fix \(x\in\mathcal X\). By Assumption~\ref{assume:qq},
\(
q_\tau(x)=Q^L(u_\tau(x)\mid x).
\)
Since \(F^L(\cdot\mid x)\) is strictly increasing on \(\mathcal Y\) by
Assumption~\ref{ass:global_locality}, applying \(F^L(\cdot\mid x)\) to both
sides gives
\(
u_\tau(x)=F^L(q_\tau(x)\mid x).
\)
Now define \(U=F^L(Y^H\mid X)\). Conditional on \(X=x\), this becomes
\(U=F^L(Y^H\mid x)\). Therefore,
\begin{align*}
\mathbb P\{U\le u_\tau(x)\mid X=x\}
&=
\mathbb P\{F^L(Y^H\mid x)\le u_\tau(x)\mid X=x\} \\
&=
\mathbb P\{F^L(Y^H\mid x)\le F^L(q_\tau(x)\mid x)\mid X=x\} \\
&=
\mathbb P\{Y^H\le q_\tau(x)\mid X=x\} \\
&=
F^H(q_\tau(x)\mid x) \\
&=
F^H(Q^H(\tau\mid x)\mid x) \\
&=
\tau .
\end{align*}
Here the first equality uses the definition of \(U\); the second uses
\(u_\tau(x)=F^L(q_\tau(x)\mid x)\); the third uses the strict monotonicity of
\(F^L(\cdot\mid x)\); the fourth is the definition of the conditional CDF
\(F^H\); the fifth uses \(q_\tau(x)=Q^H(\tau\mid x)\); and the last equality
uses the fact that \(F^H(\cdot\mid x)\) is strictly increasing in a neighborhood
of \(q_\tau(x)\), so that \(F^H(Q^H(\tau\mid x)\mid x)=\tau\).

It remains to justify uniqueness of the conditional \(\tau\)-quantile of
\(U\mid X=x\). For \(u\in\mathcal U^L\), define
\[
H_U(u\mid x):=\mathbb P\{U\le u\mid X=x\}.
\]
Using again the definition of \(U\) and the strict monotonicity of
\(F^L(\cdot\mid x)\),
\[
H_U(u\mid x)
=
\mathbb P\{Y^H\le Q^L(u\mid x)\mid X=x\}
=
F^H(Q^L(u\mid x)\mid x).
\]
By Assumption~\ref{ass:global_locality}, \(F^H(\cdot\mid x)\) and
\(F^L(\cdot\mid x)\) are continuously differentiable on \(\mathcal Y\), and
their densities are bounded away from zero there. Hence, on the relevant range,
\(H_U(\cdot\mid x)\) is differentiable with density
\[
f_{U\mid X}(u\mid x)
=
\frac{
f^H(Q^L(u\mid x)\mid x)
}{
f^L(Q^L(u\mid x)\mid x)
}.
\]
This density is positive for all \(u\in\mathcal U^L\). Thus
\(H_U(\cdot\mid x)\) is strictly increasing on the relevant neighborhood, so the
conditional \(\tau\)-quantile of \(U\mid X=x\) is unique. Since we have already
shown that
\[
\mathbb P\{U\le u_\tau(x)\mid X=x\}=\tau,
\]
it follows that
\[
Q_{U\mid X}(\tau\mid x)=u_\tau(x).
\]
\end{proof}

\begin{remark}
\Cref{prop:u} also follows from a stronger but more interpretable condition on the CDFs. Suppose there exists an increasing map
\(
g_x:[0,1]\to[0,1]
\)
with \(g_x(0)=0\) and \(g_x(1)=1\) such that
\[
F^{H}(y\mid x)=g_x\!\left(F^L(y\mid x)\right),\qquad \forall y\in\R.
\]
Then the LF and HF distributions differ only through a monotone distortion on the probability scale. If \(F^H(\cdot\mid x)\) is strictly increasing, then \Cref{assume:qq} holds with
\(
u_{\tau}(x)=g_x^{-1}(\tau).
\)
This condition is stronger because it specifies a full distributional relationship between the conditional CDFs \(F^{L}\) and \(F^{H}\).
\end{remark}

\begin{algorithm}[t]
\caption{Wrapper estimator in MFQR}
\label{alg:wrapper}
\vspace{-10pt}
\begin{enumerate}[leftmargin=*,label=\arabic*.]
    \item Fit an estimator \(\hat F^L(\cdot\mid x)\) of the LF CDF \(F^L(\cdot\mid x)\) using the sample \(\mathcal D^L\).
    \item For each HF observation \((X_i^H,Y_i^H)\), compute the wrapped pseudo-response
    \[
    \hat U_i := \hat F^L(Y_i^H\mid X_i^H),\qquad i=1,\dots,n_H.
    \]
    \item Use the HF sample \(\{(X_i^H,\hat U_i)\}_{i=1}^{n_H}\) to estimate the conditional quantile function \(u_\tau(x)=Q_{U\mid X}(\tau\mid x)\), yielding an estimator \(\hat u_\tau(x)\).
    \item Map back through the LF function to obtain the wrapper estimator
    \[
    \tilde q_\tau(x):=(\hat F^L)^{-1}(\hat u_\tau(x)\mid x).
    \]
\end{enumerate}
\vspace{-10pt}
\end{algorithm}

\Cref{alg:wrapper} summarizes the construction of the wrapper estimator \(\tilde q_\tau(x)\) for the HF quantile \(Q^H(\tau\mid x)\). To make the procedure concrete, one simple implementation is kernel-based. First, we estimate the LF conditional CDF \(F^L(y\mid x)\) by a Nadaraya--Watson estimator,
\begin{equation}
\label{equ:F_L_hat}
\hat F^L(y\mid x)
=
\sum_{j=1}^{n_L} w_j^L(x)\,\one\{Y_j^L\le y\},
\qquad
w_j^L(x)
:=
\frac{K_{h_L}(X_j^L-x)}
{\sum_{i=1}^{n_L} K_{h_L}(X_i^L-x)},
\end{equation}
where \(K_{h_L}(z):=h_L^{-p}K(z/h_L)\) is a kernel with bandwidth \(h_L>0\).

We then wrap the HF responses by computing
\(
\hat U_i:=\hat F^L(Y_i^H\mid X_i^H).
\)
Next, we estimate \(u_\tau(x)\) by the local constant kernel quantile estimator
\begin{equation}
\label{equ:local_quantile}
\hat u_\tau(x)
=
\inf\left\{
u:
\hat F^U(u\mid x):=
\sum_{i=1}^{n_H} w_i^U(x)\,\one\{\hat U_i\le u\}\ge \tau
\right\},
\end{equation}
where the weights $w_i^U(x)$ are defined as
\[
w_i^U(x)
:=
\frac{K_{h_u}(X_i^H-x)}
{\sum_{\ell=1}^{n_H} K_{h_u}(X_\ell^H-x)},
\quad
K_{h_u}(z):=h_u^{-p}K(z/h_u).
\]
Finally, we obtain the wrapper estimator \(\tilde q_\tau(x)\) by mapping \(\hat u_\tau(x)\) back to the response scale through the inverse LF quantile function:
\[
\tilde q_\tau(x):=(\hat F^L)^{-1}(\hat u_\tau(x)\mid x).
\]
\textbf{Other models.}
The kernel estimators above provide one concrete implementation of the wrapper estimator. In Appendix \ref{app:implementation}, we discuss a few machine learning approaches for estimating CDFs and quantiles.

\subsection{Why the wrapped target can be smoother}
\label{sec:wrapper_smoothness}

To understand why the wrapped target $u_\tau(x)$ is often easier to estimate than the original HF quantile $q_\tau(x)$, consider a structural model where the two fidelities differ only by a localized scale distortion $\rho(x)$:
\begin{equation}\label{eq:two_model}
Y^H = \mu(x) + \sigma(x)W^H, \qquad Y^L = \mu(x) + \sigma(x)\rho(x)W^L,
\end{equation}
where $W^H$ and $W^L$ are independent noise variables with strictly increasing, and potentially distinct, CDFs $G_H$ and $G_L$, respectively. Here \(\mu(x)\) is a common mean function, \(\sigma(x)\) is a common baseline scale, and \(\rho(x)\) describes the relative scale distortion between the two fidelities.

Under model \eqref{eq:two_model}, the target HF quantile function can be written as
\[
q_\tau(x)=\mu(x)+\sigma(x)G_H^{-1}(\tau).
\]
The mean-adjusted target is defined as
\[
r_\tau(x):=q_\tau(x)-\mu(x)=\sigma(x)G_H^{-1}(\tau).
\]
The wrapper target is defined as
\[
u_\tau(x):=F^L(q_\tau(x)\mid x)
=
G_L\!\left(\frac{r_\tau(x)}{\sigma(x)\rho(x)}\right)
=
G_L\!\left(\frac{G_H^{-1}(\tau)}{\rho(x)}\right).
\]
Thus the wrapped target \(u_\tau(x)\) depends only on the relative distortion \(\rho(x)\), but not on the mean function \(\mu(x)\) or the baseline scale \(\sigma(x)\).

Suppose the mean, scale, and distortion functions are Hölder smooth, that is, they belong to Hölder classes \(\mathcal H(\beta,C)\). Roughly speaking, \(f\in \mathcal H(\beta,C)\) means that \(f\) has smoothness order \(\beta\) and Hölder constant \(C\), so larger \(\beta\) corresponds to a smoother function. A definition of \(\mathcal H(\beta,C)\) is given in Appendix~\ref{section:holder}.

\begin{assumption}
\label{ass:holder_mu}
The mean, scale, and distortion functions in \eqref{eq:two_model} satisfy
\[
\mu \in \mathcal H(\beta_\mu,C_\mu),\qquad
\sigma \in \mathcal H(\beta_\sigma,C_\sigma),\qquad
\rho \in \mathcal H(\beta_\rho,C_\rho).
\]
\end{assumption}
\vspace{5pt}

\begin{assumption}
\label{ass:G_regular}
For fixed \(\tau\in(0,1)\), the quantile \(G_H^{-1}(\tau)\) is finite, \(\rho(x)\) is bounded away from zero on \(\mathcal X\), and
\(
G_L \in \mathcal C^{\lceil \beta_\rho\rceil}(\mathbb R).
\)
\end{assumption}
\vspace{5pt}

\begin{proposition}
\label{prop:wrapper_smoother_extended}
Under Assumptions~\ref{ass:holder_mu}--\ref{ass:G_regular},
\[
q_\tau \in \mathcal H\!\left(\min\{\beta_\mu,\beta_\sigma\},\; C_q\right),
\quad
r_\tau \in \mathcal H\!\left(\beta_\sigma,\; |G_H^{-1}(\tau)|C_\sigma\right),
\quad
u_\tau \in \mathcal H\!\left(\beta_\rho,\; C_\tau C_\rho\right),
\]
where \(C_q>0\) is a finite constant depending only on \(\tau,G_H,C_\mu,C_\sigma\), and \(C_\tau>0\) is a finite constant depending only on \(\tau,G_H,G_L\), and the range of \(\rho\).
\end{proposition}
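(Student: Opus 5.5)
The plan is to read each claim off the explicit formulas already derived under model \eqref{eq:two_model}:
\[
q_\tau=\mu+G_H^{-1}(\tau)\,\sigma,\qquad r_\tau=G_H^{-1}(\tau)\,\sigma,\qquad u_\tau=\phi\circ\rho,\quad \phi(t):=G_L\!\left(c/t\right),\ c:=G_H^{-1}(\tau).
\]
I will use the standard Hölder-class definition of Appendix~\ref{section:holder}, with order $\beta$, $\ell=\lfloor\beta\rfloor$ and constant $C$. Under that definition $f\in\mathcal H(\beta,C)$ means the derivatives up to order $\ell$ are controlled and the $\ell$-th derivatives are $(\beta-\ell)$-Hölder, with all these quantities bounded by $C$. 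The argument then rests on three elementary facts about these classes.

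\emph{Step 1 (scaling).} If $f\in\mathcal H(\beta,C)$ and $a\in\R$, then $af\in\mathcal H(\beta,|a|C)$. Applying this to $\sigma$ with $a=c$ gives $r_\tau\in\mathcal H(\beta_\sigma,|G_H^{-1}(\tau)|C_\sigma)$ at once. Assumption~\ref{ass:G_regular} guarantees that $c$ is finite.

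\emph{Step 2 (sums and nesting).} On the compact set $\mathcal X$, the classes are nested: $\mathcal H(\beta,C)\subset\mathcal H(\beta',C')$ for $\beta'\le\beta$, with $C'$ depending on $C$ and $\operatorname{diam}\mathcal X$. To lower the order, I would use the mean value theorem on the top derivative together with $|x-y|^{\beta-\beta'}\le \operatorname{diam}(\mathcal X)^{\beta-\beta'}$. The classes are also closed under addition, with constants adding. Combining these with Step 1, $q_\tau=\mu+r_\tau$ lies in $\mathcal H(\min\{\beta_\mu,\beta_\sigma\},C_q)$, where $C_q$ is essentially $C_\mu+|G_H^{-1}(\tau)|C_\sigma$ up to the nesting factor. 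So $C_q$ depends on $\tau,G_H,C_\mu,C_\sigma$, and the domain is treated as fixed.

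\emph{Step 3 (composition; the main step).} Since $\rho$ is bounded away from zero and continuous on compact $\mathcal X$, its range lies in a compact interval $I=[\rho_{\min},\rho_{\max}]$ that avoids $0$. On $I$ the map $t\mapsto c/t$ is $C^\infty$ with bounded derivatives. Because $G_L\in\mathcal C^{\lceil\beta_\rho\rceil}$, the composition $\phi$ is $\mathcal C^{\lceil\beta_\rho\rceil}$ on $I$. Its derivatives up to order $\lceil\beta_\rho\rceil$ are therefore bounded on $I$ by a constant $M$ depending only on $\tau,G_H,G_L$ and $I$.

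Next I would prove the composition lemma: if $\phi\in\mathcal C^{\lceil\beta\rceil}(I)$ with these derivative bounds and $\rho\in\mathcal H(\beta,C_\rho)$ takes values in $I$, then $\phi\circ\rho\in\mathcal H(\beta,C_\tau C_\rho)$. The approach is to expand $\partial^\alpha(\phi\circ\rho)$ with $|\alpha|=\ell$ by the multivariate Faà di Bruno formula. Each term has the form $\phi^{(k)}(\rho)\prod_j\partial^{\gamma_j}\rho$. To bound its $(\beta-\ell)$-Hölder seminorm, I would apply the telescoping product rule, the Lipschitz continuity of $\phi^{(k)}$ for $k\le\ell<\lceil\beta\rceil$, and the Hölder continuity of the lower derivatives of $\rho$, again using nesting on compact $\mathcal X$. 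The boundary case $\beta\in\mathbb N$ needs the same argument with $\ell=\beta-1$ and Lipschitz top derivatives, which $\lceil\beta\rceil=\beta$ derivatives of $\phi$ still provide.

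\emph{Main obstacle.} The hard part is bookkeeping the constant so that it factors as $C_\tau C_\rho$. Terms with several factors of derivatives of $\rho$ produce powers of $C_\rho$, not a single factor. The first thing I would try is to follow the paper's (presumably sup-norm-inclusive) Hölder convention and note that when $C_\rho\ge1$, or with $\rho$ normalized, $C_\rho^k\le C_\rho^{\ell+1}$. The powers, together with $M$ and the dependence on the range of $\rho$, would then be absorbed into $C_\tau$. The fallback is to read the stated $C_\tau C_\rho$ as "a constant depending only on $\tau,G_H,G_L$, the range of $\rho$, and $C_\rho$", which is what the argument genuinely delivers. For $\beta_\rho\le1$, where $\ell=0$, the factorization is exact: $|\phi(\rho(x))-\phi(\rho(y))|\le \sup_I|\phi'|\,C_\rho|x-y|^{\beta_\rho}$.
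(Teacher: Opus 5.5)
Your three steps are exactly the paper's proof. It writes $r_\tau$ as a constant multiple of $\sigma$, lets the rougher summand govern $q_\tau=\mu+r_\tau$, and writes $u_\tau=\psi_\tau\circ\rho$ with $\psi_\tau(r)=G_L(G_H^{-1}(\tau)/r)$ of class $\mathcal C^{\lceil\beta_\rho\rceil}$ on the range of $\rho$. It then simply cites ``the standard composition property of H\"older classes'' and does not track constants. Your Fa\`a di Bruno argument correctly supplies membership in $\mathcal H(\beta_\rho,\cdot)$, and your worry about the constant is justified; it is in fact more serious than you suggest. The convention in Appendix~\ref{section:holder} is not sup-norm-inclusive: $L$ bounds only the top-order H\"older seminorm, and lower derivatives are merely required to be bounded. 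Under that convention the factorized constant is false for $\beta_\rho>1$. On $\mathcal X=[0,1]$, the function $\rho(x)=1+x$ lies in $\mathcal H(2,C_\rho)$ for every $C_\rho>0$ with fixed range $[1,2]$. Yet $u_\tau'=\psi_\tau'(1+x)$ has Lipschitz constant $\sup_{[1,2]}|\psi_\tau''|$, which is positive (e.g.\ $G_H=G_L=\Phi$, $\tau=0.95$) and does not shrink as $C_\rho\to0$. The same convention undermines your Step~2 constant. With $\beta_\mu=2>\beta_\sigma=1/2$ and $\sigma$ constant, $\mu(x)=Ax$ lies in $\mathcal H(2,C_\mu)$ for every $A$ and every $C_\mu>0$, while $q_\tau\in\mathcal H(1/2,C_q)$ forces $C_q\ge|A|$. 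So $C_q$ must depend on $\sup|\nabla\mu|$, not only on $C_\mu$. Under the paper's own definition, your fallback reading, with constants depending also on $C_\rho$ and on derivative bounds, is the correct statement.

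Your first proposed fix does not deliver the stated form either. Absorbing $C_\rho^{\ell}$ into $C_\tau$ makes $C_\tau$ depend on $C_\rho$, which the statement excludes, and crude power counting gives a single factor $C_\rho$ only when $C_\rho\le1$. If you adopt a sup-norm-inclusive convention, the factorization can be proved, but via a Moser-type tame estimate rather than power counting. Combine Fa\`a di Bruno with the Gagliardo--Nirenberg bounds $\|D^\gamma\rho\|_\infty\lesssim\|\rho\|_\infty^{1-|\gamma|/\beta_\rho}\|\rho\|_{\mathcal C^{\beta_\rho}}^{|\gamma|/\beta_\rho}$ and their H\"older-seminorm analogues. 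Then every term in the top seminorm of $D^k u_\tau$, $|k|=m$ (with $m,\alpha$ as in Appendix~\ref{section:holder}), is homogeneous of degree exactly $(m+\alpha)/\beta_\rho=1$ in the full H\"older norm of $\rho$. This yields a bound $C(\sup\rho,\psi_\tau,\mathcal X)\,C_\rho$, and the lower-order terms are absorbed using $C_\rho\ge\inf\rho>0$. For $\beta_\rho\le1$ your one-line bound is already exact.
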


Proposition~\ref{prop:wrapper_smoother_extended} explains the basic rationale for using the wrapper. The raw HF quantile \(q_\tau(x)\) inherits variation from both the mean function \(\mu(x)\) and the scale function \(\sigma(x)\), while the mean-adjusted target \(r_\tau(x)\) still reflects the full local scale variation of the HF distribution. By contrast, the wrapped target \(u_\tau(x)\) depends only on the relative distortion \(\rho(x)\) between the two fidelities.

When the distortion function \(\rho(x)\) is smoother than the components driving \(q_\tau(x)\), the wrapper turns the original problem into a lower-complexity regression task. Equivalently, if
\(
\beta_\rho>\min\{\beta_\mu,\beta_\sigma\},
\)
then the wrapped target $u_\tau$ is smoother, and therefore easier to estimate, than the original HF quantile $q_{\tau}$.

\subsection{How this gain transfers to the wrapper estimator}

We next show that any estimation gain for the wrapped target \(u_\tau(x)\) carries over directly to the wrapper estimator \(\tilde q_\tau(x)\), up to the LF plug-in error. This transfer relies on a mild local regularity condition on the LF distribution.

\begin{theorem}
\label{thm:wrapper_nonparam}
Fix \(x\in\mathcal X\) and \(\tau\in(0,1)\). Under Assumptions~\ref{assume:qq} and \ref{ass:global_locality}, suppose \(\hat F^L(\cdot\mid x)\) is nondecreasing in \(y\) with probability tending to one. Let \(a_n,b_n\downarrow 0\) be deterministic sequences such that
\[
\hat u_\tau(x)-u_\tau(x)=O_p(a_n),
\qquad
\sup_{y\in \mathcal Y}\bigl|\hat F^L(y\mid x)-F^L(y\mid x)\bigr|=O_p(b_n).
\]
Then
\[
\tilde q_\tau(x)-q_\tau(x)=O_p(a_n+b_n).
\]
\end{theorem}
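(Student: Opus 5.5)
The plan is to split the error of the wrapper estimator into a level-estimation part and an LF plug-in part, working on the probability scale, where both errors are controlled directly. Write \(\hat u:=\hat u_\tau(x)\), \(u:=u_\tau(x)\), \(\tilde q:=\tilde q_\tau(x)\) and \(q:=q_\tau(x)\). By Proposition~\ref{prop:u} and Assumption~\ref{assume:qq}, \(q=Q^L(u\mid x)\) and \(u=F^L(q\mid x)\). I take the generalized inverse \((\hat F^L)^{-1}(v\mid x)=\inf\{y\in\mathcal Y:\hat F^L(y\mid x)\ge v\}\). On the event where \(\hat F^L(\cdot\mid x)\) is nondecreasing (probability tending to one), this inverse satisfies the switching relation \(\hat F^L(\tilde q-\mid x)\le \hat u\le \hat F^L(\tilde q\mid x)\).

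The first step shows that \(\tilde q\) lies in \(\mathcal Y\) and is well defined with probability tending to one. Since \(u\) lies in the interior of \(\mathcal U^L\) and \(F^L\) is continuous and strictly increasing on \(\mathcal Y\), there is room in \(\mathcal Y\) on both sides of \(q\). Hence for small \(\delta\) the values \(F^L(q\pm\delta\mid x)\) are strictly separated from \(u\). Because \(\hat u\to u\) and \(\sup_{\mathcal Y}|\hat F^L-F^L|\to0\) in probability, the infimum is attained at a point in \([q-\delta,q+\delta]\).

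The second step is the key inequality. Evaluate \(F^L\) at \(\tilde q\) and use the switching relation:
\[
|F^L(\tilde q\mid x)-\hat u|\le \sup_{y\in\mathcal Y}|\hat F^L(y\mid x)-F^L(y\mid x)| + J_n .
\]
Here \(J_n\) is the jump of \(\hat F^L\) at \(\tilde q\). The left limit \(\hat F^L(\tilde q-)\) is within the sup-norm error of \(F^L(\tilde q)\) because \(F^L\) is continuous. So the bound really reads \(F^L(\tilde q)-\hat u\in[-b_n',b_n']\) with \(b_n'=\sup|\hat F^L-F^L|\), using that both \(\hat F^L(\tilde q)\) and \(\hat F^L(\tilde q-)\) are within \(b_n'\) of \(F^L(\tilde q)\). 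I will handle the left limit by taking \(y\uparrow\tilde q\) inside the sup bound, which avoids any explicit jump term. Therefore
\[
|F^L(\tilde q\mid x)-F^L(q\mid x)|\le |\hat u-u| + b_n' = O_p(a_n+b_n).
\]

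The third step inverts this bound. By the mean value theorem and \(f^L(\cdot\mid x)\ge c_L\) on \(\mathcal Y\), with both \(\tilde q\) and \(q\) in \(\mathcal Y\),
\[
|\tilde q-q|\le c_L^{-1}|F^L(\tilde q\mid x)-F^L(q\mid x)|=O_p(a_n+b_n).
\]
The \(\mathcal C^2\) condition on \(Q^L\) is not needed here; Lipschitzness of \(Q^L\), which follows from the density lower bound, suffices.

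I expect the main obstacle to be the generalized-inverse bookkeeping. This means handling the discreteness and possible nonmonotonicity of \(\hat F^L\), which is why the argument is restricted to the high-probability monotone event, and ensuring \(\tilde q\in\mathcal Y\) so that the sup bound and the density lower bound apply. The localization in the first step is what makes this work.
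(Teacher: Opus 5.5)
Your proof is correct, but it takes a different route from the paper's.

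The paper works on the quantile scale. It splits $\tilde q_\tau(x)-q_\tau(x)$ into $\hat Q^L(\hat u_\tau(x)\mid x)-Q^L(\hat u_\tau(x)\mid x)$ plus $Q^L(\hat u_\tau(x)\mid x)-Q^L(u_\tau(x)\mid x)$. The second piece is handled by a Taylor expansion of $Q^L(\cdot\mid x)$ near $u_\tau(x)$. The first is handled by the generalized-inverse sandwich $Q^L(\hat u_\tau(x)-\varepsilon_n(x)\mid x)\le \hat Q^L(\hat u_\tau(x)\mid x)\le Q^L(\hat u_\tau(x)+\varepsilon_n(x)\mid x)$, together with the $1/c_L$-Lipschitz property of $Q^L$, after checking that $\hat u_\tau(x)\pm\varepsilon_n(x)$ stays inside $J_x=F^L(\mathcal Y\mid x)$.

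You instead map $\tilde q$ back through the true $F^L$. You use only the defining property of the infimum to get $|F^L(\tilde q\mid x)-\hat u|\le\sup_{\mathcal Y}|\hat F^L-F^L|$, add $|\hat u-u|$ by the triangle inequality, and invert once with the mean value theorem.

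The paper's decomposition isolates the level error and the plug-in error term by term, mirroring the $a_n+b_n$ structure used in Corollary~\ref{cor:wrapper_rate_transfer}. Yours is shorter and needs only $f^L\ge c_L$ rather than differentiability of $Q^L$. It also makes the localization $\tilde q\in\mathcal Y$ explicit rather than encoding it in the condition on $J_x$.

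One small tidy-up: the right half of your switching relation, $\hat u\le\hat F^L(\tilde q\mid x)$, presumes right-continuity of $\hat F^L$, which you do not need. Monotonicity and the definition of the infimum give $\hat F^L(y\mid x)\ge\hat u$ for all $y>\tilde q$. Your first step places $\tilde q$ in the interior of $\mathcal Y$, so letting $y\downarrow\tilde q$ and using continuity of $F^L$ gives $F^L(\tilde q\mid x)+b_n'\ge\hat u$. The left half follows the same way with $y\uparrow\tilde q$. This replaces the muddled jump-term discussion.

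Similarly, restricting the infimum to $\mathcal Y$ costs nothing. Your first-step monotonicity argument already shows $\hat F^L(y\mid x)<\hat u$ for every $y\le q-\delta$, including $y$ below $\mathcal Y$, so the restricted and unrestricted inverses agree with probability tending to one.
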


To make Theorem~\ref{thm:wrapper_nonparam} concrete, we specialize to the wrapper estimator \(\tilde q_\tau\) defined through the kernel-based CDF estimator \(\hat F^L\) in \eqref{equ:F_L_hat} and the kernel-based wrapped-target estimator \(\hat u_\tau\) in \eqref{equ:local_quantile}. We let  $F^U(u\mid x)$
denote the conditional CDF of the wrapped response \(U=F^L(Y^H\mid X)\).

We compare the wrapper estimator \(\tilde q_\tau(x)\) with the direct estimator \(\hat q_\tau^H(x)\), defined in the same way as in \eqref{equ:local_quantile} but with \(\hat U_i\) replaced by \(Y_i^H\). The following are standard assumptions in kernel smoothing theory \citep{stone1982optimal,chaudhuri1991nonparametric,fan1992design,fan1994robust,yu1998local}. 
Instead of imposing H\"older smoothness on the covariate density, we assume boundedness of the design density together with a kernel-mass condition ensuring that local neighborhoods have enough mass, including near the boundary of \(\mathcal X\).

\begin{assumption}
\label{ass:kernel_rate_setup}
In the setup of Theorem~\ref{thm:wrapper_nonparam}, for 
\(\beta_q,\beta_u,\beta_L\in(0,1]\),
\begin{enumerate}[label=(\roman*)]
    \item the direct HF target and its conditional CDF satisfy
    \[
    q_\tau \in \mathcal H(\beta_q,C_q),
    \qquad
    F^H(y\mid \cdot)\in \mathcal H(\beta_q,C_q^F)\quad \text{for all } y\in \mathcal Y;
    \]
    \item the wrapped target and the wrapped response distribution satisfy
    \[
    u_\tau \in \mathcal H(\beta_u,C_u),
    \qquad
    F^U(u\mid \cdot)\in \mathcal H(\beta_u,C_u^F)\quad \text{for all } u\in \mathcal U^L,
    \]
    and \(F^U(\cdot\mid x)\in \mathcal C^1(\mathcal U^L)\) with density \(f^U(u\mid x)\ge c_U>0\) for all \(u\in \mathcal U^L\);
    \item the LF conditional CDF satisfies
    \[
    F^L(y\mid \cdot)\in \mathcal H(\beta_L,C_L)\quad \text{for all } y\in \mathcal Y;
    \]
    \item the covariate distribution has a density \(f_X\) satisfying
    \[
    0<c_X\le f_X(x)\le C_X<\infty
    \quad \text{for all } x\in\mathcal X,
    \]
    and, for the kernel \(K\) used in part (v), there exist constants \(h_0>0\) and \(c_{\mathcal X}>0\) such that
    \[
    \inf_{0<h<h_0}\inf_{x\in\mathcal X}
    \int_{\mathcal X} h^{-p}K((t-x)/h)\,dt
    \ge c_{\mathcal X};
    \]
    \item the estimators \(\hat q_\tau^H\), \(\hat u_\tau\), and \(\hat F^L\) described above use a bounded, Lipschitz, compactly supported, nonnegative kernel \(K:\mathbb R^p\to\mathbb R\) satisfying
    \[
    \int K(u)\,du=1,
    \qquad
    \int K(u)^2\,du<\infty,
    \]
    with bandwidths \(h_q,h_u,h_L\) such that
    \[
    h_q,h_u,h_L\to 0,
    \qquad
    n_H h_q^p\to\infty,\quad
    n_H h_u^p\to\infty,\quad
    n_L h_L^p/\log n_L \to\infty.
    \]
\end{enumerate}
\end{assumption}
\vspace{5pt}

\begin{corollary}
\label{cor:wrapper_rate_transfer}
Under Assumptions~\ref{assume:qq}, \ref{ass:global_locality}, and \ref{ass:kernel_rate_setup}, using the bandwidths
\[
h_q^\star \asymp n_H^{-1/(2\beta_q+p)},\qquad
h_u^\star \asymp n_H^{-1/(2\beta_u+p)},\qquad
h_L^\star \asymp \left(\frac{\log n_L}{n_L}\right)^{1/(2\beta_L+p)},
\]
the kernel-based direct HF estimator satisfies
\[
\hat q_\tau^H(x)-q_\tau(x)
=
O_p\!\left(n_H^{-\beta_q/(2\beta_q+p)}\right),
\]
while the kernel-based wrapper estimator satisfies
\[
\tilde q_\tau(x)-q_\tau(x)
=
O_p\!\left(
n_H^{-\beta_u/(2\beta_u+p)}
+
\left(\frac{\log n_L}{n_L}\right)^{\beta_L/(2\beta_L+p)}
\right).
\]
\end{corollary}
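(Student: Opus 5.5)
The corollary follows from \Cref{thm:wrapper_nonparam} once we verify its two rate inputs for the kernel construction, together with a parallel pointwise rate for the direct estimator. I would prove three rates and then assemble them: (a) \(\hat q^H_\tau(x)-q_\tau(x)=O_p(n_H^{-\beta_q/(2\beta_q+p)})\); (b) \(\sup_{y\in\mathcal Y}|\hat F^L(y\mid x)-F^L(y\mid x)|=O_p(b_n)\) with \(b_n=(\log n_L/n_L)^{\beta_L/(2\beta_L+p)}\); (c) \(\hat u_\tau(x)-u_\tau(x)=O_p(a_n+b_n)\) with \(a_n=n_H^{-\beta_u/(2\beta_u+p)}\). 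Monotonicity of \(\hat F^L(\cdot\mid x)\) is automatic, since it is a weighted empirical CDF with nonnegative weights. \Cref{thm:wrapper_nonparam} then gives \(\tilde q_\tau(x)-q_\tau(x)=O_p(a_n+2b_n)\), which is the stated rate.

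For (a), I would use the standard local-constant quantile argument. Write \(\hat F^H(y\mid x)=\sum_i w_i(x)\one\{Y_i^H\le y\}\). Assumption~\ref{ass:kernel_rate_setup}(iv) ensures the denominator satisfies \(\sum_\ell K_{h_q}(X_\ell^H-x)\ge c>0\) with probability tending to one. The kernel is compactly supported and \(F^H(y\mid\cdot)\in\mathcal H(\beta_q,C^F_q)\), so the bias \(|\E[\hat F^H(y\mid x)\mid X^H_{1:n}]-F^H(y\mid x)|\) is \(O(h_q^{\beta_q})\). The conditional variance is \(O_p(1/(n_Hh_q^p))\).

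To pass from the CDF to the quantile, fix \(y_\pm=q_\tau(x)\pm M\delta_n\) with \(\delta_n=h_q^{\beta_q}+(n_Hh_q^p)^{-1/2}\). The density lower bound \(c_H\) gives \(F^H(y_\pm\mid x)-\tau\gtrless \pm c_HM\delta_n\). Hence, with probability at least \(1-\epsilon\) for \(M\) large, \(\hat F^H(y_-\mid x)<\tau\le\hat F^H(y_+\mid x)\), which yields \(|\hat q^H_\tau-q_\tau|\le M\delta_n\). Plugging in \(h_q^\star\) balances the two terms and gives (a).

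For (b), I would use the same bias bound, \(O(h_L^{\beta_L})\) uniformly in \(y\). The stochastic term is handled uniformly over \(y\in\mathcal Y\) by a Bernstein inequality on a grid of \(O(n_L)\) points plus monotonicity/bracketing, since the indicator class in \(y\) is VC. This gives \(O_p(\sqrt{\log n_L/(n_Lh_L^p)})\), and \(h_L^\star\) yields \(b_n\).

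Step (c) is the main obstacle, because \(\hat u_\tau\) is computed from the estimated pseudo-responses \(\hat U_i\), not the true \(U_i\). Moreover, \(\hat F^L\) must be evaluated at the design points \(X_i^H\) near \(x\), not at \(x\) itself. My plan is to strengthen (b) to a uniform bound over \(x'\) in a shrinking neighborhood of \(x\) (indeed over all of \(\mathcal X\)). This uses a uniform-in-\((x',y)\) bound for Nadaraya--Watson CDF estimators: Lipschitz kernel, covering of \(\mathcal X\times\mathcal Y\), union bound, with \(n_Lh_L^p/\log n_L\to\infty\). The result is \(\max_{i:\,\|X_i^H-x\|\lesssim h_u}|\hat U_i-U_i|\le \bar b_n=O_p(b_n)\).

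Independence of \(\mathcal D^L\) and \(\mathcal D^H\) then gives, conditional on \(\mathcal D^L\), the sandwich
\[
\tilde F^U(u-\bar b_n\mid x)\le\hat F^U(u\mid x)\le\tilde F^U(u+\bar b_n\mid x),
\]
where \(\tilde F^U\) is the oracle kernel CDF built from the true \(U_i\). Applying the argument of (a) to \(\tilde F^U\), using \(F^U(u\mid\cdot)\in\mathcal H(\beta_u,C^F_u)\) and \(f^U\ge c_U\), shows that the oracle quantile is within \(O_p(a_n)\) of \(u_\tau(x)\). The sandwich then shifts this by at most \(\bar b_n\), so \(\hat u_\tau(x)-u_\tau(x)=O_p(a_n+b_n)\). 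If the uniform-in-\(x'\) strengthening proves delicate near the boundary, I would fall back on the kernel-mass condition in Assumption~\ref{ass:kernel_rate_setup}(iv) to control the denominators uniformly.
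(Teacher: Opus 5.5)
Your proposal is correct and takes essentially the same route as the paper. Both arguments combine a pointwise local-constant quantile bound for the direct estimator, obtained by bracketing the CDF at $q_\tau(x)\pm M\delta_n$; a uniform-in-$(x,y)$ Nadaraya--Watson CDF bound for $\hat F^L$ via a covering and union-bound argument; the deterministic sandwich $\bar F^U(u-\Delta_n\mid x)\le \hat F^U(u\mid x)\le \bar F^U(u+\Delta_n\mid x)$ between generated and oracle pseudo-responses (the paper's \Cref{lem:u_generated}); and finally \Cref{thm:wrapper_nonparam}. The differences are only technical and do not affect the conclusion. You control the CDF error only at the two bracketing points and handle the supremum over $y$ with a grid and VC argument, whereas the paper uses symmetrization with Doob's maximal inequality. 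Also, the sandwich holds deterministically given $\Delta_n$, so you do not actually need the independence of $\mathcal D^L$ and $\mathcal D^H$ at that step.
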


The bandwidth choice \(h_L^\star\) for \(\hat F^L\) differs from those for \(\hat q_\tau^H\) and \(\hat u_\tau\) because \(\hat F^L\) is used to generate the pseudo-responses \(\hat U_i=\hat F^L(Y_i^H\mid X_i^H)\) for all HF observations. To control this first-stage error, we require a uniform convergence rate for \(\hat F^L(y\mid x)\) over both \(x\) and \(y\), which introduces the additional \(\log n_L\) factor in \(h_L^\star\). By contrast, the estimators \(\hat q_\tau^H\) and \(\hat u_\tau\) are analyzed pointwise at a fixed covariate value \(x\), so their bandwidths follow the usual local bias--variance tradeoff without this logarithmic term.

If the LF plug-in error is negligible relative to the error of \(\hat u_\tau(x)\), for example when \(n_L\) is sufficiently larger than \(n_H\), then
\[
\tilde q_\tau(x)-q_\tau(x)
=
O_p\!\left(n_H^{-\beta_u/(2\beta_u+p)}\right).
\]
If \(\beta_u>\beta_q\), then the optimal bandwidth \(h_u^\star\) is asymptotically larger than \(h_q^\star\). Under the structural model \eqref{eq:two_model}, Proposition~\ref{prop:wrapper_smoother_extended} gives \(\beta_u=\beta_\rho\), so this condition becomes \(\beta_\rho>\beta_q=\min\{\beta_\mu,\beta_\sigma\}.\)
The larger bandwidth $h_u^\star$ yields more diffuse kernel weights in \eqref{equ:local_quantile}, with effective local sample size
\[
n_{\mathrm{eff}}(x;h_u^\star):=
\left[\sum_{i=1}^{n_H} w_i^U(x;h_u^\star)^2\right]^{-1}
\asymp n_H (h_u^\star)^p.
\]
Hence the wrapper estimator uses \(n_{\mathrm{eff}}(x;h_u^\star)\) effective observations in each local fit, compared with \(n_{\mathrm{eff}}(x;h_q^\star)\) for the direct estimator, and therefore has a smaller leading variance term. In this sense, the wrapper mitigates the high-variance difficulty of direct quantile regression highlighted around \eqref{equ:fxy}.

\textbf{Other models.} Our formal rate analysis is derived for the local constant kernel implementation. Nevertheless, the same high-level intuition extends to other estimators that can be viewed through adaptive local weights. For example, quantile regression forests \citep{meinshausen2006quantile} produce predictions through weighted empirical distributions, where the weights are induced by the forest partition. When the wrapped target is smoother than the original HF quantile, one expects that accurate prediction can be achieved with less aggressive partitioning of the feature space, for example through larger terminal nodes or shallower trees. This leads to more diffuse weights and hence lower variance, which is the forest analogue of using a larger bandwidth in kernel smoothing. We emphasize, however, that our formal theory is stated only for the kernel estimator; extending these guarantees to adaptive machine learning estimators is an important direction for future work.

\section{Wrapper correction via a proximal framework}
\label{sec:onestep}

We now consider the complementary regime in which the wrapper loses its advantage. This can happen when the LF transformation does not simplify the estimation problem enough, so that \(u_\tau(x)\) is no easier to estimate than \(q_\tau(x)\), or when the LF plug-in error is not negligible. In either case, relying on the wrapper alone may be suboptimal.
Furthermore, we show how to correct the wrapper estimator even when Assumption \ref{assume:qq} does not hold.

\subsection{Correction strategies}\label{sect:corrections}

We structure our correction strategies using a proximal optimization framework \citep{rockafellar1976monotone,parikh2014proximal}. In its original form, a proximal method stabilizes optimization by penalizing large departures from a current iterate. We adopt this principle to design corrections: we update toward the high-fidelity quantile equation while anchoring to the stable wrapper estimator. Specifically, we consider corrected estimators of the form
\begin{equation}
\label{eq:prox_general}
\tilde q_\tau^{+}(x)
\in
\argmin_{q\in\R}
\Bigl\{
\mathcal S_x\bigl(q,\tilde q_\tau(x)\bigr)
+
\lambda\,\mathcal E_x(q)
\Bigr\},
\end{equation}
where \(\mathcal S_x\) is a stability term and \(\mathcal E_x\) is a fidelity term based on the HF information. The first term keeps the updated estimator close to the stable, low-variance wrapper estimator \(\tilde q_\tau(x)\), while the second term encourages better agreement with the HF quantile equation. 

This proximal framework includes several natural correction strategies.

\textbf{Mixed estimator.} We first consider setting
\[
\mathcal S_x\bigl(q,\tilde q_\tau(x)\bigr)=(q-\tilde q_\tau(x))^2,
\qquad
\mathcal E_x(q)=(q-\hat q_\tau^H(x))^2.
\]
Then the minimizer of \eqref{eq:prox_general} takes the form
\[
\tilde q_\tau^{\,\mathrm{mix}}(x)
=
\omega\hat q_\tau^H(x) + (1-\omega)\tilde q_\tau(x),
\]
where \(\omega:=\lambda/(1+\lambda)\in[0,1)\). Thus, \(\tilde q_\tau^{\,\mathrm{mix}}(x)\) is exactly a convex combination of the wrapper estimator and the direct HF estimator.

\textbf{Projection estimator.} Using the HF quantile equation in \(\mathcal E_x(q)\) leads to
\begin{equation}
\label{eq:projection_estimator}
\hat q_{\tau,\lambda}^{\,\mathrm{proj}}(x)
\in
\argmin_{q\in\R}
\Bigl\{
(q-\tilde q_\tau(x))^2
+
\lambda\bigl(\hat F^{H}(q\mid x)-\tau\bigr)^2
\Bigr\}.
\end{equation}
This may be viewed as a soft projection of the wrapper estimator toward the set of values satisfying the HF quantile equation. Unlike convex blending, \(\hat q_{\tau,\lambda}^{\,\mathrm{proj}}(x)\) does not interpolate directly between two estimators. Instead, it balances two objectives: staying close to the stable wrapper estimator and improving agreement with the HF quantile equation.

\textbf{One-step estimator.}
To obtain a simple closed-form correction, we linearize the HF quantile equation around the wrapper estimator \(q=\tilde q_\tau(x)\):
\[
\hat F^{H}(q\mid x)-\tau
\approx
\hat F^{H}(\tilde q_\tau(x)\mid x)-\tau
+
\hat f^{H}(\tilde q_\tau(x)\mid x)\bigl(q-\tilde q_\tau(x)\bigr).
\]
Substituting this approximation into \eqref{eq:projection_estimator} and minimizing yields
\[
\tilde q_\tau^{+}(x)
=
\tilde q_\tau(x)
-
\gamma_\lambda(x)\,
\frac{\hat F^{H}(\tilde q_\tau(x)\mid x)-\tau}
{\hat f^{H}(\tilde q_\tau(x)\mid x)},
\quad
\gamma_\lambda(x)
=
\frac{\lambda\,\hat f^{H}(\tilde q_\tau(x)\mid x)^2}
{1+\lambda\,\hat f^{H}(\tilde q_\tau(x)\mid x)^2}.
\]
Thus, the one-step correction may be viewed as the local linearization of the projection rule \eqref{eq:projection_estimator}. Since \(\gamma_\lambda(x)\in[0,1)\), it is natural to work directly with a step-size parameter \(\gamma\in[0,1]\) and define the one-step estimator as
\begin{equation}\label{equ:q_tau_gamma}
\tilde q_{\tau,\gamma}(x)
=
\tilde q_\tau(x)
-
\gamma\,
\frac{\hat F^{H}(\tilde q_\tau(x)\mid x)-\tau}
{\hat f^{H}(\tilde q_\tau(x)\mid x)}.
\end{equation}
Finally, we turn to an asymptotic analysis of the one-step estimator.

\begin{assumption}
\label{ass:H_quantile_reg}
For each fixed \(x\), the map \(y\mapsto F^{H}(y\mid x)\) is twice continuously differentiable in \(y\), with conditional density
\[
f^{H}(y\mid x):=\partial_y F^{H}(y\mid x),
\qquad
f^{H}(q_\tau(x)\mid x)\ge c_H>0.
\]
\end{assumption}
Assumption~\ref{ass:H_quantile_reg} is a local regularity condition for the HF conditional distribution. 
It requires the HF CDF to be twice differentiable near the target quantile and the conditional density at the target quantile to be bounded away from zero. 
This ensures that the HF quantile is locally identifiable and that Taylor expansions of the HF quantile equation are stable.

\begin{assumption}
\label{ass:pilot_local}
For each fixed \(x\in\mathcal X\) and \(\tau\in(0,1)\),
\begin{align*}
& e_\tau(x) := \tilde q_\tau(x) - q_{\tau}(x)=o_p(1),\\[5pt]
& \eta_\tau(x):=\hat f^{H}(\tilde q_\tau(x)\mid x)-f^{H}(\tilde q_\tau(x)\mid x) =o_p(1),\\[5pt]
& \nu_\tau(x):=\hat F^{H}(\tilde q_\tau(x)\mid x)-F^{H}(\tilde q_\tau(x)\mid x) = o_p(1).
\end{align*}
\end{assumption}
Assumption~\ref{ass:pilot_local} collects the local consistency conditions needed for the one-step expansion. 
The first condition, \(e_\tau(x)=o_p(1)\), requires the wrapper estimator to lie in a shrinking neighborhood of the target HF quantile. 
The second condition, \(\eta_\tau(x)=o_p(1)\), requires the estimated HF density to be locally consistent at the wrapper estimate, so that the reciprocal 
\(\hat f^H(\tilde q_\tau(x)\mid x)^{-1}\) is well behaved. 
The third condition, \(\nu_\tau(x)=o_p(1)\), requires the estimated HF CDF to be locally consistent at the same point. 
Under these conditions, we can expand \(F^H(\cdot\mid x)\) and \(f^H(\cdot\mid x)\) around \(q_\tau(x)\), and then apply a reciprocal expansion for 
\(\hat f^H(\tilde q_\tau(x)\mid x)^{-1}\). 
This yields the following first-order characterization of the corrected estimator.

\begin{theorem}
\label{thm:multiple_bias_nonparam}
Under Assumptions~\ref{ass:H_quantile_reg} and \ref{ass:pilot_local}, for each fixed \(x\) and \(\tau\),
\[
\tilde q_{\tau,\gamma}(x)-q_\tau(x)
=
(1-\gamma)e_\tau(x)
-\gamma\frac{\nu_\tau(x)}{f^{H}(q_\tau(x)\mid x)}
+R_{\tau}^{(2)}(x),
\]
where the remainder term 
\[
R_{\tau}^{(2)}(x)
=
O_p\!\bigl(
e_\tau(x)^2
+e_\tau(x)\eta_\tau(x)
+e_\tau(x)\nu_\tau(x)
+\nu_\tau(x)\eta_\tau(x)
\bigr).
\]
\end{theorem}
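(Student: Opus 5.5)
We prove the expansion by a direct Taylor expansion of the numerator and a reciprocal expansion of the denominator in the correction term of \eqref{equ:q_tau_gamma}. Fix \(x\) and \(\tau\), and write \(q=q_\tau(x)\), \(e=e_\tau(x)\), \(\eta=\eta_\tau(x)\), \(\nu=\nu_\tau(x)\), \(f_0=f^H(q\mid x)\) and \(f_0'=\partial_y f^H(q\mid x)\). Subtracting \(q\) from both sides of \eqref{equ:q_tau_gamma} gives
\[
\tilde q_{\tau,\gamma}(x)-q = e-\gamma\,\frac{N}{D},\qquad N:=\hat F^H(\tilde q_\tau(x)\mid x)-\tau,\quad D:=\hat f^H(\tilde q_\tau(x)\mid x).
\]
The aim is to show \(N/D = e+\nu/f_0+O_p(e^2+e\eta+e\nu+\nu\eta)\). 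Multiplying by \(-\gamma\) with \(\gamma\in[0,1]\) bounded and adding \(e\) then gives the claimed form.

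\textbf{Numerator.} Write \(N=\nu+F^H(\tilde q_\tau(x)\mid x)-F^H(q\mid x)\), using \(F^H(q\mid x)=\tau\). Assumption~\ref{ass:H_quantile_reg} gives \(F^H\in\mathcal C^2\) in \(y\). Since \(e=o_p(1)\), a second-order Taylor expansion with Lagrange remainder yields
\[
F^H(\tilde q_\tau(x)\mid x)-\tau=f_0e+O_p(e^2).
\]
Here the second derivative is evaluated at a point between \(q\) and \(\tilde q_\tau(x)\). It is bounded with probability tending to one by continuity on a compact neighborhood of \(q\). Hence \(N=f_0e+\nu+O_p(e^2)\).

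\textbf{Denominator.} Write \(D=f^H(\tilde q_\tau(x)\mid x)+\eta=f_0+\delta\) with \(\delta:=\eta+O_p(e)\). This uses the mean value theorem on \(f^H\), which is \(\mathcal C^1\) in \(y\). Because \(f_0\ge c_H>0\) and \(\delta=o_p(1)\), the event \(D\ge c_H/2\) has probability tending to one. On that event the exact identity \(1/D=1/f_0-\delta/(f_0 D)\) gives \(1/D=f_0^{-1}+O_p(|\eta|+|e|)\), with constants depending only on \(c_H\).

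\textbf{Combining.} Multiply the two expansions:
\[
\frac{N}{D}=\Bigl(f_0e+\nu+O_p(e^2)\Bigr)\Bigl(f_0^{-1}+O_p(|\eta|+|e|)\Bigr)
= e+\frac{\nu}{f_0}+O_p\!\bigl(e^2+e\eta+\nu\eta+\nu e\bigr).
\]
The cross terms are accounted for as follows. The product \(f_0e\cdot O_p(|\eta|+|e|)\) contributes \(e\eta\) and \(e^2\). The product \(\nu\cdot O_p(|\eta|+|e|)\) contributes \(\nu\eta\) and \(\nu e\). The term \(O_p(e^2)\cdot O_p(1)\) is absorbed into \(e^2\). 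This is exactly the remainder \(R^{(2)}_\tau(x)\) stated in the theorem.

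\textbf{Main obstacle.} The delicate step is the reciprocal expansion of \(D\). It needs \(D\) bounded away from zero with probability tending to one, which is why we work on that event and then absorb its complement into the \(O_p\) statement. It also needs a uniform bound on \(\partial_y f^H\) in a neighborhood of \(q\); this follows from the twice continuous differentiability of \(F^H(\cdot\mid x)\) in Assumption~\ref{ass:H_quantile_reg} for fixed \(x\).
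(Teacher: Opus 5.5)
Your proposal is correct and follows essentially the same route as the paper's proof. You Taylor-expand the numerator to get \(f^H(q_\tau(x)\mid x)\,e+\nu+O_p(e^2)\), expand the reciprocal of \(\hat f^H(\tilde q_\tau(x)\mid x)=f^H(q_\tau(x)\mid x)+\eta+O_p(|e|)\) using that it is bounded away from zero with probability tending to one, and multiply out the cross terms. Working on the event \(\{D\ge c_H/2\}\) with the exact identity \(1/D=1/f_0-\delta/(f_0D)\) simply spells out the step the paper states as ``the reciprocal expansion then yields.''
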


Theorem~\ref{thm:multiple_bias_nonparam} shows that the one-step estimator has two leading terms: the remaining wrapper contribution \((1-\gamma)e_\tau(x)\), and the local HF CDF estimation error \(\nu_\tau(x)/f^{H}(q_\tau(x)\mid x)\). The remaining terms are higher order. In particular, the local HF density estimation error \(\eta_\tau(x)\) appears only through its products with either the wrapper error \(e_\tau(x)\) or the local HF CDF error \(\nu_\tau(x)\), while the wrapper bias contributes the second-order term \(e_\tau(x)^2\). Hence, when \(\nu_\tau(x)\) and \(\eta_\tau(x)\) are of the same stochastic order and \(e_\tau(x)=o_p(1)\), accurate estimation of the HF density $f^H(\cdot\mid x)$ is \emph{less critical} than accurate estimation of the HF CDF $F^{H}(\cdot\mid x)$, because the density error enters only through interaction terms.
In the special case \(\gamma=1\), we have
\[
\tilde q_{\tau,1}(x)-q_\tau(x)
=
-\frac{\nu_\tau(x)}{f^{H}(q_\tau(x)\mid x)}
+ R_{\tau}^{(2)}(x).
\]
Thus, when the wrapper error remains non-negligible but the HF CDF can be estimated accurately, the one-step correction can substantially improve upon the wrapper \emph{without directly estimating the HF quantile function \(q_\tau(\cdot)\)}.

\subsection{One-step correction via cross-fitting}
\label{sect:one_step_cross}
We implement the one-step correction using \(K\)-fold cross-fitting on the HF sample. Let
\(
\mathcal I_1,\dots,\mathcal I_K
\)
be a partition of the HF indices \(\{1,\dots,n_H\}\). For each fold \(k\), we fit all HF-dependent quantities using only the training sample
\[
\mathcal D^{H,(-k)}:=\{(X_i^H,Y_i^H): i\notin \mathcal I_k\},
\]
while using the full LF sample \(\mathcal D^L\) to construct the LF CDF estimator \(\hat F^L\). This yields a cross-fitted wrapper estimator \(\tilde q_\tau^{(-k)}(\cdot)\), together with cross-fitted HF distribution estimators \(\hat F^{H,(-k)}(\cdot\mid \cdot)\) and \(\hat f^{H,(-k)}(\cdot\mid \cdot)\).

For each \(i\in \mathcal I_k\), we then form the cross-fitted one-step prediction
\[
\tilde q_{\tau,\gamma}^{(-k)}(X_i^H)
=
\tilde q_\tau^{(-k)}(X_i^H)
-
\gamma\,
\frac{\hat F^{H,(-k)}(\tilde q_\tau^{(-k)}(X_i^H)\mid X_i^H)-\tau}
{\hat f^{H,(-k)}(\tilde q_\tau^{(-k)}(X_i^H)\mid X_i^H)}.
\]
We choose the step size \(\gamma\) by minimizing the cross-fitted pinball loss,
\[
\hat\gamma
\in
\argmin_{\gamma\in\Gamma}
\sum_{k=1}^K \sum_{i\in \mathcal I_k}
\rho_\tau\!\bigl(Y_i^H-\tilde q_{\tau,\gamma}^{(-k)}(X_i^H)\bigr),
\]
where \(
\rho_\tau(z)=z\bigl(\tau-\one_{\{z<0\}}\bigr)
\) and
\(\Gamma\subset[0,1]\) is a user-specified grid. 
If \(\tilde q_\tau^{(-k)}(X_i^H)\) is already a good quantile estimate on the held-out HF data, then moving in the correction direction does not improve the pinball loss, and \(\hat\gamma\) is chosen close to \(0\). By contrast, if the wrapper has a systematic bias, then a larger correction reduces the held-out pinball loss, and \(\hat\gamma\) is chosen closer to \(1\). Intermediate values of \(\hat\gamma\) arise when the correction is useful but the full Newton step is too aggressive, so that a damped update gives the best out-of-sample quantile fit.

After selecting \(\hat\gamma\), we refit \(\tilde q_\tau(\cdot)\), \(\hat F^H(\cdot\mid\cdot)\), and \(\hat f^H(\cdot\mid\cdot)\) on the full HF sample and define the final estimator as
\(
\tilde q_{\tau,\hat\gamma}(x)
\)
in \eqref{equ:q_tau_gamma}. Cross-fitting helps prevent overfitting because both the wrapper estimator and the local HF correction are constructed from the same scarce HF sample. It avoids tuning the correction on the same observations used to construct the wrapper estimator.

\subsection{Multi-step correction beyond the local quantile link}
\label{sect:ms}

The wrapper estimator is motivated by Assumption~\ref{assume:qq}, which requires the target HF quantile function to be representable as an LF quantile at some level. When this assumption fails, the wrapper may retain a nonvanishing approximation error for any sample size, so a single correction step need not be sufficient for consistency. In that case, it is natural to iterate the one-step update and directly target the HF quantile equation.

Starting from $q_\tau^{(0)}(x):=\tilde q_\tau(x),$
define the iterates
\[
q_\tau^{(m+1)}(x)
=
q_\tau^{(m)}(x)
-
\frac{\hat F^H(q_\tau^{(m)}(x)\mid x)-\tau}
{\hat f^H(q_\tau^{(m)}(x)\mid x)},
\qquad m=0,1,2,\dots.
\]
This is Newton's method applied to the empirical HF estimating equation
\[
\Psi_n(q;x):=\hat F^H(q\mid x)-\tau.
\]

\begin{assumption}
\label{ass:multistep}
Fix \(x\in\mathcal X\) and \(\tau\in(0,1)\). With probability tending to one, there exist an open interval \(I_x\subset\mathbb R\) containing a root \(\hat q_\tau^H(x)\) of \(\Psi_n(\cdot;x)\), and constants \(c,L,\delta_0>0\), such that:
\begin{enumerate}[label=(\roman*)]
    \item \(\Psi_n(\cdot;x)\in C^2(I_x)\), and \(\hat f^H(q\mid x)=\partial_q \hat F^H(q\mid x)=\Psi_n'(q;x)\) for all \(q\in I_x\);
    \item \(0<c\le \hat f^H(q\mid x)\) and \(|\partial_q \hat f^H(q\mid x)|\le L\) for all \(q\in I_x\);
    \item \([\hat q_\tau^H(x)-\delta_0,\ \hat q_\tau^H(x)+\delta_0]\subset I_x\);
    \item \(|q_\tau^{(0)}(x)-\hat q_\tau^H(x)|\le \delta_0\) and \(\delta_0L/(2c)<1\).
\end{enumerate}
\end{assumption}

Assumption~\ref{ass:multistep} is mild when \(\hat F^H(\cdot\mid x)\) is a smoothed estimator and \(\hat f^H(\cdot\mid x)\) is taken as its derivative. Indeed, the lower bound in (ii) implies that \(\hat F^H(\cdot\mid x)\) is strictly increasing there, so the equation \(\hat F^H(q\mid x)=\tau\) can have at most one solution on \(I_x\). Thus, uniqueness of \(\hat q_\tau^H(x)\) follows automatically from part (ii), together with existence of a root in \(I_x\). This assumption would be more restrictive only if \(\hat F^H\) were taken to be a step-function empirical CDF, in which case differentiability fails and the solution set need not be unique.

\begin{proposition}
\label{prop:iterated_onestep_main}
Under Assumption~\ref{ass:multistep}, with probability tending to one, the iterates \(q_\tau^{(m)}(x)\) remain in \(I_x\) and converge to \(\hat q_\tau^H(x)\) as \(m\to\infty\).
\end{proposition}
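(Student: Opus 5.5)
This is a standard local Newton convergence argument, carried out on the event of probability tending to one on which Assumption~\ref{ass:multistep} holds. Fix that event and write $q^\ast:=\hat q_\tau^H(x)$, so that $\Psi_n(q^\ast;x)=0$. Let $J:=[q^\ast-\delta_0,q^\ast+\delta_0]\subset I_x$ by part (iii). I will prove by induction that every iterate lies in $J$ and that the errors $e_m:=q_\tau^{(m)}(x)-q^\ast$ contract. Part (iv) gives the base case $|e_0|\le\delta_0$.

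\emph{Error recursion.} Suppose $q_\tau^{(m)}(x)\in J$. Since $J$ is convex and $\Psi_n\in C^2(I_x)$ by part (i), I can Taylor expand $\Psi_n$ about $q^{(m)}:=q_\tau^{(m)}(x)$ and evaluate at $q^\ast$. With the Lagrange remainder and $\Psi_n'=\hat f^H$, this gives
\[
0=\Psi_n(q^\ast;x)=\Psi_n(q^{(m)};x)-\hat f^H(q^{(m)}\mid x)\,e_m+\tfrac12\,\partial_q\hat f^H(\xi_m\mid x)\,e_m^2
\]
for some $\xi_m$ between $q^{(m)}$ and $q^\ast$, so $\xi_m\in J$. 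By part (ii), $\hat f^H(q^{(m)}\mid x)\ge c>0$, so I can divide by it. Substituting into the Newton update then yields
\[
e_{m+1}=e_m-\frac{\Psi_n(q^{(m)};x)}{\hat f^H(q^{(m)}\mid x)}=\frac{\partial_q\hat f^H(\xi_m\mid x)}{2\,\hat f^H(q^{(m)}\mid x)}\,e_m^2 .
\]
Using part (ii) again, this gives $|e_{m+1}|\le \frac{L}{2c}|e_m|^2$.

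\emph{Contraction.} Set $\kappa:=\delta_0L/(2c)<1$, which holds by part (iv). If $|e_m|\le\delta_0$, then $|e_{m+1}|\le \frac{L}{2c}\delta_0|e_m|=\kappa|e_m|\le\delta_0$. So $q^{(m+1)}\in J\subset I_x$, which closes the induction. It also gives $|e_m|\le\kappa^m|e_0|\to0$, so $q_\tau^{(m)}(x)\to\hat q_\tau^H(x)$. In fact convergence is quadratic: iterating the bound gives $|e_m|\le \frac{2c}{L}\kappa^{2^m}$. Since all of this holds on the high-probability event, the statement follows with probability tending to one.

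The main obstacle is keeping the iterates inside the region where the bounds in part (ii) hold. If I simply assumed $q^{(m)}\in I_x$, the Taylor step would be justified only at that single iterate, and the next iterate might leave $I_x$. The induction avoids this by controlling membership in the smaller symmetric interval $J$ rather than in $I_x$. Convexity of $J$ keeps $\xi_m$ in $J$, and condition $\kappa<1$ guarantees that the next iterate stays in $J$. A minor point is that $q^\ast$ is the unique root in $I_x$, as noted after Assumption~\ref{ass:multistep}. The limit is therefore well defined, although the argument above does not actually need uniqueness.
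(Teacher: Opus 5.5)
Your proof is correct and is essentially the paper's argument. You Taylor-expand $\Psi_n$ with Lagrange remainder about the current iterate to get $|e_{m+1}|\le \frac{L}{2c}|e_m|^2$, use induction with $\delta_0L/(2c)<1$ to keep the iterates in $[\hat q_\tau^H(x)-\delta_0,\hat q_\tau^H(x)+\delta_0]\subset I_x$, and obtain the same quadratic bound $\frac{2c}{L}\bigl(\delta_0L/(2c)\bigr)^{2^m}$; your non-strict contraction $|e_{m+1}|\le\kappa|e_m|$ is in fact slightly cleaner than the paper's strict inequality, which fails when $e_m=0$.
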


If the estimated HF CDF \(\hat F^H(\cdot\mid x)\) is consistent on \(I_x\) and the true HF density \(f^H(\cdot\mid x)\) remains locally positive, then the empirical root \(\hat q_\tau^H(x)\) is itself consistent for the true quantile \(q_\tau(x)\). Since the iterates converge to this empirical root, repeated one-step correction can therefore target the HF quantile. Thus, even when Assumption~\ref{assume:qq} does not hold exactly, iterative correction can still recover consistency provided the empirical HF root is consistent and the initialization lies in the Newton basin described above.

Multi-step correction follows a different update trajectory from the mixed estimator discussed in \Cref{sect:corrections}.
The mixed estimator interpolates between two fixed estimators and may remain inaccurate when both endpoints are inaccurate.
By contrast, multi-step correction successively updates the estimate using the HF quantile equation and can move toward a more accurate HF quantile estimate. 
In implementation, multi-step correction is cross-fitted in the same way as the one-step correction in \Cref{sect:one_step_cross}: for each fold, the wrapper and HF correction CDF are fit without using the held-out HF observations, and the iterated updates are evaluated on those held-out observations.
In practice, the number of correction steps can be selected by the same cross-fitted validation strategy used in the previous subsection.

\section{Experiments}
\label{sec:experiments}

This section evaluates the wrapper estimator and its corrected version on synthetic datasets and five scientific datasets. The code for reproducing the experiments is available at
\href{https://github.com/Yixiang-Kenzo/Multi-Fidelity-Quantile-Regression}{https://github.com/Yixiang-Kenzo/MFQR}.
We first describe competing methods and performance metrics below.

\subsection{Methods \& Metrics}
We implement all HF and LF conditional mean, density, CDF, and quantile estimators using random forests with the same hyperparameters across methods.
The random-forest density, CDF, and quantile construction is reviewed in Appendix~\ref{app:implementation}.
In \Cref{app:gp_results}, we repeat the experiments with Gaussian process (GP) regression models to verify that the results are not specific to the random-forest implementation.

Throughout most experiments, we compare five main methods spanning no transfer, mean transfer, feature transfer, and wrapper-based transfer:
\begin{itemize}[leftmargin=*]
    \item \textbf{HF-Only}: uses only the HF data to estimate the HF conditional quantiles, corresponding to the direct estimator \(\hat q_\tau^H(x)\) discussed above.

    \item \textbf{Tr-Mean}: fits a conditional mean model \(\hat\mu_L\) on the LF data \(\mathcal D^{L}\), and then fits a linear function \(\hat a+\hat b\hat\mu_L\) on the HF data \(\mathcal D^{H}\) to estimate the HF mean model \(\hat\mu_H\). 
    This implements the residual learning idea in the classical autoregressive multi-fidelity framework \citep{kennedy2000predicting,perdikaris2017nonlinear}. 
    We compute the HF residuals \(Y_i^H-\hat\mu_H(X_i^H)\), estimate their conditional CDF using \(\mathcal D^{H}\), invert the fitted CDF to obtain residual quantiles, and then add back the mean prediction \(\hat\mu_H\).

    \item \textbf{Tr-Augment}: augments each HF covariate vector \(X_i^H\) with the LF mean prediction \(\hat\mu_L(X_i^H)\). 
    The HF conditional CDF is then fitted on these augmented covariates. 
    This follows the feature augmentation strategy commonly used in multi-fidelity learning and transfer learning.

    \item \textbf{MFQR}: uses the wrapper estimator \(\tilde q_\tau(x)\) implemented in Algorithm~\ref{alg:wrapper}. 

    \item \textbf{MFQR+OS}: uses the one-step corrected wrapper estimator \(\tilde q_{\tau,\hat\gamma}(x)\) in \eqref{equ:q_tau_gamma}, where the step size \(\gamma\) is chosen by cross-fitting as described in \Cref{sect:one_step_cross}.
\end{itemize}
In the misinformative LF data regime, we additionally evaluate MFQR+MS, which applies the multi-step correction described in \Cref{sect:ms}.
We assess all methods by their estimates of the conditional quantiles at levels \(\tau=0.05\) and \(\tau=0.95\). 
In the synthetic experiments, where the true HF conditional quantiles are known, we report the average empirical mean squared error (MSE) of the two estimated quantiles on the test set.

For a given method, let \(\hat q_{0.05}\) and \(\hat q_{0.95}\) denote its fitted lower and upper HF conditional quantile functions. 
For both synthetic and scientific datasets, we apply split conformalized quantile regression (CQR) \citep{romano2019conformalized} to turn these fitted quantiles into marginally valid prediction intervals. 
On a held-out HF calibration set \(\{(X_i^{\mathrm{cal}},Y_i^{H,\mathrm{cal}})\}_{i=1}^{n_{\mathrm{cal}}}\), we compute 
\[
s_i
=
\max\left\{
\hat q_{0.05}(X_i^{\mathrm{cal}})-Y_i^{H,\mathrm{cal}},
\;
Y_i^{H,\mathrm{cal}}-\hat q_{0.95}(X_i^{\mathrm{cal}})
\right\}.
\]
For coverage level \(1-\alpha=0.9\), let \(\hat q_{\mathrm{cal}}\) be the 
\(\lceil (1-\alpha)(n_{\mathrm{cal}}+1)\rceil\)-th smallest $s_i$. 
The final interval is
\(
\left[
\hat q_{0.05}(x)-\hat q_{\mathrm{cal}},
\;
\hat q_{0.95}(x)+\hat q_{\mathrm{cal}}
\right].
\)
Since all methods are calibrated to the same level, we compare their average interval widths.

\subsection{Synthetic data}
\label{sec:synthetic}

Following the structural model~\eqref{eq:two_model}, we simulate three data regimes. 
The informative regime has similar LF and HF distributional shapes.
The non-informative regime introduces LF heteroscedasticity absent from the HF data. 
The misinformative regime has a nonlinear LF--HF mean discrepancy, so direct HF estimation can outperform the uncorrected wrapper.

\subsubsection{Informative LF regime}
\label{sec:informative}

In the informative regime, the HF and LF data share the same covariate distribution, $X\sim \mathrm{Uniform}(1.5,4.5),$ the same mean function
$\mu(x)=0.5x^2-2x+1$,
and baseline scale $\sigma(x)=0.1+0.35\sin^2(3\pi x).$
We generate
\[
Y^H=\mu(x)+\sigma(x)W^H,\qquad
Y^L=\mu(x)+1.7\,\sigma(x)W^L,
\]
where \(W^H\sim\mathcal N(0,1)\) and \(W^L\) is a standardized Student-\(t(10)\) random variable. Thus, the scale distortion in \eqref{eq:two_model} is constant, \(\rho(x)=1.7\). 
As shown in \Cref{fig:informative_data}, the simulated LF and HF conditional distributions have similar periodic shapes induced by the scale function \(\sigma(x)\). 
But because of the larger scale distortion and heavier-tailed LF noise, the LF responses spread more widely around the slowly increasing mean curve than the HF responses.
In total, we generate \(N=5000\) covariates \(X_i\) and split them into four disjoint subsets: \(n_L=1250\) LF training observations, \(n_H=125\) HF training observations, \(n_{\mathrm{cal}}=250\) calibration observations, and \(n_{\mathrm{test}}=3375\) test observations.

Table~\ref{tab:informative} reports the quantile MSE and conformal interval results on the test set. 
HF-Only and Tr-Augment have larger MSEs and wider intervals than the other methods. 
Tr-Mean substantially reduces both MSE and interval width by transferring the LF mean structure. 
MFQR and MFQR~+~OS improve further by transferring information through the LF conditional distribution.

\begin{figure}[t]
\vspace{-10pt}
    \centering
    \includegraphics[width=0.95\textwidth]{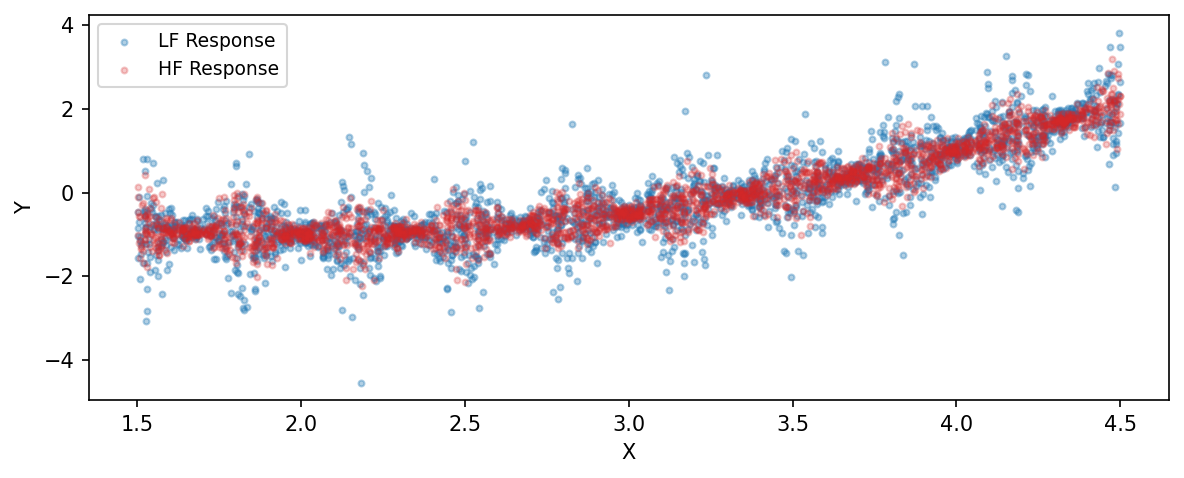}
 \caption{Illustration of LF and HF data in the informative LF regime;  the sample sizes for the experiments are specified in the text.}
    \label{fig:informative_data}
\end{figure}

\begin{table}[t]
\centering
\caption{Performance in the informative LF regime: quantile MSE, empirical coverage, and average interval width on the test set.}
\label{tab:informative}
\small
\begin{tabular}{l@{\hskip 34pt}c@{\hskip 34pt}c@{\hskip 34pt}c}
\toprule
Method & $\mathrm{MSE}$  & Coverage \ (\%) & Width \\
\midrule
    HF-Only    & 0.186  & 91.7 & 1.401\\
Tr-Mean    & 0.067  & 92.6 & 1.260\\
Tr-Augment & 0.183  & 91.2 & 1.370\\
MFQR       & 0.050  & 90.5 & 1.103 \\
MFQR+OS  & \textbf{0.039}  & 90.5 & \textbf{1.061} \\
\bottomrule
\end{tabular}
\end{table}

\clearpage

\begin{figure}[H]
\vspace{-35pt}
    \centering
    \begin{subfigure}[t]{0.95\textwidth}
        \centering
        \includegraphics[width=\textwidth]{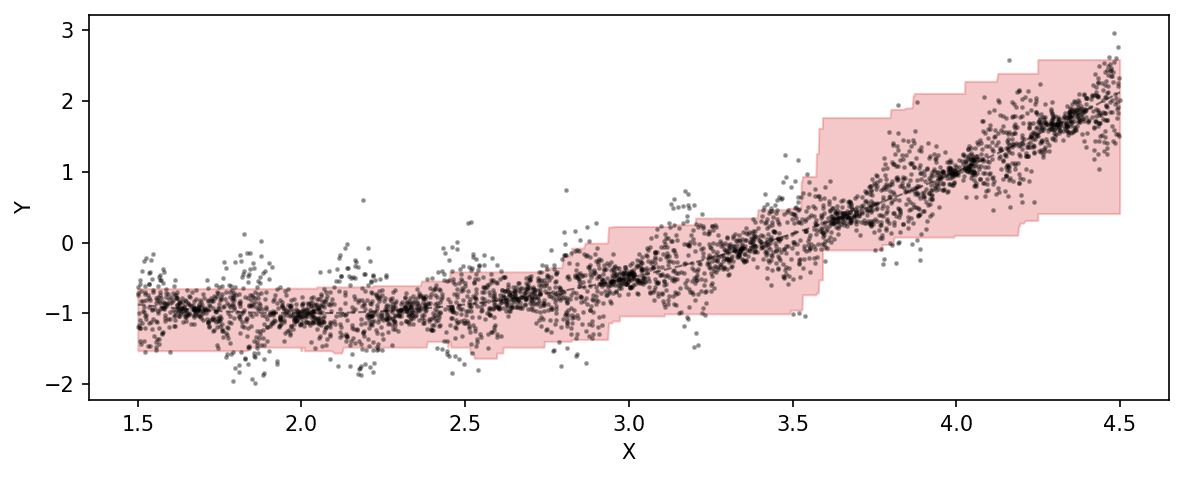}
        \caption{HF-Only: direct quantile regression on HF data.}
    \end{subfigure}
    
    \vspace{0.5em}
    
    \begin{subfigure}[t]{0.95\textwidth}
        \centering
        \includegraphics[width=\textwidth]{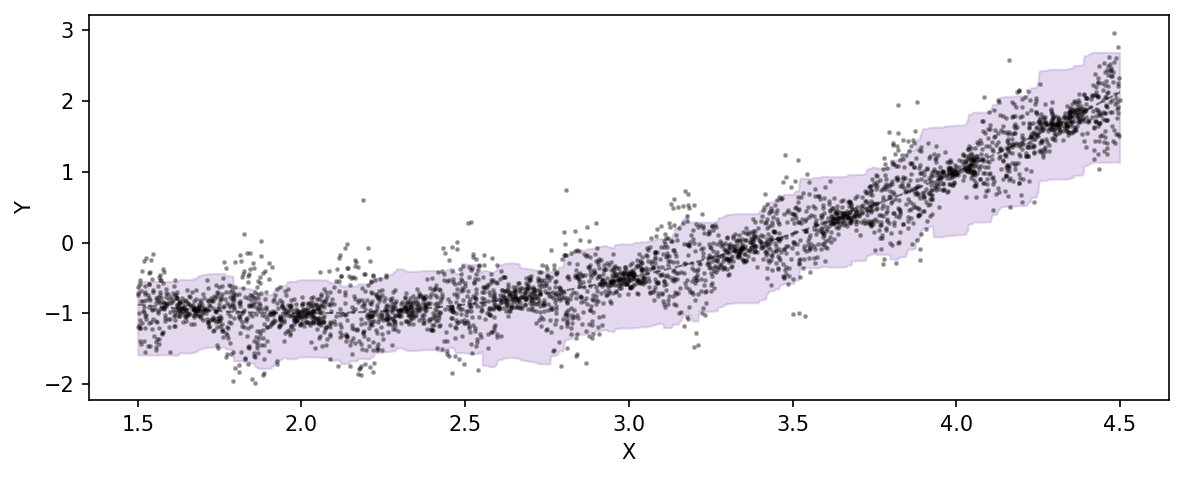}
        \caption{Tr-Mean: quantile regression after mean transfer.}
    \end{subfigure}
    
    \vspace{0.5em}
    
    \begin{subfigure}[t]{0.95\textwidth}
        \centering
        \includegraphics[width=\textwidth]{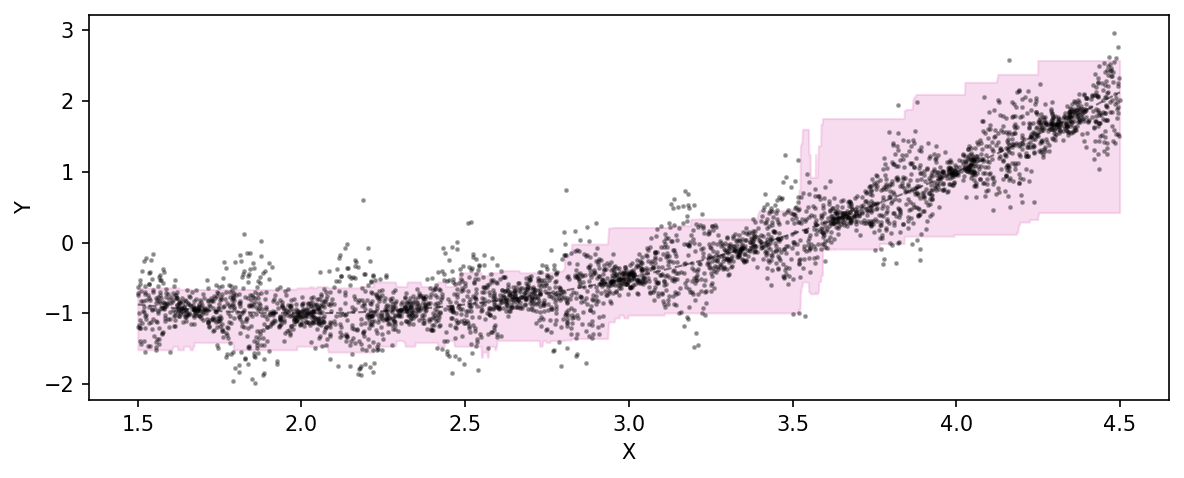}
        \caption{Tr-Augment: quantile regression with augmented features.}
    \end{subfigure}
    \caption{Conformal prediction intervals based on quantile estimates from three baseline methods in the informative LF regime.}
    \label{fig:informative_cqr_baselines}
\end{figure}

\begin{figure}[H]
\vspace{-35pt}
    \centering
    \begin{subfigure}[t]{0.95\textwidth}
        \centering
        \includegraphics[width=\textwidth]{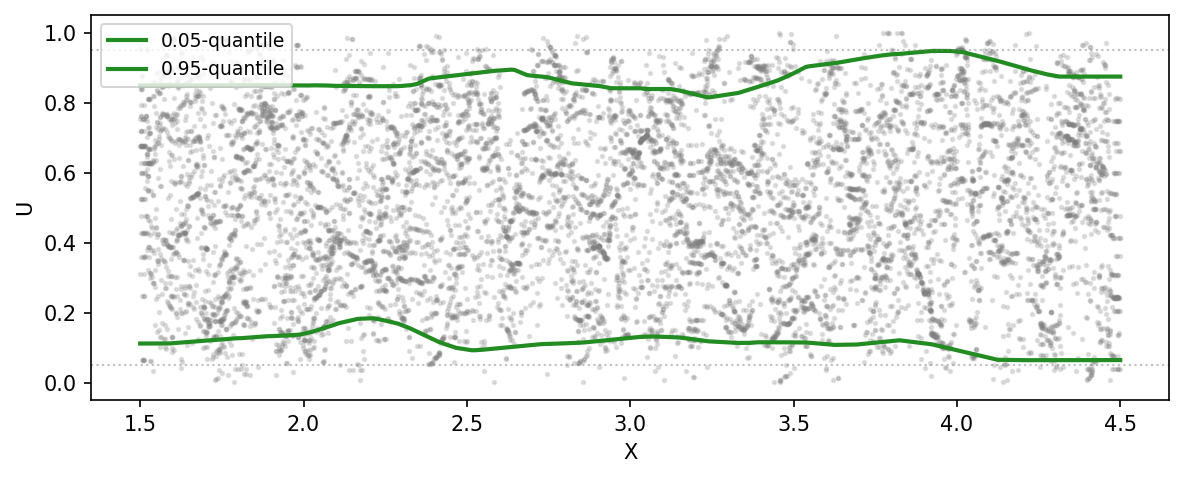}
        \caption{Wrapped pseudo-responses and their quantiles.}
    \end{subfigure}

    \vspace{0.5em}

    \begin{subfigure}[t]{0.95\textwidth}
        \centering
        \includegraphics[width=\textwidth]{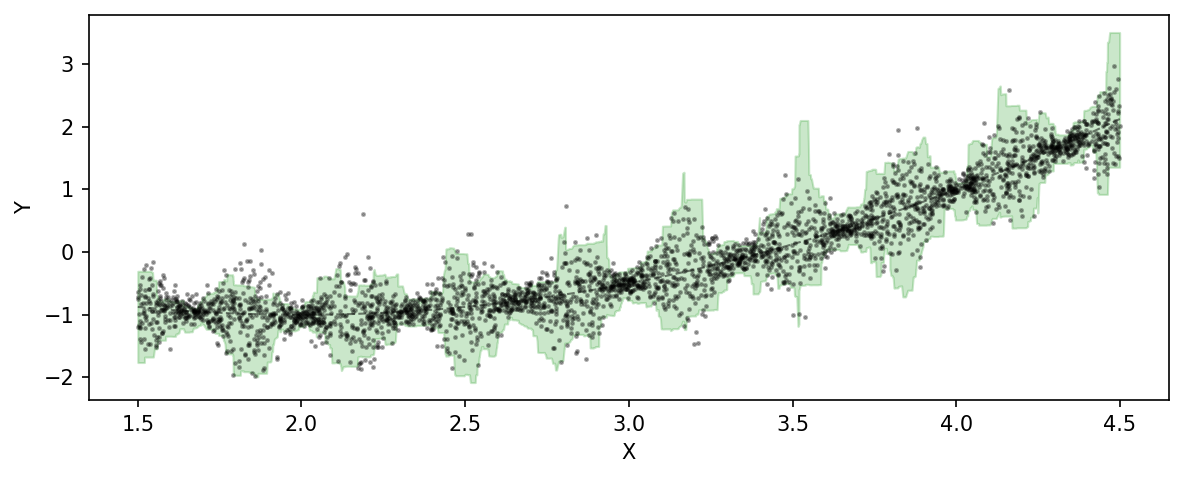}
        \caption{MFQR: wrapper estimator using the LF conditional distribution.}
    \end{subfigure}

    \vspace{0.5em}

    \begin{subfigure}[t]{0.95\textwidth}
        \centering
        \includegraphics[width=\textwidth]{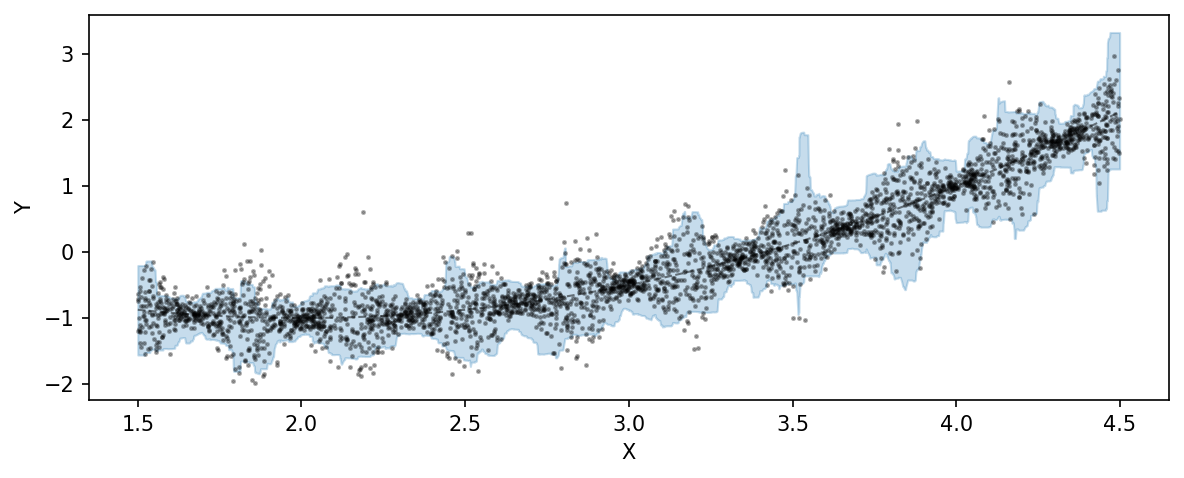}
        \caption{MFQR with one-step correction (MFQR+OS).}
    \end{subfigure}
    \caption{MFQR results in the informative LF regime. The level functions \(u_{0.05}(x)\) and \(u_{0.95}(x)\), defined as conditional quantiles of the wrapped pseudo-responses, are nearly flat. The resulting MFQR and MFQR~+~OS conformal prediction intervals closely track the HF response variation over \(X\).}
    \label{fig:informative_cqr_mfqr}
    \vspace{10pt}
\end{figure}

\clearpage

\Cref{fig:informative_cqr_baselines} visualizes the conformal prediction intervals for the baseline methods. 
The intervals from HF-Only and Tr-Augment fail to  capture the periodic scale variation of the responses over \(X\). 
This suggests that the augmented LF feature does not provide much additional information beyond the raw covariate \(X\) for predicting the HF response. 
Tr-Mean produces narrower intervals but still struggles to capture the periodic scale.

\Cref{fig:informative_cqr_mfqr} shows the wrapped pseudo-responses and the conformal prediction intervals from MFQR and MFQR~+~OS. 
The level functions \(u_{0.05}(x)\) and \(u_{0.95}(x)\), defined as conditional quantiles of the wrapped pseudo-responses, are nearly flat over \(X\). 
This agrees with Proposition~\ref{prop:wrapper_smoother_extended}:  under model~\eqref{eq:two_model},
$u_\tau(x)=G_L\!\left(G_H^{-1}(\tau)/1.7\right),$
which is constant in \(x\). 
This constant level function is maximally smooth,
while the original HF quantile still inherits variation from the polynomial mean and the oscillating scale function \(\sin^2(3\pi x)\). 
Thus, the wrapper converts the original nonlinear HF quantile estimation problem into the simpler task of estimating an approximately flat level function, allowing MFQR and MFQR~+~OS to produce accurate quantile estimates and sharp prediction intervals even with limited HF data.

\subsubsection{Non-informative LF regime}
\label{sec:noninformative}

The HF and LF data share the same covariate distribution,
\(X\sim \mathrm{Uniform}(0,0.5)\), and the same mean function 
$\mu(x)=0.6[2\sin(2\pi x)+x].$
We generate
\[
Y^H=\mu(x)+\sigma_H(x)\,W^H,\qquad
Y^L=\mu(x)+\sigma_L(x)W^L,
\]
where \(W^H\sim\mathcal N(0,1)\), \(W^L\) is a standardized Student-\(t(10)\), $\sigma_H(x)=0.1$, and $\sigma_L(x)=0.1+6.4(x-0.25)^2$.
As shown in \Cref{fig:noninformative_data_transform}, the HF responses have constant scale, whereas the LF responses exhibit clear heteroscedasticity.
In total, we generate \(N=2500\) covariates and split them into four disjoint subsets: \(n_L=1250\) LF training observations, \(n_H=125\) HF training observations, \(n_{\mathrm{cal}}=250\) calibration observations, and \(n_{\mathrm{test}}=875\) test observations.

\begin{figure}[t]
    \centering
        \includegraphics[width=\textwidth]{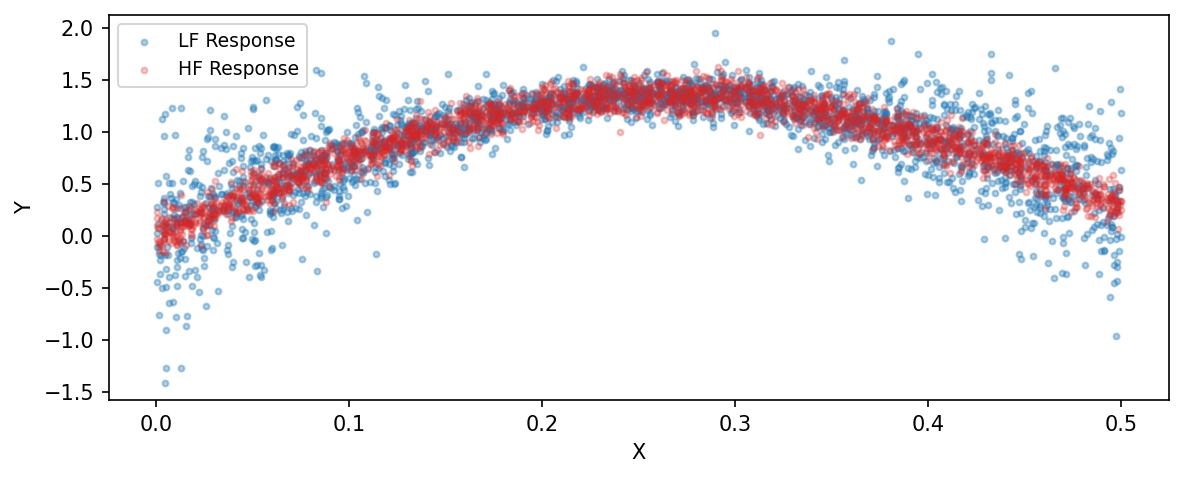}
 \caption{Illustration of LF and HF data in the non-informative LF regime;  the sample sizes for the experiments are specified in the text.}
    \label{fig:noninformative_data_transform}
\end{figure}

\begin{table}[t]
\centering
\caption{Performance in the non-informative LF regime: quantile MSE, empirical coverage, and average interval width on the test set.}
\label{tab:noninformative}
\small
\begin{tabular}{l@{\hskip 34pt}c@{\hskip 34pt}c@{\hskip 34pt}c}
\toprule
Method & $\mathrm{MSE}$ & Coverage\ (\%) & Width  \\
\midrule
HF-Only & 0.038 & 92.2  & 0.558\\
MFQR & 0.025  & 91.8 & 0.501 \\
MFQR+OS & \textbf{0.008} & 92.6  & \textbf{0.418}\\
\bottomrule
\end{tabular}
\end{table}

In this non-informative LF regime, \Cref{tab:noninformative} shows that, by transferring the shared mean function, MFQR alone yields only a modest improvement over HF-Only, with roughly a \(35\%\) reduction in quantile MSE.
By contrast, MFQR~+~OS reduces the quantile MSE by around \(78\%\) and the interval width by nearly \(25\%\) relative to HF-Only. This is exactly the regime anticipated by \Cref{thm:multiple_bias_nonparam}: because the HF conditional distribution has a simple constant-scale
structure, the HF CDF \(\hat F^H(\cdot\mid x)\) can be estimated
accurately even from the limited HF sample, so the local CDF error
\(\nu_\tau(x)\) is small. The one-step correction therefore improves MFQR by replacing a non-negligible wrapper bias \(e_\tau(x)\) with a much smaller HF estimation error.

\Cref{fig:noninformative_cqr} visualizes the prediction intervals from HF-Only, MFQR, and MFQR~+~OS.
HF-Only struggles to learn the nonlinear response surface from limited HF data. 
MFQR captures the main response shape, but its intervals vary irregularly across \(X\), because the wrapped pseudo-responses inherit heteroscedastic patterns from the LF data.
The one-step correction in MFQR~+~OS mitigates this issue without re-estimating the HF quantile surface. 
The resulting intervals are tighter and smoother, inheriting less of the spurious LF heteroscedasticity across most values of \(X\).

\begin{figure}[H]
\vspace{-35pt}
    \centering
    \begin{subfigure}[t]{0.95\textwidth}
        \centering
        \includegraphics[width=\textwidth]{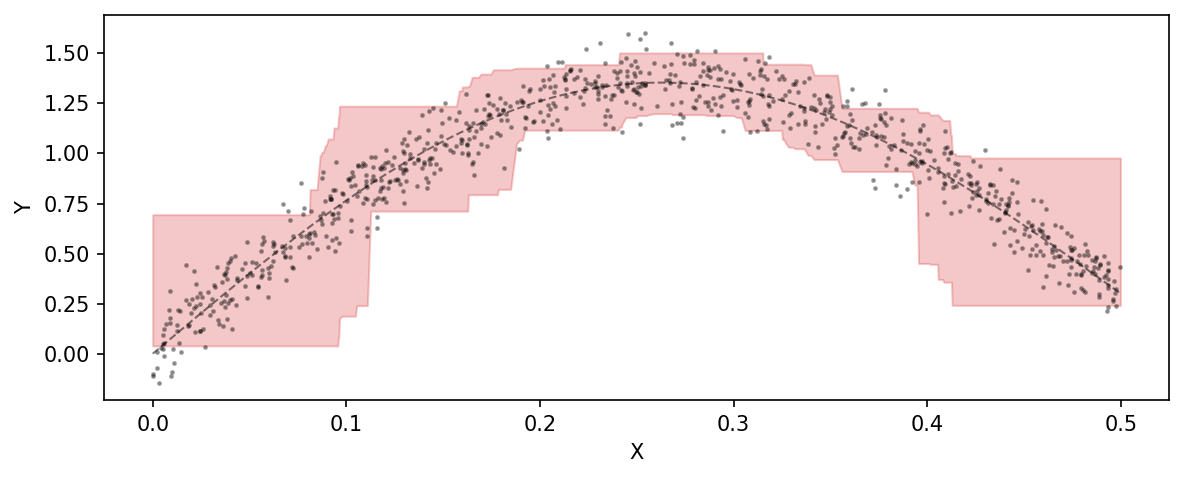}
        \caption{HF-Only: direct quantile regression on HF data.}
    \end{subfigure}
    
    \vspace{0.5em}
    
    \begin{subfigure}[t]{0.95\textwidth}
        \centering
        \includegraphics[width=\textwidth]{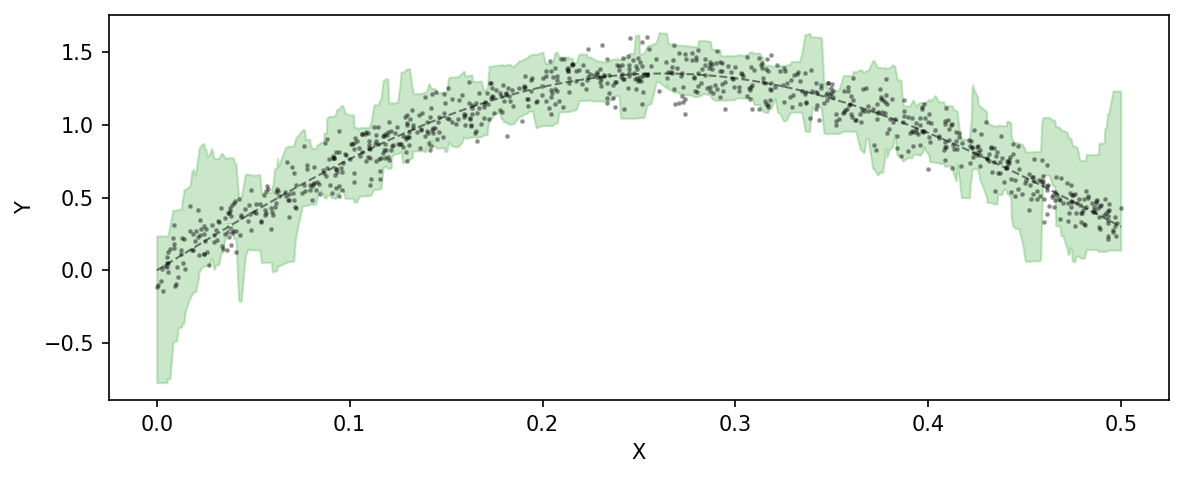}
        \caption{MFQR: wrapper estimator using the LF conditional distribution.}
    \end{subfigure}
    
    \vspace{0.5em}
    
    \begin{subfigure}[t]{0.95\textwidth}
        \centering
        \includegraphics[width=\textwidth]{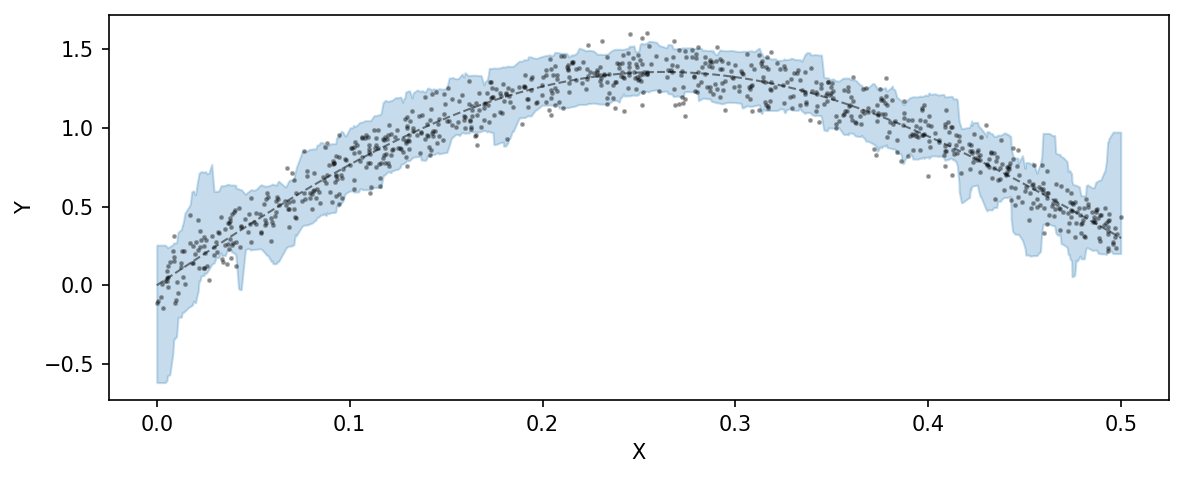}
        \caption{MFQR with one-step correction (MFQR+OS).}
    \end{subfigure}
    \caption{Conformal prediction intervals in the non-informative LF regime. MFQR yields more informative intervals than HF-Only over \(X\), while the one-step correction further shrinks and smooths the MFQR intervals. These improvements are achieved while maintaining coverage near the target level.}
    \label{fig:noninformative_cqr}
\end{figure}

\subsubsection{Misinformative LF regime}
\label{sec:mean_failure}

We next simulate HF and LF data with the same covariate distribution, \(X\sim \mathrm{Uniform}(0,1)\), and the same scale function
$\sigma(x)=0.15+0.3(2x-1)^2$,
but with very different mean functions:
\[
Y^H=\sin(2\pi x)+\sigma(x)W^H,
\qquad
Y^L = 0.5\sin(2\pi x) + 1.5\sin(4\pi x) + \sigma(x) W^L,
\]
where the noise variables \(W^H,W^L\stackrel{\mathrm{iid}}{\sim}\mathcal N(0,1)\).
As shown in \Cref{fig:misinformative}, the LF and HF responses have different shapes over \(X\). 
Because the conditional distributions are Gaussian and supported on the real line, a formal local quantile link still exists. 
However, this link is not useful for transfer. 
The induced level function \(u_\tau(x)\) must absorb the nonlinear mismatch between the LF and HF mean functions, and this mismatch cannot be removed by a simple affine transformation. 

We generate \(N=5000\) observations, with \(n_L=2500\) for LF training, \(n_H=350\) for HF training, \(n_{\mathrm{cal}}=500\) for calibration, and the remaining observations for testing. 
In this setting, direct HF estimation is more reliable than the wrapper estimator. 
As shown in \Cref{fig:misinformative_cqr_part1}, HF-Only produces relatively tight prediction intervals while maintaining valid coverage.

Table~\ref{tab:iter_meanfail} reports the quantile MSE, empirical coverage, and interval width in the misinformative regime. 
HF-Only outperforms MFQR because the wrapper makes the estimation problem more complex: as shown in \Cref{fig:misinformative}, the HF response curve is smoother than the LF response curve. 
The one-step correction substantially reduces the MSE by moving the wrapper estimate toward the HF quantile estimating equation, but a single step may be insufficient when the local quantile link is not useful for transfer. 
In this case, as discussed in \Cref{sect:ms}, multi-step correction further reduces the error, and MFQR+MS achieves an MSE comparable to HF-Only. 
\Cref{fig:misinformative_cqr_part2} illustrates this improvement: the corrected intervals are much tighter than those from the uncorrected wrapper.

Taken together, the three synthetic experiments show that the proposed framework remains robust across informative, non-informative, and misinformative LF regimes. The correction steps are crucial because they adapt the wrapper estimator toward the HF quantile equation when the LF data are unreliable.

\clearpage

\begin{figure}[H]
\vspace{-25pt}
    \centering
    \includegraphics[width=0.95\textwidth]{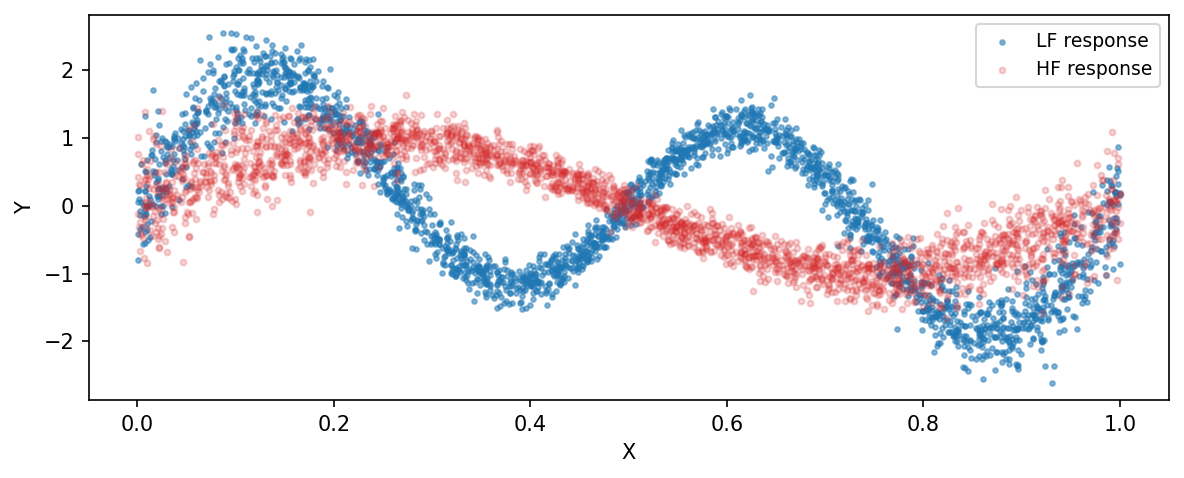}
 \caption{Illustration of LF and HF data in the misinformative LF regime;  the sample sizes for the experiments are specified in the text.}
   \label{fig:misinformative}
\end{figure}

\begin{figure}[H]
        \centering
        \includegraphics[width=0.95\textwidth]{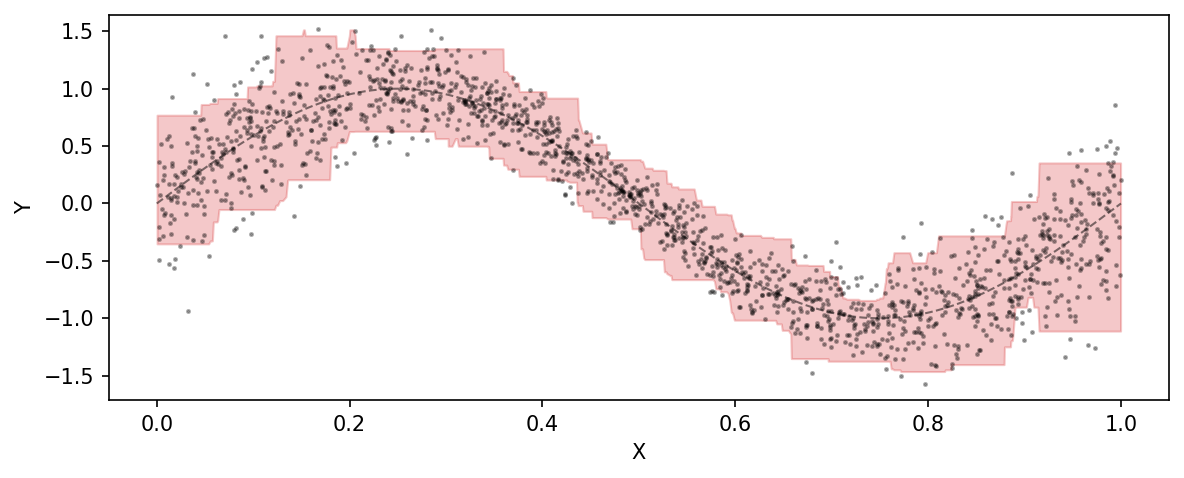}
    \caption{Prediction intervals of HF-Only in the misinformative LF regime.}
    \label{fig:misinformative_cqr_part1}
\end{figure}

\begin{table}[H]
\centering
\caption{Performance in the misinformative LF regime: quantile MSE, empirical coverage, and average interval width on the test set. MS denotes multi-step correction; cross-fitting selects \(m=5\) correction steps.}
\label{tab:iter_meanfail}
\centering
\small
\setlength{\tabcolsep}{12pt}
\begin{tabular}{l@{\hskip 34pt}c@{\hskip 34pt}c@{\hskip 34pt}c}
\toprule
Method &  $\mathrm{MSE}$ & Coverage (\%) & Width \\
\midrule
HF-Only  & 0.016 & 92.6 & 0.926 \\
MFQR  & 0.473 & 89.9 & 2.151 \\
MFQR+OS  & 0.054 & 89.5 & 1.214 \\
MFQR+MS  & \textbf{0.015} & 90.2 & \textbf{0.862} \\
\bottomrule
\end{tabular}
\end{table}

\clearpage

\clearpage

\clearpage
\begin{figure}[H]
\vspace{-35pt}
    \centering
 \begin{subfigure}[t]{0.95\textwidth}
        \centering
        \includegraphics[width=\textwidth]{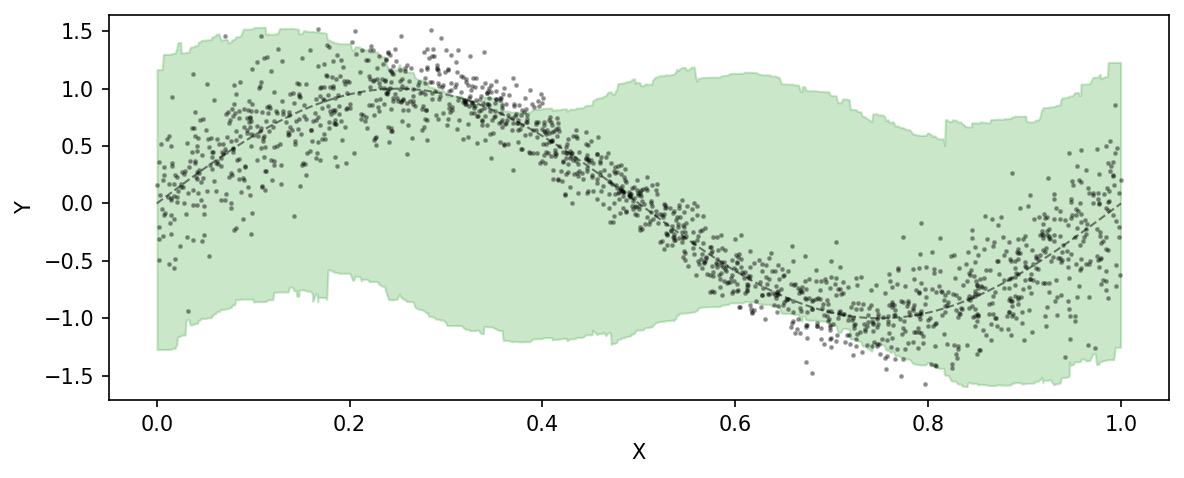}
        \caption{MFQR: wrapper estimator using the LF conditional distribution.}
        \label{fig:meanfail_cqr_mfqr}
    \end{subfigure}
    
    \vspace{0.5em}
    
    \begin{subfigure}[t]{0.95\textwidth}
        \centering
        \includegraphics[width=\textwidth]{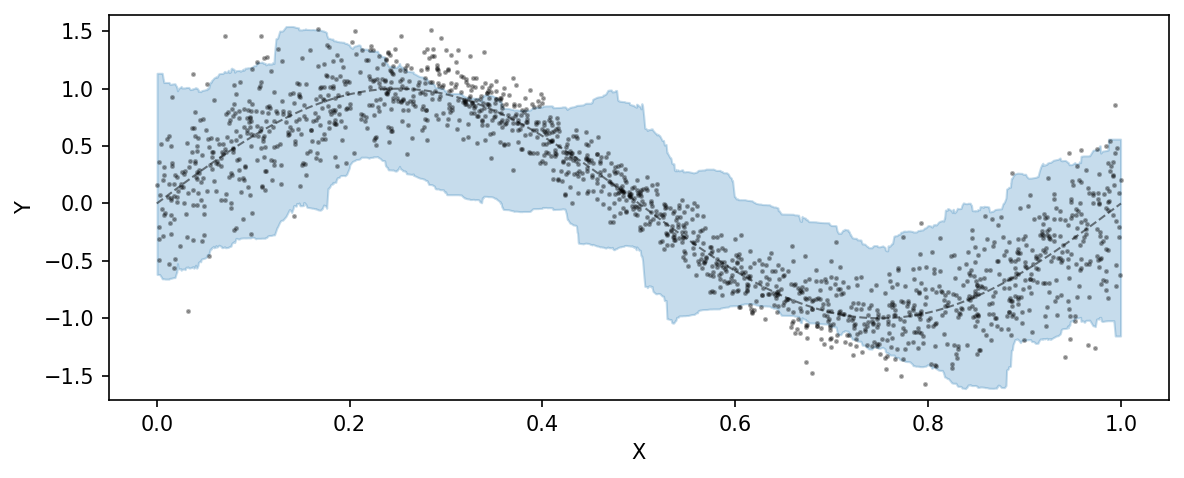}
        \caption{MFQR with one-step correction (MFQR+OS).}
        \label{fig:meanfail_cqr_os}
    \end{subfigure}
    
    \vspace{0.5em}
    
    \begin{subfigure}[t]{0.95\textwidth}
        \centering
        \includegraphics[width=\textwidth]{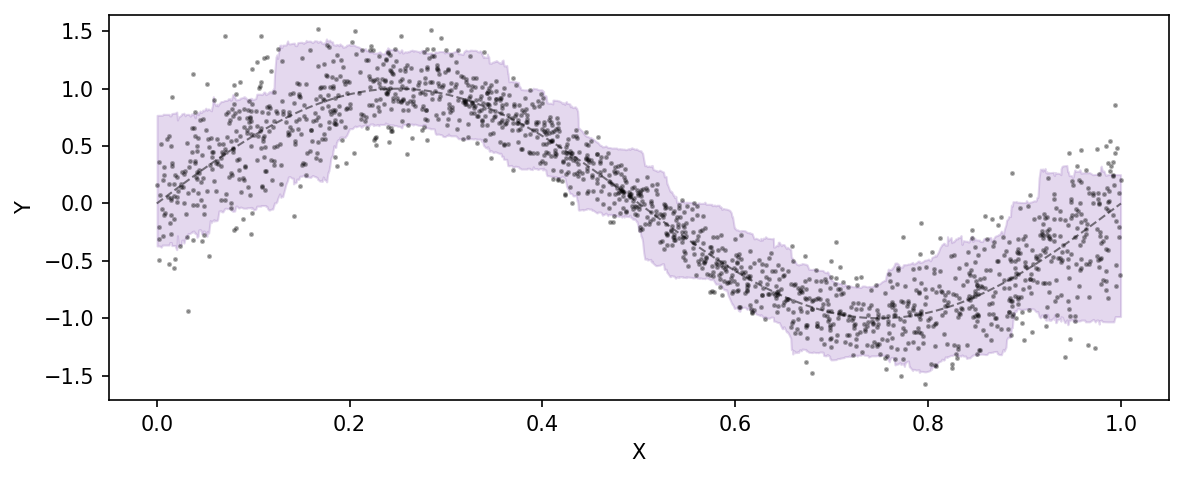}
        \caption{MFQR with multi-step correction (MFQR+MS).}
        \label{fig:meanfail_cqr_ms}
    \end{subfigure}

\caption{MFQR results in the misinformative LF regime. One-step and multi-step corrections enable MFQR to generate tighter prediction intervals.}
    \label{fig:misinformative_cqr_part2}
\end{figure}
\clearpage

\subsection{Scientific data}
\label{sec:scientific}

We evaluate all methods on five datasets spanning three domains. 
The first three are drawn from QeMFi, a multi-fidelity dataset of molecules for quantum chemical property prediction \citep{vinod2025qemfi}. 
Each QeMFi dataset contains \(N=15000\) molecular structures, where each
structure records a different three-dimensional arrangement of the atoms. Here
\(d\) denotes the dimension of the descriptor vector used as covariates;
molecules with more atoms typically have larger \(d\).

\begin{itemize}[leftmargin=*]
    \item \textbf{Acrolein} ($d{=}24$): 
    Acrolein is a small organic molecule with formula 
    C\(_3\)H\(_4\)O. 
    The dataset contains many molecular geometries of acrolein, represented by molecular descriptors, to predict the corresponding total energy. 
    The LF energies are computed with the minimal STO-3G basis set, which is computationally cheap but less accurate, while the HF energies are computed with the larger def2-TZVP basis set. 
The larger basis set yields more accurate energies at substantially higher computational cost.
    \item \textbf{Thymine} ($d{=}45$): Thymine is a larger organic molecule with formula 
    C\(_5\)H\(_6\)N\(_2\)O\(_2\). 
We use the same prediction task based on LF/HF basis-set pairing as in Acrolein: STO-3G energies are treated as LF responses, and def2-TZVP energies are treated as HF responses.  
\item \textbf{o-HBDI} (\(d{=}66\)): 
o-HBDI is a photoactive organic molecule with formula C\(_8\)H\(_8\)N\(_2\)O\(_2\). STO-3G energies are treated as LF responses, and def2-TZVP energies are treated as HF responses. 
\item \textbf{Burgers} (\(N=5000\), \(d=6\)): a multi-fidelity benchmark
based on velocity solutions to the viscous Burgers equation
\citep{burgers1948mathematical,perdikaris2017nonlinear,meng2020composite}.
The covariates describe the initial condition, viscosity, time, and location.
LF and HF responses are computed on coarse and fine meshes with 16 and 128 grid
points, respectively.

\item \textbf{Formation Energy (F-Energy)} (\(N=2500\), \(d=11\)): a crystal formation-energy dataset from the Materials Project \citep{jain2013materials}. Each observation has compositional and structural descriptors, with LF responses based on PBE and HF responses based on a more accurate density-functional approximation.
\end{itemize}

\begin{figure}[t]
\vspace{-35pt}
    \centering
    \begin{subfigure}[t]{0.9\textwidth}
        \centering
        \includegraphics[width=\textwidth]{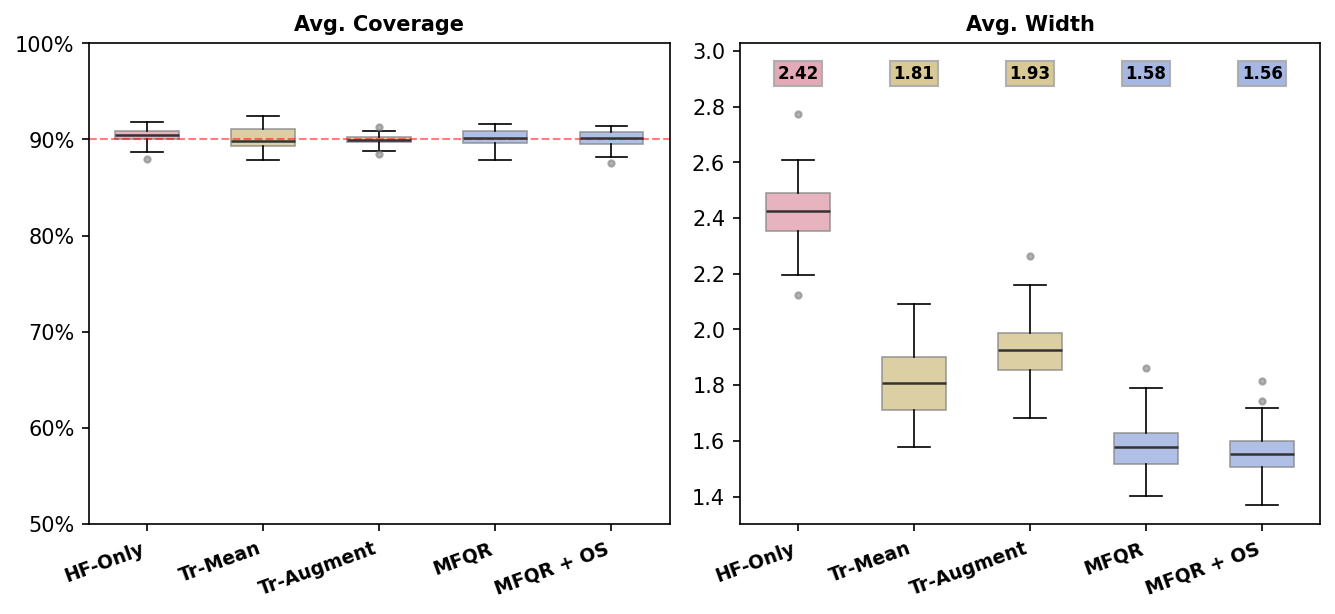}
        \caption{Acrolein.}
    \end{subfigure}
    
    \vspace{0.5em}
    
    \begin{subfigure}[t]{0.9\textwidth}
        \centering
        \includegraphics[width=\textwidth]{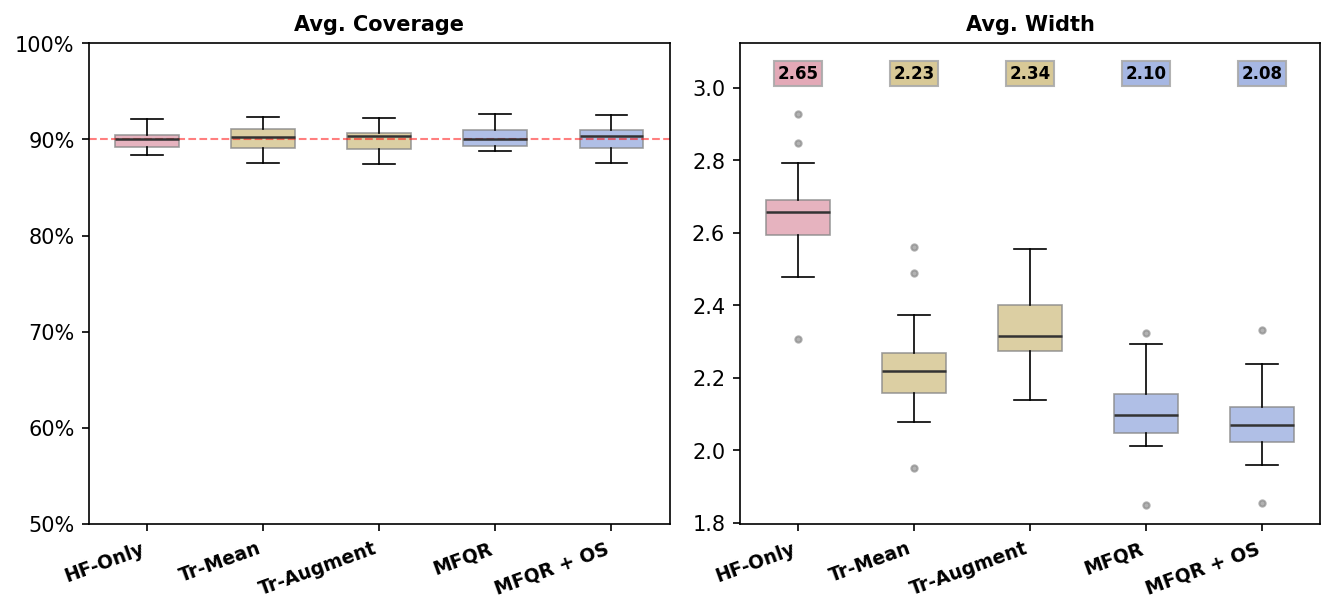}
        \caption{Thymine.}
    \end{subfigure}
    
    \vspace{0.5em}
    
    \begin{subfigure}[t]{0.9\textwidth}
        \centering
        \includegraphics[width=\textwidth]{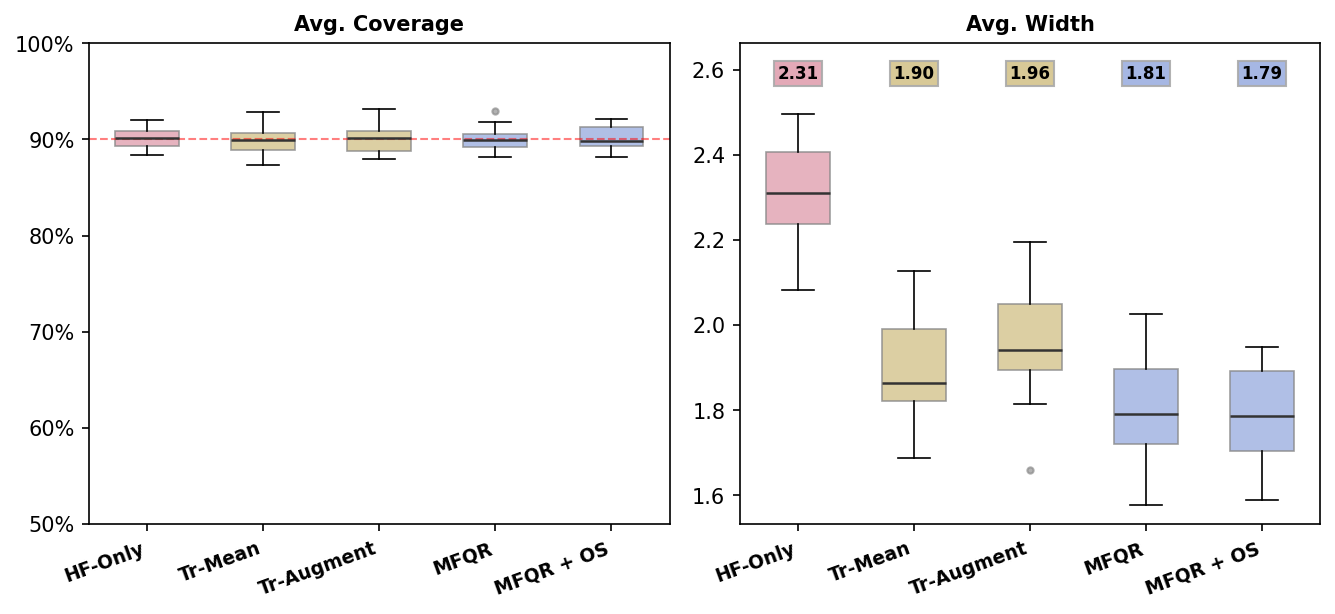}
        \caption{o-HBDI.}
    \end{subfigure}
\caption{Coverage and interval width on the QeMFi datasets.}
\label{fig:boxplots}
\end{figure}

\clearpage

\begin{figure}[t]
\vspace{-25pt}
    \centering
    \begin{subfigure}[t]{0.9\textwidth}
        \centering
        \includegraphics[width=\textwidth]{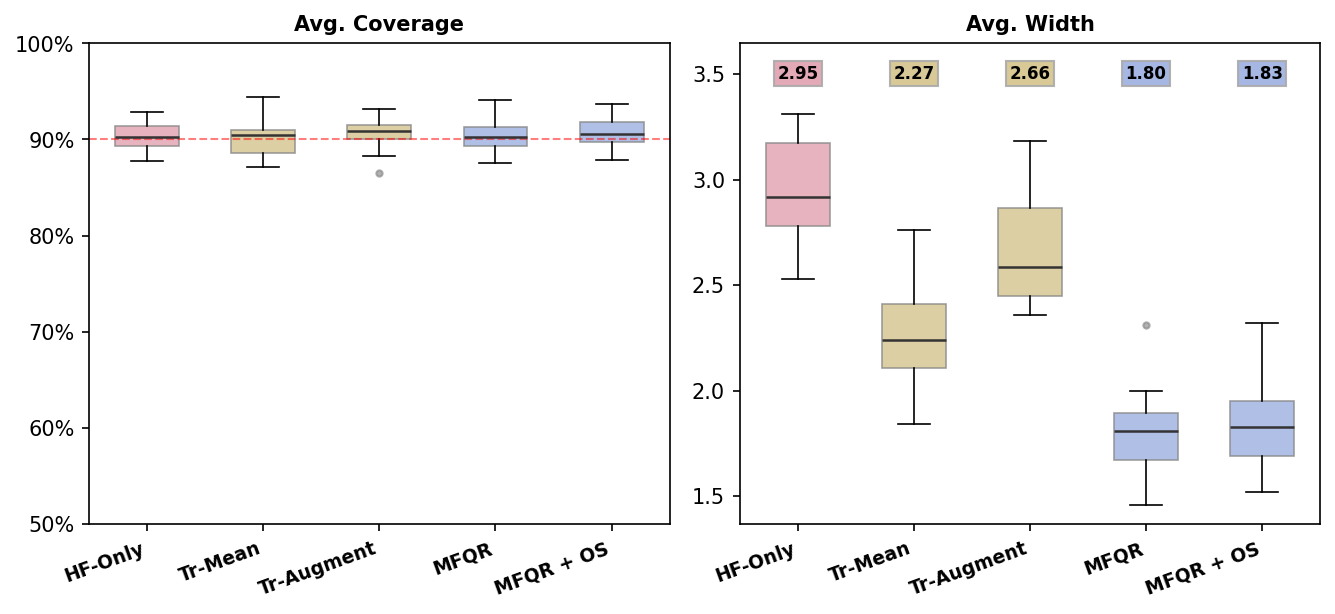}
        \caption{Burgers.}
    \end{subfigure}
    
    \vspace{0.5em}
    
    \begin{subfigure}[t]{0.9\textwidth}
        \centering
        \includegraphics[width=\textwidth]{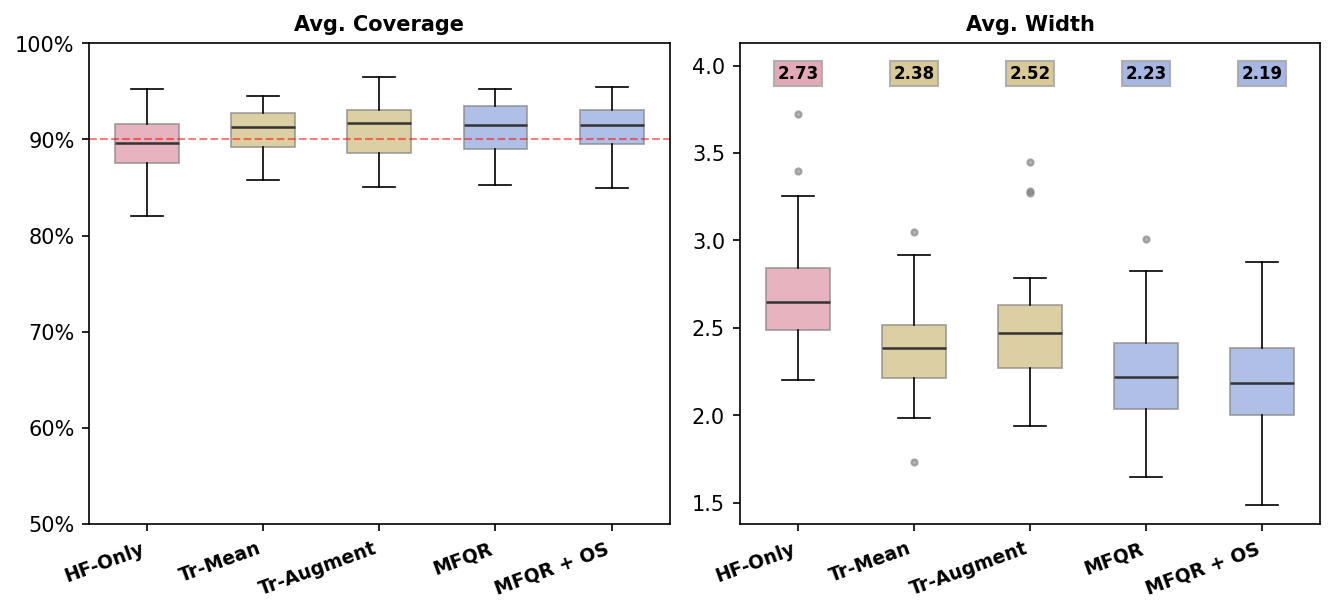}
        \caption{Formation Energy.}
    \end{subfigure}
\caption{Coverage and interval width on the Burgers and F-Energy datasets.}
\label{fig:boxplots_2}
\vspace{15pt}
\end{figure}

For each dataset, LF and HF responses are available on a common set of covariates. 
For Acrolein, Thymine, o-HBDI, and Formation Energy, we split the data into roughly \(50\%\) LF training, \(5\%\) HF training, \(5\%\) calibration, and \(40\%\) testing. 
For the low-dimensional Burgers dataset, we use a smaller HF training budget, \(n_H=60\), to keep the estimation problem more challenging.

Figures~\ref{fig:boxplots} and \ref{fig:boxplots_2} show the coverage and interval width over 20 random splits.
All methods achieve empirical coverage close to the nominal level \(90\%\), as expected from conformal prediction. 
Across the five datasets, MFQR and MFQR~+~OS produce the narrowest or nearly narrowest intervals, with width reductions of about \(20\)--\(39\%\) compared to HF-Only. 
The one-step correction gives a small additional improvement over MFQR on all datasets except Burgers, where the HF training size is deliberately small (\(n_H=60\)). 
In this case, the HF CDF estimator used in the correction step is too noisy to yield a meaningful improvement. 
This behavior is consistent with the expansion in Theorem~\ref{thm:multiple_bias_nonparam}: when the CDF estimation error \(\nu_\tau(x)\) is large relative to the wrapper error \(e_\tau(x)\), the correction can add variance rather than reduce error.

\section{Discussion}
\label{sec:discussion}

The proposed multi-fidelity quantile regression (MFQR) framework has two components that serve complementary purposes: transfer through the LF conditional distribution and targeted correction using the HF quantile equation. The wrapper estimator is most effective when the LF distribution captures the local structure of the HF distribution, because in that regime the level function \(u_\tau(x)\) can be easier to estimate than the original HF quantile surface. When the wrapper is less informative, the one-step correction uses the HF quantile equation to reduce bias. This complementary structure is reflected in our experiments, where the same framework performs well across both synthetic and scientific datasets in favorable and unfavorable regimes.

An important feature of MFQR is that it is model-agnostic: it only requires an estimator of the LF conditional distribution function, rather than a specific parametric form, learning architecture, or access to the original LF training data. This makes the framework compatible with pretrained LF models, including tabular foundation models \citep{hollmann2025accurate,qu2025tabicl}, whenever their outputs can be converted into an estimator of the LF conditional distribution function. This flexibility is especially important in scientific applications, where low-fidelity information may come from large pretrained models, simulations, or historical datasets, while the most reliable high-fidelity labels often come from small numbers of expensive laboratory experiments \citep{horawalavithana2022foundation,pyzer2025foundation}. 

A number of directions for future work remain open. One is to extend the framework beyond two fidelity levels by using the HF sample to select the most informative LF source, or more generally, to learn an HF-guided mixture wrapper that combines multiple LF distributions. Another is to study covariate shift, where LF and HF data may no longer be aligned in covariate space for reliable transfer. Such an extension would broaden the practical applicability of the method. Finally, it would be interesting to move beyond scalar outcomes. In our setting, the conditional distribution serves as a generative model of a univariate response. Extending the wrapper idea to structured outputs such as text and images would require a suitable multivariate analogue. We leave this direction for future work.

\bibliographystyle{plainnat}
\bibliography{references}

\clearpage

\appendix
\appendixpage

\section{Experiments with Gaussian process models}
\label{app:gp_results}

Section~\ref{sec:scientific} reports the scientific-data experiments using random forests. 
Here we repeat the same experiments using Gaussian process (GP) regression models, while keeping the data splits, random seeds, method definitions, and conformal prediction step unchanged. 
We use the same LF and HF GP models across all methods.
In the GP implementation, we approximate the infinite-dimensional feature map of the Gaussian kernel using \(1000\) random Fourier features \citep{rahimi2008random}. 
For each fitted GP model, we obtain a Gaussian predictive distribution
\[
Y(x)\mid x,\mathcal D
\sim
N\!\left(
\hat\mu_{\mathrm{GP}}(x),
\hat\sigma_{\mathrm{GP}}^2(x)
\right).
\]
We use this predictive distribution to approximate the conditional response distribution. 
The GP CDF and quantile estimates are defined as
\[
\hat F^{\mathrm{GP}}(y\mid x)
=
\Phi\!\left(
\frac{y-\hat\mu_{\mathrm{GP}}(x)}
{\hat\sigma_{\mathrm{GP}}(x)}
\right),
\qquad
\hat Q^{\mathrm{GP}}(\alpha\mid x)
=
\hat\mu_{\mathrm{GP}}(x)
+
\hat\sigma_{\mathrm{GP}}(x)\Phi^{-1}(\alpha).
\]
All other steps are unchanged from the random-forest implementation.

Figures~\ref{fig:gp_boxplots} and \ref{fig:gp_boxplots_2} show the coverage and interval width over 20 random splits. 
The results are broadly consistent with the random-forest experiments in Section~\ref{sec:scientific}. 
MFQR or MFQR+OS generally produces the narrowest prediction intervals. This suggests that the empirical advantage of MFQR is not tied to a particular conditional distribution estimator, but rather to the way it uses the LF distribution to define a smoother wrapper target.

The effect of the one-step correction varies across datasets. 
It improves over MFQR on Acrolein, Thymine, and o-HBDI. 
On Burgers, however, the correction gives wider intervals. 
As discussed above, the correction helps only when the HF CDF estimation error is sufficiently controlled. 
When the HF CDF estimate is noisy, the correction can introduce additional variation rather than reduce error. 
Overall, the GP experiments support the same conclusion as the random-forest experiments. 
MFQR can improve HF quantile estimation by leveraging LF data, while the correction step is most beneficial when the HF quantile equation can be estimated accurately enough.

\begin{figure}[H]
\vspace{-25pt}
    \centering
    \begin{subfigure}[t]{0.9\textwidth}
        \centering
        \includegraphics[width=\textwidth]{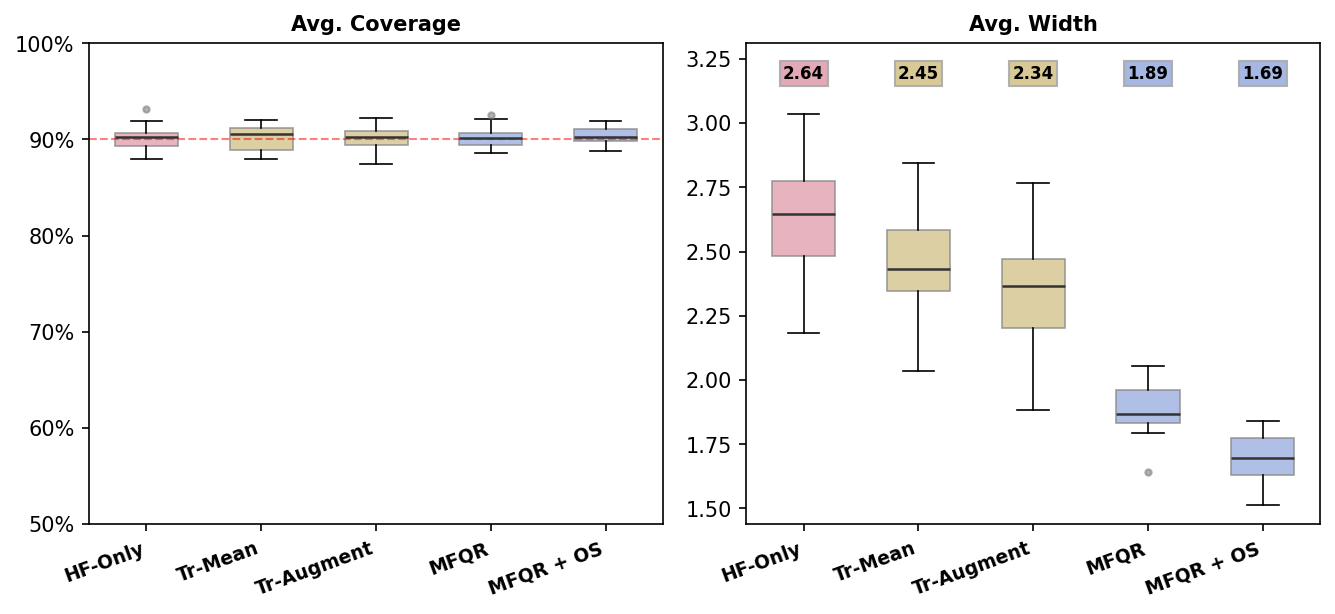}
        \caption{Acrolein.}
    \end{subfigure}
    \begin{subfigure}[t]{0.9\textwidth}
        \centering
        \includegraphics[width=\textwidth]{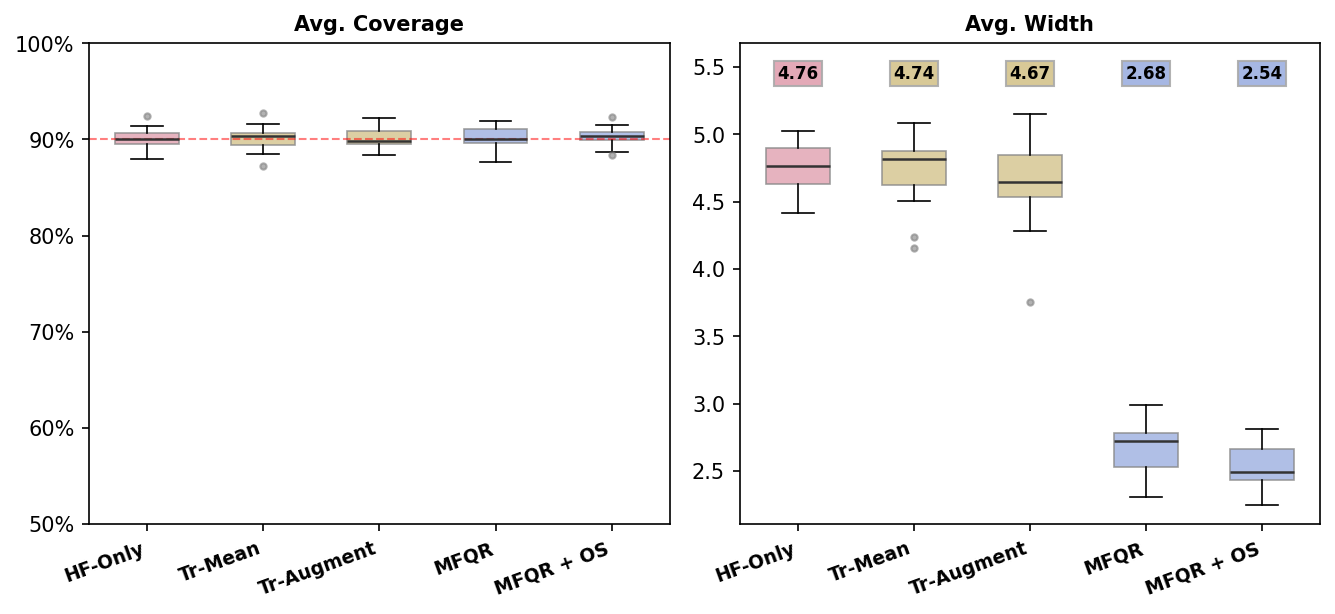}
        \caption{Thymine.}
    \end{subfigure}
    \begin{subfigure}[t]{0.9\textwidth}
        \centering
        \includegraphics[width=\textwidth]{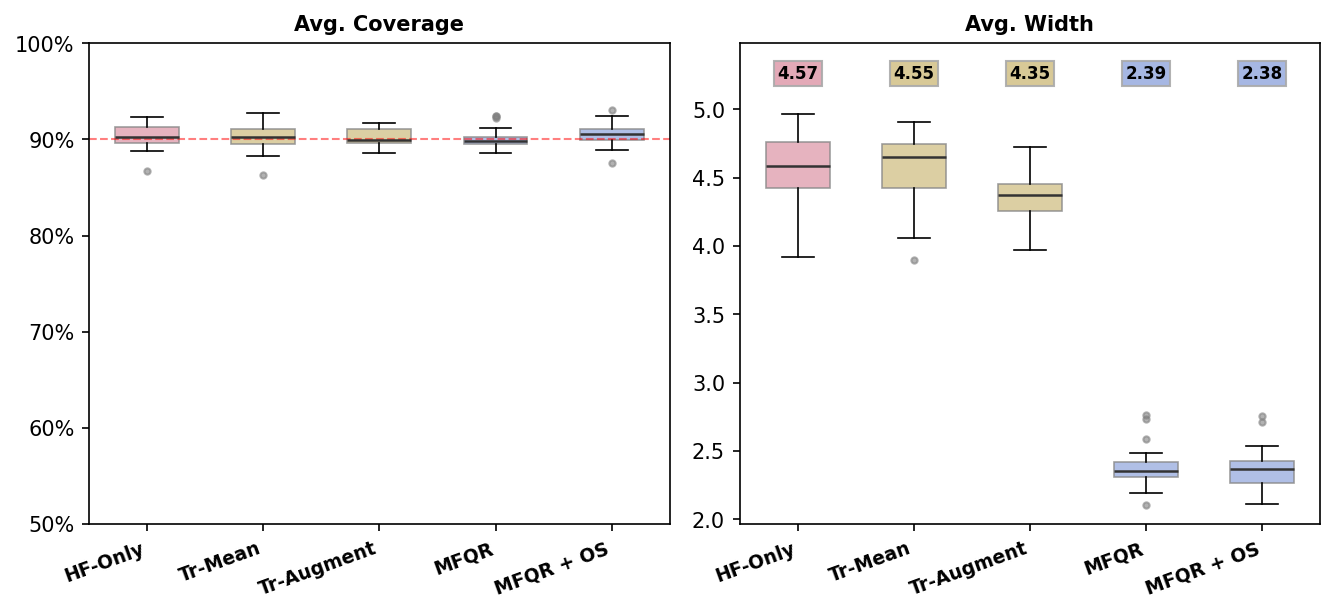}
        \caption{o-HBDI.}
    \end{subfigure}
\caption{Coverage and interval width on the QeMFi datasets under the Gaussian process implementation.}
\label{fig:gp_boxplots}
\end{figure}

\clearpage

\begin{figure}[H]
\vspace{-25pt}
    \centering
    \begin{subfigure}[t]{0.9\textwidth}
        \centering
        \includegraphics[width=\textwidth]{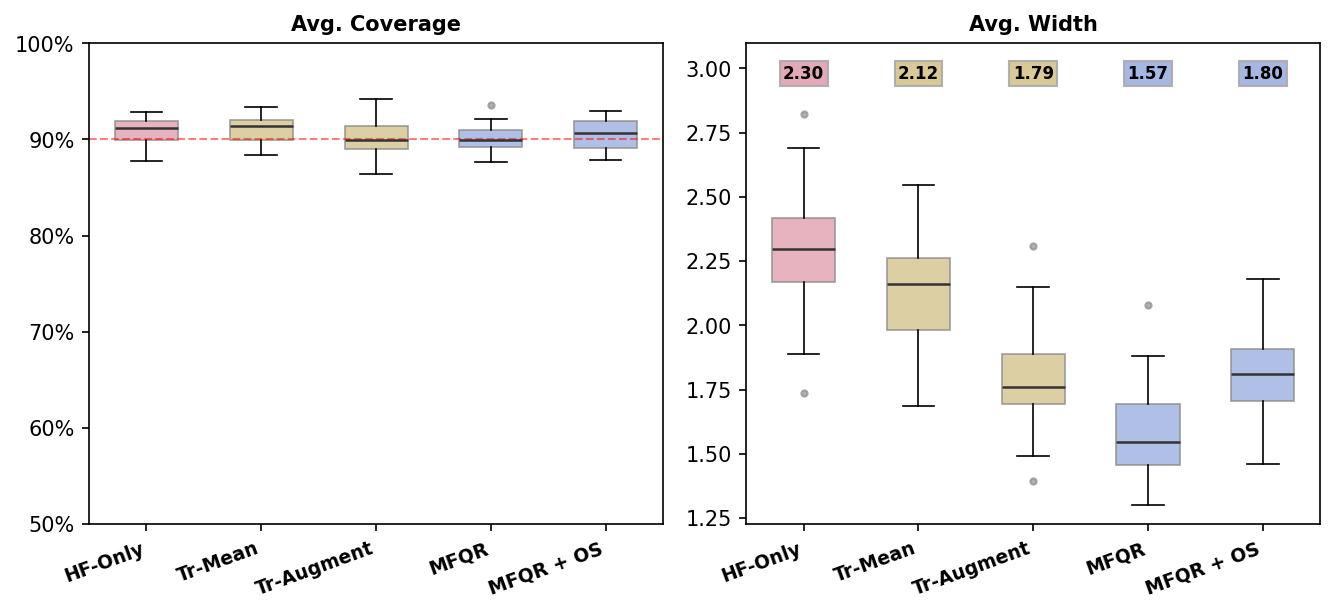}
        \caption{Burgers.}
    \end{subfigure}
    
    \vspace{0.5em}
    
    \begin{subfigure}[t]{0.9\textwidth}
        \centering
        \includegraphics[width=\textwidth]{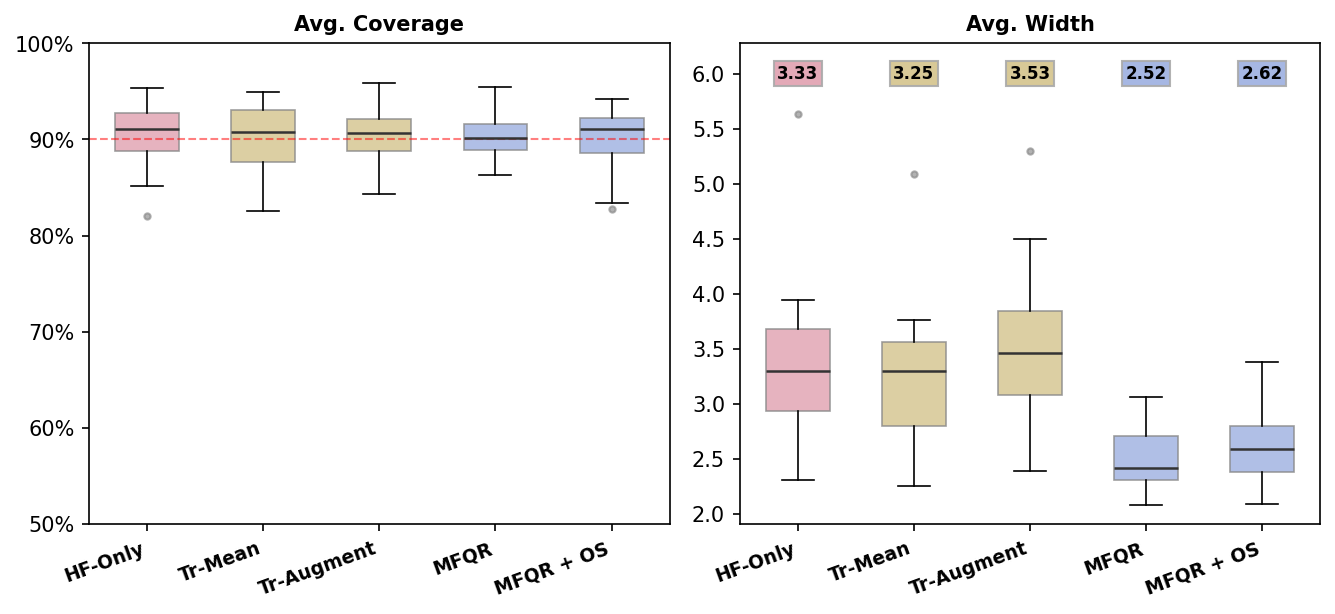}
        \caption{F-Energy.}
    \end{subfigure}
\caption{Coverage and interval width on the Burgers and F-Energy datasets under the Gaussian process implementation.}
\label{fig:gp_boxplots_2}
\vspace{15pt}
\end{figure}

\section{Implementations of MFQR}
\label{app:implementation}

The wrapper construction in MFQR is model-agnostic and can be combined with a broad range of estimators. This appendix discusses several approaches for estimating the LF conditional CDF \(F^L(\cdot\mid x)\) and the level function \(u_\tau(x)\).


\paragraph{Location-scale models.}
When domain knowledge suggests a simple model for the LF distribution, one may fit a parametric or semiparametric working model. For example, one may assume that
\[
Y^L=\mu_L(X)+\sigma_L(X)\varepsilon,
\]
where the error variable \(\varepsilon\) follows a reference distribution such as Gaussian or Student-\(t\). Using regression estimators for \(\mu_L\) and \(\sigma_L\) yields
\[
\hat F^L(y\mid x)
=
\hat G\!\left(\frac{y-\hat\mu_L(x)}{\hat\sigma_L(x)}\right),
\]
where \(\hat G\) denotes the estimator of the fitted error CDF.

\paragraph{Distribution regression.}
A direct approach is to estimate the binary regression function
\(
P(Y^L\le y\mid X=x)
\)
over a grid of thresholds \(y_1<\cdots<y_M\). Specifically, for each threshold \(y_m\), one fits a classifier for the binary outcome
\(
\one_{\{Y_j^L\le y_m\}}
\)
as a function of \(X_j^L\). This yields estimates
\(
\hat F^L(y_m\mid x)
\)
on the grid, which can then be interpolated in \(y\). Logistic regression, random forests, boosted trees, and neural networks can all be used in this step.

Another approach is to first estimate the LF conditional quantile function \(Q^L(\alpha\mid x)\) over a dense grid of probability levels
\[
0<\alpha_1<\cdots<\alpha_M<1,
\]
and then recover the CDF by numerical inversion:
\[
\hat F^L(y\mid x)
:=
\sup\{\alpha_m:\hat Q^L(\alpha_m\mid x)\le y\}.
\]
This approach is convenient when conditional quantile estimators are easier to train or already available. It also aligns naturally with the final MFQR estimator, since the wrapper maps back through an LF quantile function.

\paragraph{Random forests.} \citet{meinshausen2006quantile} proposes a random-forest-based method to estimate the conditional distribution and quantiles. For a test point \(x\), each tree in the forest places \(x\) into a terminal leaf, and the training observations in the same leaf are treated as neighbors of \(x\). This induces adaptive weights on the training sample. If \(A_t(x)\) denotes the terminal leaf containing \(x\) in tree \(t\), then the tree-level weight on observation $i$ is
\[
w_i^{(t)}(x)
=
\frac{\one_{\{X_i\in A_t(x)\}}}{|A_t(x)|},
\]
and the forest weight is the average over trees,
\[
w_i(x)=\frac{1}{T}\sum_{t=1}^T w_i^{(t)}(x).
\]
These weights are nonnegative and sum to one, so they define a local empirical distribution around \(x\). Using these weights, the LF conditional CDF can be estimated by the weighted empirical CDF
\[
\hat F^L_{\mathrm{QRF}}(y\mid x)
=
\sum_{j=1}^{n_L} w_j(x)\,\one_{\{Y_j^L\le y\}}.
\]
The corresponding conditional quantile estimator is obtained by inversion,
\[
\hat Q^L_{\mathrm{QRF}}(\alpha\mid x)
=
\inf\{y:\hat F^L_{\mathrm{QRF}}(y\mid x)\ge \alpha\}.
\]
Thus, quantile regression forests are especially convenient in our setting because they provide both a conditional CDF estimator and a conditional quantile estimator through the same forest weights. 
When the correction step requires a conditional density estimate, we use the same forest weights and apply Gaussian-kernel smoothing to the weighted observations. To avoid numerical instability, we replace density estimates smaller than \(0.1\) by \(0.1\), and cap the magnitude of each correction step at \(1.0\).

The same idea can also be used in the second stage to estimate the wrapped target \(u_\tau(x)\). After forming the responses $\hat U_i=\hat F^L(Y_i^H\mid X_i^H)$,
one may fit a quantile regression forest to the pairs \((X_i^H,\hat U_i)\) and estimate
\[
\hat u_\tau(x)
=
\inf\Bigl\{u:\sum_{i=1}^{n_H}\tilde w_i(x)\,\one_{\{\hat U_i\le u\}}\ge \tau\Bigr\},
\]
where \(\tilde w_i(x)\) are the forest weights from the HF-stage forest. Since \(\hat U_i\in[0,1]\), this gives a flexible nonparametric estimator of the level function while preserving its interpretation as a conditional quantile on the probability scale.

\paragraph{Neural-network quantile regression \citep{taylor2000quantile}.}
Neural networks provide a flexible way to estimate \(u_\tau(x)\) when its dependence on \(x\) may be highly nonlinear. Let \(m_\theta(x)\) denote a neural-network predictor with parameters \(\theta\). One may estimate \(u_\tau(x)\) by minimizing the pinball loss
\[
\hat\theta_\tau
\in
\argmin_{\theta}
\sum_{i=1}^{n_H}
\rho_\tau\!\bigl(\hat U_i-m_\theta(X_i^H)\bigr),
\]
where
\(
\rho_\tau(z)=z\bigl(\tau-\one_{\{z<0\}}\bigr)
\). The quantile estimator is $\hat u_\tau(x)=m_{\hat\theta_\tau}(x).$  In practice, regularization is important to prevent overfitting, especially because the HF sample is typically limited. Standard approaches include weight decay, early stopping, dropout, and cross-validation for tuning model complexity. Neural networks can also be used in the LF stage, either by modeling the LF conditional CDF directly through distribution regression or by estimating a dense grid of LF conditional quantiles and recovering \(\hat F^L(y\mid x)\) by inversion.

\section{Technical proofs}

\subsection{Hölder classes}\label{section:holder}
Let \(\mathcal X\subset\mathbb R^p\) be compact, and let \(\beta>0\). Write
\[
m:=\lceil \beta\rceil-1,\qquad \alpha:=\beta-m\in(0,1].
\]
For a multi-index \(k=(k_1,\dots,k_p)\in\mathbb N_0^p\), let \(|k|:=k_1+\cdots+k_p\) and
\[
D^k f:=\frac{\partial^{|k|}f}{\partial x_1^{k_1}\cdots \partial x_p^{k_p}},
\]
with \(D^0 f=f\).
We say that \(f\) belongs to the Hölder class \(\mathcal H(\beta,L)\) if:
\begin{enumerate}
\item all mixed partial derivatives \(D^k f\) exist and are bounded on \(\mathcal X\) for every \(|k|\le m\);
\item for every multi-index \(k\) with \(|k|=m\),
\[
|D^k f(x)-D^k f(x')|
\le
L\,\|x-x'\|^\alpha,
\qquad x,x'\in\mathcal X.
\]
\end{enumerate}
When \(0<\beta\le 1\), this reduces to the usual Hölder condition
\[
|f(x)-f(x')|\le L\,\|x-x'\|^\beta,\qquad x,x'\in\mathcal X.
\]

\subsection{Proof of Proposition~\ref{prop:wrapper_smoother_extended}}

\begin{proof}
Since
\[
q_\tau(x)=\mu(x)+\sigma(x)G_H^{-1}(\tau),
\]
it is the sum of \(\mu(x)\) and a constant multiple of \(\sigma(x)\). Therefore its Hölder smoothness is determined by the rougher of these two components, so
\[
q_\tau \in \mathcal H(\beta_q,C_q),
\qquad
\beta_q=\min\{\beta_\mu,\beta_\sigma\},
\]
for some finite constant \(C_q>0\) depending only on \(\tau\), \(G_H\), \(C_\mu\), and \(C_\sigma\).

Next,
\[
r_\tau(x)=q_\tau(x)-\mu(x)=\sigma(x)G_H^{-1}(\tau),
\]
so \(r_\tau\) is a constant multiple of \(\sigma\). By Assumption~\ref{ass:holder_mu},
\[
r_\tau \in \mathcal H\!\left(\beta_\sigma,\; |G_H^{-1}(\tau)|C_\sigma\right).
\]

Finally, by definition
\[
u_\tau(x):=F^L(q_\tau(x)\mid x),
\qquad
q_\tau(x)=\mu(x)+\sigma(x)G_H^{-1}(\tau).
\]
Under model \eqref{eq:two_model},
\[
F^L(y\mid x)=G_L\!\left(\frac{y-\mu(x)}{\sigma(x)\rho(x)}\right).
\]
Evaluating at \(y=q_\tau(x)\) gives
\[
u_\tau(x)
=
G_L\!\left(\frac{G_H^{-1}(\tau)}{\rho(x)}\right).
\]
Define
\[
\psi_\tau(r):=G_L\!\left(r^{-1}G_H^{-1}(\tau)\right),\qquad r>0.
\]
Then
\[
u_\tau(x)=\psi_\tau(\rho(x)).
\]

Since \(\rho(x)\) is bounded away from zero and \(G_L\in C^{\lceil\beta_\rho\rceil}(\mathbb R)\) by Assumption~\ref{ass:G_regular}, the map \(\psi_\tau\) is \(C^{\lceil\beta_\rho\rceil}\) on the range of \(\rho\). Combining this with Assumption~\ref{ass:holder_mu} and the standard composition property of Hölder classes yields
\[
u_\tau \in \mathcal H\!\left(\beta_\rho,\; C_\tau C_\rho\right)
\]
for some finite constant \(C_\tau>0\).
\end{proof}

\subsection{Proof of Theorem~\ref{thm:wrapper_nonparam}}

\begin{proof}
Fix \(x\in\mathcal X\) and \(\tau\in(0,1)\). Let
\[
Q^L(u\mid x):=(F^L)^{-1}(u\mid x),
\qquad
\hat Q^L(u\mid x):=(\hat F^L)^{-1}(u\mid x).
\]
By Assumption~\ref{assume:qq}, \(q_\tau(x)=Q^L(u_\tau(x)\mid x)\),
and by definition of the wrapper estimator,
\(\tilde q_\tau(x)=\hat Q^L(\hat u_\tau(x)\mid x)\).
Hence
\begin{align*}
\tilde q_\tau(x)-q_\tau(x)
= &\ 
\Bigl[\hat Q^L(\hat u_\tau(x)\mid x)-Q^L(\hat u_\tau(x)\mid x)\Bigr] \\
& +
\Bigl[Q^L(\hat u_\tau(x)\mid x)-Q^L(u_\tau(x)\mid x)\Bigr].
\end{align*}
We control the two brackets separately.

\medskip
\noindent
\emph{Step 1: control of \(Q^L(\hat u_\tau(x)\mid x)-Q^L(u_\tau(x)\mid x)\).} 
Let \(\mathcal Y\) denote the compact interval from Assumption~\ref{ass:global_locality} on which \(F^L(\cdot\mid x)\) is strictly increasing and \(f^L(\cdot\mid x)\ge c_L>0\). Since \(q_\tau(x)\in \operatorname{int}(\mathcal Y)\) and
\[
u_\tau(x)=F^L(q_\tau(x)\mid x),
\]
the point \(u_\tau(x)\) lies in the interior of
\[
J_x:=F^L(\mathcal Y\mid x).
\]
By Assumption~\ref{ass:global_locality}, the map \(Q^L(\cdot\mid x)\) is twice continuously differentiable on a neighborhood of \(u_\tau(x)\). Hence there exists an open interval \(U_x\subset J_x\) containing \(u_\tau(x)\) on which \(\partial_u Q^L(u\mid x)\) is bounded. Since
\[
\hat u_\tau(x)-u_\tau(x)=O_p(a_n)
\qquad\text{and}\qquad
a_n\to0,
\]
we have \(\hat u_\tau(x)\in U_x\) with probability tending to one. A first-order Taylor expansion then gives
\[
Q^L(\hat u_\tau(x)\mid x)-Q^L(u_\tau(x)\mid x)
=
\partial_u Q^L(\bar u_\tau(x)\mid x)\bigl(\hat u_\tau(x)-u_\tau(x)\bigr),
\]
for some random \(\bar u_\tau(x)\) between \(\hat u_\tau(x)\) and \(u_\tau(x)\). Since \(\partial_u Q^L(\cdot\mid x)\) is bounded on \(U_x\),
\[
Q^L(\hat u_\tau(x)\mid x)-Q^L(u_\tau(x)\mid x)
=
O_p(a_n).
\]

\medskip
\noindent
\emph{Step 2: control of  \(\hat Q^L(\hat u_\tau(x)\mid x)-Q^L(\hat u_\tau(x)\mid x)\).}
Define
\[
\varepsilon_n(x):=\sup_{y\in \mathcal Y}\bigl|\hat F^L(y\mid x)-F^L(y\mid x)\bigr|.
\]
By assumption, $\varepsilon_n(x)=O_p(b_n),~b_n\to0$.
Also, by assumption, \(\hat F^L(\cdot\mid x)\) is nondecreasing in \(y\) with probability tending to one.
Because \(u_\tau(x)\in \operatorname{int}(J_x)\) and \(\hat u_\tau(x)-u_\tau(x)=O_p(a_n)\) with \(a_n\to0\), there exists a smaller closed interval \(J_x'\subset \operatorname{int}(J_x)\) containing \(u_\tau(x)\) such that
\[
\hat u_\tau(x)\in J_x'
\]
with probability tending to one. Since \(J_x'\subset \operatorname{int}(J_x)\) is compact, there exists
\(d_x>0\) such that every point in \(J_x'\) has distance at least \(d_x\)
from the boundary of \(J_x\). Because \(\varepsilon_n(x)\to0\) in probability,
we have \(\varepsilon_n(x)<d_x\) with probability tending to one. Hence, on an event
with probability tending to one,
\[
\hat u_\tau(x)\pm \varepsilon_n(x)\in J_x.
\]
Fix \(u\in J_x\). On the event that \(\hat F^L(\cdot\mid x)\) is nondecreasing and
\[
\sup_{y\in\mathcal Y}\bigl|\hat F^L(y\mid x)-F^L(y\mid x)\bigr|\le \varepsilon_n(x),
\]
monotonicity of generalized inverses implies
\[
Q^L(u-\varepsilon_n(x)\mid x)\le \hat Q^L(u\mid x)\le Q^L(u+\varepsilon_n(x)\mid x).
\]
Applying this with \(u=\hat u_\tau(x)\) yields
\[
Q^L(\hat u_\tau(x)-\varepsilon_n(x)\mid x)
\le
\hat Q^L(\hat u_\tau(x)\mid x)
\le
Q^L(\hat u_\tau(x)+\varepsilon_n(x)\mid x).
\]

Since \(f^L(y\mid x)\ge c_L>0\) on \(\mathcal Y\), the inverse map \(Q^L(\cdot\mid x)\) is Lipschitz on \(J_x\), with constant at most \(1/c_L\). Therefore
\[
\bigl|\hat Q^L(\hat u_\tau(x)\mid x)-Q^L(\hat u_\tau(x)\mid x)\bigr|
\le
\frac{1}{c_L}\,\varepsilon_n(x)
=
O_p(b_n).
\]

\medskip
\noindent
Combining the two steps gives
\[
\tilde q_\tau(x)-q_\tau(x)
=
O_p(a_n+b_n),
\]
which proves the theorem.
\end{proof}

\subsection{Local and uniform kernel estimator bounds}
\label{app:kernel_quantile_rates}

This subsection collects the kernel estimator bounds used to prove
\Cref{cor:wrapper_rate_transfer}. We first give pointwise bounds at a fixed
covariate value \(x\). These bounds control the local conditional CDF estimator
and the corresponding local quantile estimator at that point.
We then prove a stronger uniform bound for the conditional CDF estimator. More precisely, this bound controls
\[
\sup_{x\in\mathcal X,\;z\in\mathcal Y}
\left|\hat F^L(z\mid x)-F^L(z\mid x)\right|,
\]
where \(z\) denotes the response value at which the CDF is evaluated. This
uniform bound is needed because the wrapper constructs generated pseudo-responses
$\hat U_i=\hat F^L(Y_i^H\mid X_i^H)$,
so the first-stage LF CDF error must be controlled over all relevant covariate
and response values. This error then propagates into the estimation of the level
function \(u_\tau(x)\).

Compared with the fixed-\(x\) bound, the uniform bound has an extra logarithmic
factor because it controls the estimator over the whole covariate space
\(\mathcal X\), rather than at one fixed point. The reason can be briefly
explained as follows. In the proof, we approximate \(\mathcal X\) by a finite
set of representative points
$\mathcal N_n=\{x_1,\dots,x_{N_n}\}$,
chosen so that every \(x\in\mathcal X\) is within distance \(\delta_n\) of some
grid point \(x_j\). At each grid point, the random error has the usual order
$(nh^p)^{-1/2}$.
Controlling all \(N_n\) grid points at once adds a factor of order
$\sqrt{\log N_n}$.
Since the grid is chosen so that \(\log N_n=O(\log n)\), the random error in the
uniform bound becomes
$\sqrt{\frac{\log n}{nh^p}}$.
The Lipschitz continuity of the kernel then extends the bound from the grid
points to all \(x\in\mathcal X\).

Throughout this subsection, let \(\{(X_i,Z_i)\}_{i=1}^n\) be i.i.d. copies of
\((X,Z)\).

\begin{assumption}
\label{ass:kernel_aux_setup}
Assume that \(\mathcal X\subset\mathbb R^p\) is compact. Let \(Z\) be a scalar response with conditional CDF \(F_Z(\cdot\mid x)\),
conditional density \(f_{Z\mid X}(\cdot\mid x)\), and conditional
\(\tau\)-quantile \(m_\tau(x)\). Fix \(\tau\in(0,1)\). Assume there exists a
compact interval \(\mathcal Z\subset\mathbb R\) such that the following
conditions hold for all \(x\in\mathcal X\):
\begin{enumerate}[label=(\roman*)]
    \item \(m_\tau(x)\) lies in the interior of \(\mathcal Z\), and the conditional support of \(Z\mid X=x\) is contained in \(\mathcal Z\);
    \item \(F_Z(\cdot\mid x)\in \mathcal C^1(\mathcal Z)\), with
    \[
    f_{Z\mid X}(z\mid x)\ge c_Z>0
    \qquad \text{for all } z\in \mathcal Z;
    \]
    \item for every \(z\in \mathcal Z\), the map \(x'\mapsto F_Z(z\mid x')\) belongs to \(\mathcal H(\beta,C_Z)\), where \(\beta\in(0,1]\);
    \item the covariate distribution has a density \(f_X\) satisfying
    \[
    0<c_X\le f_X(x')\le C_X<\infty
    \qquad \text{for all }x'\in\mathcal X,
    \]
    and, for the kernel \(K\) in part (v), there exist constants \(h_0>0\) and \(c_{\mathcal X}>0\) such that
    \[
    \inf_{0<h<h_0}\inf_{x'\in\mathcal X}
    \int_{\mathcal X} h^{-p}K((t-x')/h)\,dt
    \ge c_{\mathcal X};
    \]
    \item the kernel \(K:\mathbb R^p\to\mathbb R\) is bounded, Lipschitz, compactly supported, nonnegative, and satisfies
    \[
    \int K(u)\,du=1,
    \qquad
    \int K(u)^2\,du<\infty;
    \]
    \item the bandwidth satisfies \(h\to0\) and \(nh^p\to\infty\).
\end{enumerate}
\end{assumption}

For a bandwidth \(h>0\), let
\[
K_h(z):=h^{-p}K(z/h),
\qquad
w_i(x;h):=\frac{K_h(X_i-x)}{\sum_{j=1}^n K_h(X_j-x)}.
\]

Define the kernel denominator
\[
\hat f_h(x):=\frac{1}{n}\sum_{i=1}^n K_h(X_i-x),
\qquad
m_h(x):=\E\{\hat f_h(x)\}.
\]

\begin{lemma}[Kernel denominator bounds]
\label{lem:kernel_denominator_uniform}
Under Assumption~\ref{ass:kernel_aux_setup}, for each fixed \(x\in\mathcal X\),
\[
\hat f_h(x)-m_h(x)=O_p\!\left((nh^p)^{-1/2}\right).
\]
If additionally \(nh^p/\log n\to\infty\), then
\[
\sup_{x\in\mathcal X}
\left|
\hat f_h(x)-m_h(x)
\right|
=
O_p\!\left(\sqrt{\frac{\log n}{nh^p}}\right).
\]
\end{lemma}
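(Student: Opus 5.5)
The pointwise bound is a variance computation, and the uniform bound will follow from a grid discretization, Bernstein's inequality at the grid points, a union bound, and a Lipschitz interpolation step.

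\emph{Pointwise bound.} Fix \(x\in\mathcal X\). The summands \(K_h(X_i-x)\) are i.i.d. and bounded by \(h^{-p}\|K\|_\infty\). Their second moment is
\[
\E\{K_h(X-x)^2\}=\int_{\mathcal X} h^{-2p}K((t-x)/h)^2 f_X(t)\,dt\le C_X h^{-p}\int K(u)^2\,du,
\]
using the substitution \(t=x+hu\) and the upper bound \(f_X\le C_X\) from Assumption~\ref{ass:kernel_aux_setup}(iv). Hence \(\Var\{\hat f_h(x)\}\le C_X\|K\|_2^2/(nh^p)\), and Chebyshev's inequality gives \(\hat f_h(x)-m_h(x)=O_p((nh^p)^{-1/2})\). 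This part needs only (iv)--(vi).

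\emph{Uniform bound.} Since \(\mathcal X\) is compact in \(\mathbb R^p\), for \(\delta_n>0\) it can be covered by \(N_n\le C\delta_n^{-p}\) balls of radius \(\delta_n\) with centers \(x_1,\dots,x_{N_n}\in\mathcal X\).

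At each grid point, Bernstein's inequality applies to the centered summands. These have variance at most \(Ch^{-p}\) and are bounded by \(Ch^{-p}\). This gives
\[
\Prob\bigl(|\hat f_h(x_j)-m_h(x_j)|>t\bigr)\le 2\exp\!\left(-\frac{c\,nh^p t^2}{1+t}\right).
\]
Take \(t=M\sqrt{\log n/(nh^p)}\), which tends to \(0\) because \(nh^p/\log n\to\infty\). A union bound over the grid then bounds the failure probability by \(2N_n n^{-cM^2/2}\). This tends to zero for \(M\) large, provided \(\log N_n=O(\log n)\).

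\emph{Interpolation off the grid.} Let \(L_K\) be the Lipschitz constant of \(K\). For \(\|x-x_j\|\le\delta_n\), each summand satisfies \(|K_h(X_i-x)-K_h(X_i-x_j)|\le L_K h^{-p-1}\delta_n\). The same bound holds for the expectation \(m_h\). Therefore
\[
\sup_{x}\bigl|\hat f_h(x)-m_h(x)\bigr|\le \max_j\bigl|\hat f_h(x_j)-m_h(x_j)\bigr|+2L_K h^{-p-1}\delta_n .
\]
Choose \(\delta_n=h^{p+1}n^{-1}\). The interpolation error is then \(O(n^{-1})=o(\sqrt{\log n/(nh^p)})\).

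\emph{Main obstacle.} The delicate step is checking that this choice of \(\delta_n\) still gives \(\log N_n=O(\log n)\). We have \(\log N_n\lesssim p\log n+p(p+1)\log(1/h)\). Since \(nh^p\to\infty\), \(h\ge n^{-1/p}\) eventually, so \(\log(1/h)\le p^{-1}\log n\). Hence \(\log N_n=O(\log n)\), and the union bound closes.

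Combining the grid bound with the interpolation bound yields
\[
\sup_{x\in\mathcal X}\bigl|\hat f_h(x)-m_h(x)\bigr|=O_p\!\left(\sqrt{\frac{\log n}{nh^p}}\right).
\]
The kernel-mass condition in (iv) is not needed for this lemma. It is needed only later, to bound \(m_h\) away from zero.
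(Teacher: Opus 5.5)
Your proof is correct and follows the paper's route: a variance bound with Chebyshev for the pointwise claim, then a $\delta_n$-grid, Bernstein at the grid points, a union bound using $\log N_n=O(\log n)$, and Lipschitz interpolation off the grid. The only difference is your choice $\delta_n=h^{p+1}n^{-1}$, which makes the interpolation error $O(n^{-1})$ and hence negligible; the paper instead takes $\delta_n=h^{p+1}\sqrt{\log n/(nh^p)}$, so its interpolation error is of the same order as the target rate.
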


For later use, define
\[
G_n(x,z)
:=
\frac{1}{n}\sum_{i=1}^n
K_h(X_i-x)
\{\one\{Z_i\le z\}-F_Z(z\mid X_i)\}.
\]

\begin{lemma}[Centered kernel numerator bounds]
\label{lem:kernel_emp_process}
Under Assumption~\ref{ass:kernel_aux_setup}, for each fixed \(x\in\mathcal X\),
\[
\sup_{z\in\mathcal Z}|G_n(x,z)|
=
O_p\!\left((nh^p)^{-1/2}\right).
\]
If additionally \(nh^p/\log n\to\infty\), then
\[
\sup_{x\in\mathcal X,\;z\in\mathcal Z}|G_n(x,z)|
=
O_p\!\left(\sqrt{\frac{\log n}{nh^p}}\right).
\]
\end{lemma}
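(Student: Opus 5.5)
The plan is to treat the pointwise statement and the uniform statement separately, using a variance and Bernstein argument for the first and a grid (covering) argument for the second.

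\emph{Pointwise bound.} Write $\xi_i(z):=K_h(X_i-x)\{\one\{Z_i\le z\}-F_Z(z\mid X_i)\}$. Conditionally on $X_i$, the indicator has mean $F_Z(z\mid X_i)$, so $\E\xi_i(z)=0$. The variance satisfies
\[
\E\xi_i(z)^2\le \E K_h(X_i-x)^2\le C_X h^{-p}\int K^2 .
\]
Hence $\Var(G_n(x,z))\lesssim (nh^p)^{-1}$, and Chebyshev gives the rate for each fixed $z$.

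To pass to $\sup_{z\in\mathcal Z}$, I will use monotonicity in $z$. Split $G_n(x,z)=A_n(z)-B_n(z)$ with
\[
A_n(z)=n^{-1}\sum_i K_h(X_i-x)\one\{Z_i\le z\},\qquad B_n(z)=n^{-1}\sum_i K_h(X_i-x)F_Z(z\mid X_i).
\]
Both are nondecreasing in $z$. Take a grid $z_0<\dots<z_M$ in $\mathcal Z$ chosen so that consecutive values of $B$ differ by at most $\varepsilon$. Since $F_Z$ is Lipschitz in $z$ (density bounded, $C^1$ on a compact set), a uniform spacing $z_{k+1}-z_k\lesssim\varepsilon$ works, with $\E\hat f_h$ bounded. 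The standard bracketing sandwich then gives
\[
\sup_z|G_n|\le \max_k|G_n(x,z_k)|+\hat f_h(x)\cdot O(\varepsilon).
\]
Because the increment of $A$ between grid points is controlled by the increment of $B$ plus the $G_n$ values at the grid endpoints, this bound holds as stated.

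The cleanest route is to pick a fixed, finite $M$ via a chaining/DKW-type argument. My first choice is to view $\{\xi(z)\}$ as a VC class with envelope $K_h$ and apply a maximal inequality (van der Vaart–Wellner Theorem 2.14.1). This yields $\E\sup_z|G_n|\lesssim \sqrt{\E K_h^2/n}\asymp (nh^p)^{-1/2}$ directly, with no logarithmic factor, since the VC entropy integral is finite and the class is indexed only by $z$.

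\emph{Uniform bound.} I will cover $\mathcal X$ by a grid $\mathcal N_n$ with mesh $\delta_n=n^{-a}$, so that $N_n\lesssim \delta_n^{-p}$ and $\log N_n=O(\log n)$, and cover $\mathcal Z$ by a grid of $n^{a}$ points. At each grid pair, Bernstein's inequality applies with summands bounded by $\|K\|_\infty h^{-p}$ and variance $\lesssim h^{-p}$:
\[
\Prob\bigl(|G_n|>t\bigr)\le 2\exp\!\Bigl(-\frac{cnh^pt^2}{1+t}\Bigr).
\]
Taking $t=C\sqrt{\log n/(nh^p)}$, which is $o(1)$ by the condition $nh^p/\log n\to\infty$, and applying a union bound over the polynomially many pairs gives a probability tending to zero for large $C$.

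Off-grid in $x$, the Lipschitz property of $K$ gives $|K_h(X_i-x)-K_h(X_i-x')|\le L h^{-p-1}\delta_n$. Choosing $a$ large makes $h^{-p-1}\delta_n=o(\sqrt{\log n/(nh^p)})$, since $h\ge n^{-1/p}$ up to constants. Off-grid in $z$, the monotone bracketing from the pointwise case applies, now with $\sup_x\hat f_h(x)=O_p(1)$ supplied by Lemma~\ref{lem:kernel_denominator_uniform}.

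\emph{Main obstacle.} The delicate step is the discretization in $z$ for the uniform bound. The indicators are not Lipschitz in $z$, so the argument must lean on monotone bracketing and on the uniform boundedness of $\hat f_h$. I will settle that step first, carefully, before assembling the rest.
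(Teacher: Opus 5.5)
Your plan is sound and takes a different route from the paper. One step, the $z$-bracketing in the uniform bound, relies on a condition that Assumption~\ref{ass:kernel_aux_setup} does not supply, but it can be repaired using the stated assumptions.

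\emph{Pointwise claim.} The uniform-entropy maximal inequality you settle on is valid. With envelope $K_h(\cdot-x)$, the indicator part is a fixed function times a VC class. The part $\{K_h(\cdot-x)F_Z(z\mid\cdot)\}$ is monotone in $z$, so its subgraphs are nested. Hence the difference class has polynomial uniform covering numbers with constants free of $h$, and $\E\sup_z|G_n(x,z)|\lesssim n^{-1/2}\{\E K_h(X-x)^2\}^{1/2}\lesssim(nh^p)^{-1/2}$.

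The paper proves the same bound by hand. It conditions on $X_1,\dots,X_n$ and symmetrizes. It then notes that $z\mapsto\sum_ia_i(x)\varepsilon_i\one\{Z_i\le z\}$ only takes the values of the partial sums of the $Z$-ordered Rademacher sum. Doob's $L^2$ inequality then gives $\E\{\sup_z|G_n(x,z)|\mid X_1,\dots,X_n\}\le 4A_n(x)$, with $A_n(x)^2=n^{-2}\sum_iK_h(X_i-x)^2$. Your version is shorter but cites a black box. The paper's is elementary and yields a \emph{conditional} bound that it reuses in the uniform part.

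The fixed-$M$ bracketing you sketch first would not suffice on its own. It leaves an $O_p(\varepsilon)$ remainder, and refining the grid to $\varepsilon\asymp(nh^p)^{-1/2}$ with a union bound loses a logarithmic factor.

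\emph{Uniform claim.} You discretize both $x$ and $z$ and use unconditional Bernstein plus a union bound. The paper uses only an $x$-grid. At each $x_j$ it combines the bound $4A_n(x_j)$ with a conditional bounded-differences inequality for $\sup_z|G_n(x_j,z)|$; changing $Z_i$ moves this supremum by at most $a_i(x_j)$. It controls $\max_jA_n(x_j)^2=O_p((nh^p)^{-1})$ by Bernstein. As a result it never needs any regularity of $F_Z$ in $z$. The interpolation in $x$ through the Lipschitz kernel is the same in both arguments.

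\emph{The step that needs repair.} The $z$-bracketing error at $x_j$ is $n^{-1}\sum_iK_h(X_i-x_j)\{F_Z(z_{k+1}\mid X_i)-F_Z(z_k\mid X_i)\}$. Bounding it by $\hat f_h(x_j)\,O(z_{k+1}-z_k)$ on a uniformly spaced grid requires $\sup_{x'\in\mathcal X,\,z\in\mathcal Z}f_{Z\mid X}(z\mid x')<\infty$.

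Assumption~\ref{ass:kernel_aux_setup} gives $F_Z(\cdot\mid x')\in\mathcal C^1(\mathcal Z)$ only for each $x'$ separately. So the density is bounded for each $x'$, but not uniformly in $x'$, and the H\"older condition in $x$ does not force uniformity. For example, add to the uniform law on $\mathcal Z$ a smooth bump of mass $\asymp\|x'-x_0\|^\beta$ and width $\asymp\|x'-x_0\|^{2\beta}$. This satisfies every item of the assumption, yet $\sup_zf_{Z\mid X}(z\mid x')\to\infty$ as $x'\to x_0$.

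You can fix this with the H\"older condition instead of a density bound. Take an $r$-net $\{x^{(l)}\}$ of $\mathcal X$ with $C_Zr^\beta\le 1/n$. Let the $z$-grid consist of the endpoints of $\mathcal Z$ together with the $(i/n)$-quantiles, $0\le i\le n$, of every $F_Z(\cdot\mid x^{(l)})$. Then:
\begin{itemize}
\item consecutive grid points satisfy $F_Z(z_{k+1}\mid x')-F_Z(z_k\mid x')\le 1/n+2C_Zr^\beta\le 3/n$ for every $x'\in\mathcal X$;
\item the grid has $O(n^{1+p/\beta})$ points, so your union bound still only costs $O(\log n)$ in the exponent;
\item since $\sup_x\hat f_h(x)=O_p(1)$, the bracketing error is $O_p(1/n)=o\bigl(\sqrt{\log n/(nh^p)}\bigr)$.
\end{itemize}
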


\begin{proposition}[Pointwise CDF rate]
\label{prop:kernel_cdf_local_uniform_aux}
Under Assumption~\ref{ass:kernel_aux_setup}, for each fixed \(x\in\mathcal X\),
\[
\sup_{z\in \mathcal Z}\bigl|\hat F_Z(z\mid x)-F_Z(z\mid x)\bigr|
=
O_p\!\bigl(h^\beta+(n h^p)^{-1/2}\bigr),
\]
where
\[
\hat F_Z(z\mid x):=\sum_{i=1}^n w_i(x;h)\,\one\{Z_i\le z\}.
\]
\end{proposition}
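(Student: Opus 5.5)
Fix \(x\in\mathcal X\). The plan is a ratio decomposition of the Nadaraya--Watson estimator, followed by separate control of a stochastic term and a bias term. Write
\[
\hat F_Z(z\mid x)-F_Z(z\mid x)
=
\frac{G_n(x,z)}{\hat f_h(x)}
+
\frac{B_n(x,z)}{\hat f_h(x)},
\qquad
B_n(x,z):=\frac1n\sum_{i=1}^n K_h(X_i-x)\{F_Z(z\mid X_i)-F_Z(z\mid x)\}.
\]
This identity holds whenever \(\hat f_h(x)>0\), because the weights \(w_i(x;h)\) sum to one. The argument then has three steps: bound the denominator away from zero, bound the centered numerator uniformly in \(z\), and bound the bias numerator uniformly in \(z\).

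\emph{Denominator.} Since \(K\ge0\) and \(f_X\ge c_X\) on \(\mathcal X\), we have
\[
m_h(x)=\int_{\mathcal X}K_h(t-x)f_X(t)\,dt\ge c_X c_{\mathcal X}>0
\]
for all \(h<h_0\), by the kernel-mass condition in Assumption~\ref{ass:kernel_aux_setup}(iv). Lemma~\ref{lem:kernel_denominator_uniform} gives \(\hat f_h(x)-m_h(x)=O_p((nh^p)^{-1/2})=o_p(1)\), using \(nh^p\to\infty\). Hence \(\hat f_h(x)\ge c_Xc_{\mathcal X}/2\) with probability tending to one. On that event, \(1/\hat f_h(x)=O_p(1)\), and this bound does not depend on \(z\).

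\emph{Stochastic term.} Lemma~\ref{lem:kernel_emp_process} gives directly \(\sup_{z\in\mathcal Z}|G_n(x,z)|=O_p((nh^p)^{-1/2})\).

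\emph{Bias term.} Because \(K\) is compactly supported, say in the ball of radius \(R\), every nonzero summand has \(\|X_i-x\|\le Rh\). Assumption~\ref{ass:kernel_aux_setup}(iii) then gives, uniformly in \(z\),
\[
|F_Z(z\mid X_i)-F_Z(z\mid x)|\le C_Z (Rh)^\beta .
\]
Since \(K\ge0\), it follows that \(\sup_z|B_n(x,z)|\le C_ZR^\beta h^\beta\,\hat f_h(x)\). Dividing by \(\hat f_h(x)\), the bias contribution is deterministically at most \(C_ZR^\beta h^\beta\), with no stochastic control needed. Combining the three pieces yields the claimed rate \(O_p(h^\beta+(nh^p)^{-1/2})\).

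\emph{Main obstacle.} The key step is the uniformity in \(z\) of the centered term. Here it is absorbed by Lemma~\ref{lem:kernel_emp_process}. If that lemma had to be proven, the idea would be to use that \(\one\{Z_i\le z\}\) is monotone in \(z\) and that \(F_Z\) is continuous. One would put a grid of \(O((nh^p)^{1/2})\) points in \(\mathcal Z\) with \(F_Z\)-spacing \((nh^p)^{-1/2}\) and use a bracketing argument. A chaining or maximal inequality for this VC class of indicators then avoids any \(\log\) loss at fixed \(x\). Everything else above is routine.
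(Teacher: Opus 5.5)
Your proposal is correct and follows the paper's proof essentially line for line. Both split the error into the centered term $G_n(x,z)/\hat f_h(x)$ and a kernel-localized bias term, bounded deterministically by $C_Z R_K^\beta h^\beta$ via the H\"older condition and compact support. Both bound the denominator below using the kernel-mass condition together with Lemma~\ref{lem:kernel_denominator_uniform}, and both take uniformity in $z$ directly from Lemma~\ref{lem:kernel_emp_process}. The bracketing/chaining sketch you give for that lemma is not needed, since the paper proves it beforehand via symmetrization and Doob's maximal inequality on ordered partial sums.
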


\begin{proposition}[Pointwise quantile rate]
\label{prop:kernel_quantile_rate_aux}
Under Assumption~\ref{ass:kernel_aux_setup}, for each fixed \(x\in\mathcal X\), let
\[
\hat m_\tau(x):=\inf\{z:\hat F_Z(z\mid x)\ge \tau\},
\]
where \(\hat F_Z(\cdot\mid x)\) is the kernel conditional CDF estimator in Proposition~\ref{prop:kernel_cdf_local_uniform_aux}. Then
\[
\hat m_\tau(x)-m_\tau(x)
=
O_p\!\bigl(h^\beta+(n h^p)^{-1/2}\bigr).
\]
Consequently, with the MSE-optimal bandwidth
\[
h^\star \asymp n^{-1/(2\beta+p)},
\]
we have
\[
\hat m_\tau(x)-m_\tau(x)
=
O_p\!\left(n^{-\beta/(2\beta+p)}\right).
\]
\end{proposition}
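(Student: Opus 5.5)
The plan is a standard inversion argument. We use Proposition~\ref{prop:kernel_cdf_local_uniform_aux} to control $\hat F_Z(\cdot\mid x)$ uniformly in $z$, and then turn this CDF bound into a quantile bound using the density lower bound $c_Z$.

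Fix $x$ and write
\[
\varepsilon_n:=\sup_{z\in\mathcal Z}|\hat F_Z(z\mid x)-F_Z(z\mid x)|=O_p(r_n),
\qquad
r_n:=h^\beta+(nh^p)^{-1/2}.
\]
Since $m_\tau(x)\in\operatorname{int}(\mathcal Z)$ and $F_Z(\cdot\mid x)$ has density at least $c_Z$ on $\mathcal Z$, there is a $\delta_x>0$ such that $[m_\tau(x)-\delta_x,m_\tau(x)+\delta_x]\subset\mathcal Z$. For $0<t\le\delta_x$ the mean value theorem gives
\[
F_Z(m_\tau(x)+t\mid x)\ge\tau+c_Z t,
\qquad
F_Z(m_\tau(x)-t\mid x)\le\tau-c_Z t .
\]

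Now take $t_n:=2\varepsilon_n/c_Z$. Because $r_n\to0$, we have $t_n\le\delta_x$ with probability tending to one; on the event $\varepsilon_n=0$ we use an arbitrarily small positive $t$ instead. At the upper point,
\[
\hat F_Z(m_\tau(x)+t_n\mid x)\ge \tau+c_Zt_n-\varepsilon_n=\tau+\varepsilon_n\ge\tau,
\]
so by definition of the infimum $\hat m_\tau(x)\le m_\tau(x)+t_n$.

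For the lower side we use that $\hat F_Z(\cdot\mid x)$ is a weighted empirical CDF with nonnegative weights, hence nondecreasing. For every $z\le m_\tau(x)-t_n$ with $z\in\mathcal Z$,
\[
\hat F_Z(z\mid x)\le\hat F_Z(m_\tau(x)-t_n\mid x)\le\tau-c_Zt_n+\varepsilon_n=\tau-\varepsilon_n<\tau
\]
when $\varepsilon_n>0$. For $z$ below $\mathcal Z$ we need a separate argument: the support condition means that, almost surely, all $Z_i$ lie in $\mathcal Z$, so $\hat F_Z(z\mid x)=0<\tau$ there. Together these give $\hat m_\tau(x)\ge m_\tau(x)-t_n$. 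The two bounds yield
\[
|\hat m_\tau(x)-m_\tau(x)|\le 2\varepsilon_n/c_Z=O_p(r_n).
\]

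The bandwidth consequence is routine. With $h\asymp n^{-1/(2\beta+p)}$ we get $h^\beta\asymp (nh^p)^{-1/2}\asymp n^{-\beta/(2\beta+p)}$, which balances the bias and stochastic terms and gives the stated rate. The main care point is the lower bound. It needs the monotonicity of the estimator, and it needs the behaviour of $\hat F_Z$ outside $\mathcal Z$, which the sup over $\mathcal Z$ does not control and which is handled by the support condition. Everything else is inherited directly from Proposition~\ref{prop:kernel_cdf_local_uniform_aux}.
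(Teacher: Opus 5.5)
Your proof is correct and uses essentially the same inversion argument as the paper. The paper brackets $\hat m_\tau(x)$ between the deterministic points $m_\tau(x)\pm M\bigl(h^\beta+(nh^p)^{-1/2}\bigr)$ and takes $M$ large; you instead use the data-dependent half-width $2\varepsilon_n/c_Z$, which gives the slightly cleaner bound $|\hat m_\tau(x)-m_\tau(x)|\le 2\varepsilon_n/c_Z$ on an event of probability tending to one. Your separate treatment of $z$ below $\mathcal Z$ is harmless but unnecessary: $\hat F_Z(\cdot\mid x)$ is nondecreasing on all of $\mathbb R$, so $\hat F_Z(z\mid x)\le \hat F_Z(m_\tau(x)-t_n\mid x)<\tau$ already holds for every $z\le m_\tau(x)-t_n$.
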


\begin{proposition}[Uniform CDF rate]
\label{prop:kernel_cdf_uniform_aux}
Under Assumption~\ref{ass:kernel_aux_setup}, if additionally \(nh^p/\log n\to\infty\), then
\[
\sup_{x\in\mathcal X,\; z\in \mathcal Z}
\bigl|\hat F_Z(z\mid x)-F_Z(z\mid x)\bigr|
=
O_p\!\left(h^\beta+\sqrt{\frac{\log n}{n h^p}}\right).
\]
\end{proposition}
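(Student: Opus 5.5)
The plan is to use the ratio form of the estimator,
\[
\hat F_Z(z\mid x)=\frac{\hat g_h(x,z)}{\hat f_h(x)},\qquad
\hat g_h(x,z):=\frac1n\sum_{i=1}^n K_h(X_i-x)\one\{Z_i\le z\}.
\]
This gives the exact identity
\[
\hat F_Z(z\mid x)-F_Z(z\mid x)
=\frac{G_n(x,z)+B_n(x,z)}{\hat f_h(x)},
\qquad
B_n(x,z):=\frac1n\sum_{i=1}^n K_h(X_i-x)\{F_Z(z\mid X_i)-F_Z(z\mid x)\}.
\]
It follows by writing \(\one\{Z_i\le z\}-F_Z(z\mid x)=\{\one\{Z_i\le z\}-F_Z(z\mid X_i)\}+\{F_Z(z\mid X_i)-F_Z(z\mid x)\}\) and using \(\sum_i w_i(x;h)=1\). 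The first term is the centered numerator controlled uniformly by Lemma~\ref{lem:kernel_emp_process}, and the second is a deterministic-type bias term. I would bound the numerator terms and the denominator separately, uniformly in \((x,z)\).

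\emph{Bias.} Since \(K\) has compact support, say \(K(u)=0\) for \(\|u\|>R\), every nonzero summand has \(\|X_i-x\|\le Rh\). By Assumption~\ref{ass:kernel_aux_setup}(iii) with \(\beta\le1\), for each such summand
\[
|F_Z(z\mid X_i)-F_Z(z\mid x)|\le C_Z R^\beta h^\beta,
\]
uniformly in \(z\in\mathcal Z\) and \(x\in\mathcal X\). As \(K\ge0\), this gives
\[
|B_n(x,z)|\le C_ZR^\beta h^\beta\,\hat f_h(x).
\]
After dividing by \(\hat f_h(x)\), the bias contributes at most \(C_ZR^\beta h^\beta\) uniformly. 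No denominator control is needed for this piece, and the Hölder constant is uniform in \(z\), so this step requires no further work.

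\emph{Stochastic term and denominator.} Lemma~\ref{lem:kernel_emp_process} gives
\[
\sup_{x,z}|G_n(x,z)|=O_p\!\left(\sqrt{\log n/(nh^p)}\right).
\]
For the denominator, I would use \(m_h(x)=\int_{\mathcal X}K_h(t-x)f_X(t)\,dt\ge c_X c_{\mathcal X}\) for \(h<h_0\), which follows from Assumption~\ref{ass:kernel_aux_setup}(iv). Lemma~\ref{lem:kernel_denominator_uniform} gives \(\sup_x|\hat f_h(x)-m_h(x)|=O_p(\sqrt{\log n/(nh^p)})=o_p(1)\), using \(nh^p/\log n\to\infty\). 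Hence, with probability tending to one,
\[
\inf_x\hat f_h(x)\ge c_Xc_{\mathcal X}/2,
\]
and on that event
\[
\sup_{x,z}\frac{|G_n(x,z)|}{\hat f_h(x)}\le \frac{2}{c_Xc_{\mathcal X}}\sup_{x,z}|G_n(x,z)|=O_p\!\left(\sqrt{\log n/(nh^p)}\right).
\]
Adding the bias and stochastic bounds yields the claimed rate \(O_p(h^\beta+\sqrt{\log n/(nh^p)})\).

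The main obstacle is the uniform lower bound on \(\hat f_h\), especially near the boundary of \(\mathcal X\), where fewer design points fall in the kernel window. This is exactly why the kernel-mass condition in Assumption~\ref{ass:kernel_aux_setup}(iv) is imposed. Without it, the denominator could degenerate at boundary points, and the division step would fail. The remaining difficulty, the uniform empirical-process bound over \((x,z)\), is already supplied by Lemma~\ref{lem:kernel_emp_process}, so it can be taken as given here.
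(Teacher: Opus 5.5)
Your proposal is correct and follows essentially the same route as the paper. Both arguments split the error into the centered term $G_n/\hat f_h$ and a bias term, and both bound the bias by $C_Z R^\beta h^\beta$ using compact support and the Hölder condition. Both then lower-bound $\inf_x \hat f_h(x)$ by $c_Xc_{\mathcal X}/2$ with probability tending to one via Lemma~\ref{lem:kernel_denominator_uniform} and the kernel-mass condition, and conclude with Lemma~\ref{lem:kernel_emp_process}. The only cosmetic difference is that you keep the bias unnormalized and bound it by $C_ZR^\beta h^\beta\,\hat f_h(x)$, whereas the paper writes it with the normalized weights $w_i(x;h)$; these are equivalent.
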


\subsubsection{Proof of \Cref{lem:kernel_denominator_uniform}}

\begin{proof}
For a fixed \(x\), 
\[
\hat f_h(x)-m_h(x)
=
\frac{1}{n}\sum_{i=1}^n
\{K_h(X_i-x)-\E K_h(X_i-x)\}.
\]
Since \(K\) is bounded and compactly supported,
\[
\E K_h(X_i-x)^2 \le C h^{-p}.
\]
Thus
\[
\Var\{\hat f_h(x)\}=O((nh^p)^{-1}),
\]
which gives the pointwise bound by Chebyshev's inequality.

For the uniform bound, let
\[
r_n:=\sqrt{\frac{\log n}{nh^p}},
\qquad
\delta_n:=h^{p+1}r_n.
\]
Cover \(\mathcal X\) by a finite grid
\(\mathcal N_n=\{x_1,\dots,x_{N_n}\}\) with mesh width \(\delta_n\). 
The covering number satisfies \(N_n\le C\delta_n^{-p}\). Since
\(nh^p/\log n\to\infty\), for all sufficiently large \(n\),
\(h^p\ge \log n/n\), and hence \(\log(1/h)=O(\log n)\). Also, because
\(h\to0\), we have \(h^p\le 1\) for all sufficiently large \(n\), so
\[
\sqrt{\frac{\log n}{nh^p}}
\ge
\sqrt{\frac{\log n}{n}},
\]
which implies \(\log(1/r_n)=O(\log n)\). Therefore
\[
\log N_n
\le C+p\log(1/\delta_n)
=
O(\log n).
\]
For each grid point \(x_j\), define
\[
Y_i(x_j):=K_h(X_i-x_j)-\E K_h(X_i-x_j).
\]
Then \(\E [Y_i(x_j)]=0\),
\[
|Y_i(x_j)|\le C h^{-p},
\qquad
\Var\{Y_i(x_j)\}\le C h^{-p}.
\]
By Bernstein's inequality, for any \(t>0\),
\[
\Prob\left(
\left|\hat f_h(x_j)-m_h(x_j)\right|>t
\right)
\le
2\exp\left[
-\frac{c n t^2}{h^{-p}+h^{-p}t}
\right].
\]
Taking $t=C_1\sqrt{ (\log n) / (n h^p)}$,
and using \(nh^p/\log n\to\infty\), the second Bernstein term is negligible.
Thus, for \(C_1\) sufficiently large,
\[
\Prob\left(
\left|\hat f_h(x_j)-m_h(x_j)\right|>C_1\sqrt{\frac{\log n}{n h^p}}
\right)
\le
2\exp(-C_2\log n).
\]
Taking a union bound over \(j=1,\dots,N_n\), and choosing \(C_1\) large enough
so that \(C_2\log n-\log N_n\to\infty\), gives
\[
\max_{1\le j\le N_n}
\left|\hat f_h(x_j)-m_h(x_j)\right|
=
O_p\!\left(\sqrt{\frac{\log n}{nh^p}}\right).
\]
Since \(K\) is Lipschitz,
\[
|\hat f_h(x)-\hat f_h(x_j)|
\le
C h^{-p-1}\|x-x_j\|,
\]
and, by taking expectations,
\[
|m_h(x)-m_h(x_j)|
\le
C h^{-p-1}\|x-x_j\|.
\]
Thus both terms vary by at most \(C h^{-p-1}\delta_n\) between \(x\) and its nearest grid point.
By the choice \(\delta_n=h^{p+1}r_n\), this interpolation error is bounded by
\(C h^{-p-1}\delta_n=C r_n\), where
\(r_n=\sqrt{\log n/(nh^p)}\).  Therefore
\[
\sup_{x\in\mathcal X}
|\hat f_h(x)-m_h(x)|
=
O_p\!\left(\sqrt{\frac{\log n}{nh^p}}\right).
\]
\end{proof}

\subsubsection{Proof of \Cref{lem:kernel_emp_process}}

\begin{proof}
We prove the two claims separately.

\medskip
\noindent
Fix \(x\in\mathcal X\), and condition on \(X_1,\dots,X_n\). Write
\[
a_i(x):=\frac{1}{n}K_h(X_i-x),
\qquad
A_n(x)^2:=\sum_{i=1}^n a_i(x)^2.
\]
For each \(z\in\mathcal Z\), define
\[
\xi_i(z):=\one\{Z_i\le z\}-F_Z(z\mid X_i).
\]
Conditional on \(X_1,\dots,X_n\), the variables \(\xi_i(z)\) are independent and mean zero for every fixed \(z\).

We first bound the conditional expectation of
\[
S_n(x):=\sup_{z\in\mathcal Z}\left|\sum_{i=1}^n a_i(x)\xi_i(z)\right|.
\]
Let \(Z_1',\dots,Z_n'\) be an independent copy of \(Z_1,\dots,Z_n\) conditional on \(X_1,\dots,X_n\), and let
\(\varepsilon_1,\dots,\varepsilon_n\) be independent Rademacher random variables, independent of all other variables. By symmetrization,
\begin{align*}
\E\{S_n(x)\mid X_1,\dots,X_n\}
&\le
\E\left[
\sup_{z\in\mathcal Z}
\left|
\sum_{i=1}^n a_i(x)
\{\one\{Z_i\le z\}-\one\{Z_i'\le z\}\}
\right|
\,\middle|\, X_1,\dots,X_n
\right] \\
&\le
2\E\left[
\sup_{z\in\mathcal Z}
\left|
\sum_{i=1}^n a_i(x)\varepsilon_i\one\{Z_i\le z\}
\right|
\,\middle|\, X_1,\dots,X_n
\right].
\end{align*}
Now condition further on \(Z_1,\dots,Z_n\). Order the observations so that
\[
Z_{(1)}\le Z_{(2)}\le \cdots \le Z_{(n)},
\]
and let \(a_{(i)}(x)\) be the corresponding reordered weights. As \(z\) varies, the sum
\[
\sum_{i=1}^n a_i(x)\varepsilon_i\one\{Z_i\le z\}
\]
takes values among the partial sums
\[
M_m:=\sum_{i=1}^m a_{(i)}(x)\varepsilon_{(i)},
\qquad m=0,1,\dots,n.
\]
The process \((M_m)_{m=0}^n\) is a martingale with respect to the filtration generated by
\(\varepsilon_{(1)},\dots,\varepsilon_{(m)}\). By Doob's \(L^2\) maximal inequality,
\[
\E_\varepsilon\left[
\max_{0\le m\le n}|M_m|^2
\,\middle|\, Z_1,\dots,Z_n,X_1,\dots,X_n
\right]
\le
4\,\E_\varepsilon\left[
|M_n|^2
\,\middle|\, Z_1,\dots,Z_n,X_1,\dots,X_n
\right].
\]
Since the Rademacher variables are independent and mean zero,
\[
\E_\varepsilon |M_n|^2
=
\sum_{i=1}^n a_i(x)^2
=
A_n(x)^2.
\]
Therefore,
\[
\E_\varepsilon\left[
\max_{0\le m\le n}|M_m|
\,\middle|\, Z_1,\dots,Z_n,X_1,\dots,X_n
\right]
\le
2A_n(x).
\]
Combining the previous displays gives
\[
\E\{S_n(x)\mid X_1,\dots,X_n\}
\le
4A_n(x).
\]

It remains to control \(A_n(x)\). Since \(K\) is bounded and compactly supported, and since \(f_X\le C_X\),
\[
\E\{K_h(X_i-x)^2\}
=
\int_{\mathcal X} h^{-2p}K((t-x)/h)^2 f_X(t)\,dt
\le
C h^{-p}
\]
for a finite constant \(C\). Hence
\[
\E\{A_n(x)^2\}
=
\frac{1}{n^2}\sum_{i=1}^n
\E\{K_h(X_i-x)^2\}
\le
\frac{C}{nh^p}.
\]
Moreover,
\[
\E A_n(x)
\le
\{\E A_n(x)^2\}^{1/2}
\le
C (nh^p)^{-1/2}.
\]
Taking expectations in
\(\E\{S_n(x)\mid X_1,\dots,X_n\}\le 4A_n(x)\) gives
\[
\E S_n(x)\le C (nh^p)^{-1/2}.
\]
Markov's inequality then yields
\[
S_n(x)=O_p\!\left((nh^p)^{-1/2}\right),
\]
which proves the fixed-\(x\) claim.

\medskip
\noindent
Let
\[
r_n:=\sqrt{\frac{\log n}{nh^p}},
\qquad
\delta_n:=h^{p+1}r_n.
\]
Because \(nh^p/\log n\to\infty\), we have \(r_n\to0\). Since \(\mathcal X\) is compact, there exists a finite grid
\[
\mathcal N_n=\{x_1,\dots,x_{N_n}\}\subset\mathcal X
\]
such that every \(x\in\mathcal X\) is within Euclidean distance \(\delta_n\)
of some grid point \(x_j\). The covering number satisfies
\[
N_n\le C\delta_n^{-p}.
\]
Since \(nh^p/\log n\to\infty\), for all sufficiently large \(n\),
\(h^p\ge \log n/n\), and hence \(\log(1/h)=O(\log n)\). Also, because
\(h\to0\), we have \(h^p\le 1\) for all sufficiently large \(n\), so
\[
r_n=\sqrt{\frac{\log n}{nh^p}}
\ge
\sqrt{\frac{\log n}{n}},
\]
which implies \(\log(1/r_n)=O(\log n)\). Therefore
\[
\log N_n
\le C+p\log(1/\delta_n)
=
O(\log n).
\]
For any fixed grid point \(x_j\), set
\[
V_i(x_j):=K_h(X_i-x_j)^2.
\]
As above,
\[
\E V_i(x_j)\le C h^{-p}.
\]
Moreover, since \(K\) is bounded and compactly supported,
\[
|V_i(x_j)|\le C h^{-2p},
\qquad
\E V_i(x_j)^2\le C h^{-3p}.
\]
Bernstein's inequality therefore gives, for a sufficiently large constant \(C_1\),
\[
\Prob\left\{
\frac{1}{n}\sum_{i=1}^n V_i(x_j)>C_1h^{-p}
\right\}
\le
\exp(-cnh^p)
\]
for some constant \(c>0\). Taking a union bound over \(j=1,\dots,N_n\), and using
\[
\log N_n=O(\log n),
\qquad
nh^p/\log n\to\infty,
\]
we obtain
\[
\max_{1\le j\le N_n}
\frac{1}{n}\sum_{i=1}^n K_h(X_i-x_j)^2
=
O_p(h^{-p}).
\]
Equivalently,
\[
\max_{1\le j\le N_n} A_n(x_j)^2
=
\max_{1\le j\le N_n}
\frac{1}{n^2}\sum_{i=1}^n K_h(X_i-x_j)^2
=
O_p\!\left(\frac{1}{nh^p}\right).
\]

Next define
\[
S_n(x_j):=\sup_{z\in\mathcal Z}|G_n(x_j,z)|.
\]
From the fixed-\(x\) argument above, conditional on \(X_1,\dots,X_n\),
\[
\E\{S_n(x_j)\mid X_1,\dots,X_n\}
\le
4A_n(x_j).
\]
Also, changing one observation \(Z_i\) while keeping all other variables fixed changes
\(S_n(x_j)\) by at most \(|a_i(x_j)|\). Therefore, by the bounded-difference inequality, conditional on \(X_1,\dots,X_n\),
\[
\Prob\left\{
S_n(x_j)
>
4A_n(x_j)+t
\,\middle|\, X_1,\dots,X_n
\right\}
\le
\exp\left(
-\frac{t^2}{2A_n(x_j)^2}
\right).
\]
On the event
\[
\max_{1\le j\le N_n} A_n(x_j)^2
\le
\frac{C}{nh^p},
\]
choose \(t=C_2 r_n\) with \(C_2\) large enough. Then
\[
\Prob\left\{
\max_{1\le j\le N_n}S_n(x_j)>C_3r_n
\,\middle|\, X_1,\dots,X_n
\right\}
\le
N_n \exp(-C_4\log n),
\]
which tends to zero for a sufficiently large constant \(C_4\), because
\(\log N_n=O(\log n)\). Hence
\[
\max_{1\le j\le N_n}S_n(x_j)
=
O_p(r_n).
\]

It remains to pass from the grid to all \(x\in\mathcal X\). Let \(x\in\mathcal X\), and choose \(x_j\in\mathcal N_n\) such that
\(\|x-x_j\|\le\delta_n\). Since \(K\) is Lipschitz,
\[
|K_h(X_i-x)-K_h(X_i-x_j)|
\le
L_K h^{-p-1}\|x-x_j\|
\le
L_K h^{-p-1}\delta_n.
\]
Also,
\[
\left|
\one\{Z_i\le z\}-F_Z(z\mid X_i)
\right|
\le 1.
\]
Therefore, uniformly over \(z\in\mathcal Z\),
\begin{align*}
|G_n(x,z)-G_n(x_j,z)|
&\le
\frac{1}{n}\sum_{i=1}^n
|K_h(X_i-x)-K_h(X_i-x_j)| \\
&\le
L_K h^{-p-1}\delta_n
=
L_K r_n.
\end{align*}
Combining this grid approximation with
\[
\max_{1\le j\le N_n}S_n(x_j)=O_p(r_n)
\]
gives
\[
\sup_{x\in\mathcal X,\;z\in\mathcal Z}|G_n(x,z)|
=
O_p(r_n)
=
O_p\!\left(\sqrt{\frac{\log n}{nh^p}}\right).
\]
This proves the uniform claim.
\end{proof}

\subsubsection{Proof of \Cref{prop:kernel_cdf_local_uniform_aux}}

\begin{proof}
Let \(R_K<\infty\) be such that \(K(v)=0\) whenever \(\|v\|>R_K\). Write
\[
\hat f_h(x):=\frac{1}{n}\sum_{i=1}^n K_h(X_i-x),
\qquad
m_h(x):=\E\{\hat f_h(x)\}.
\]
By the design-density and kernel-mass conditions,
\[
m_h(x)
=
\int_{\mathcal X} K_h(t-x)f_X(t)\,dt
\ge c_Xc_{\mathcal X}
\]
for all sufficiently small \(h\). By Lemma~\ref{lem:kernel_denominator_uniform},
\[
\hat f_h(x)-m_h(x)=O_p((nh^p)^{-1/2})=o_p(1).
\]
Hence \(\hat f_h(x)\) is bounded away from zero with probability tending to one.

Decompose
\[
\hat F_Z(z\mid x)-F_Z(z\mid x)
=
S_n(z,x)+B_n(z,x),
\]
where
\[
S_n(z,x)
:=
\sum_{i=1}^n w_i(x;h)
\{\one\{Z_i\le z\}-F_Z(z\mid X_i)\},
\]
and
\[
B_n(z,x)
:=
\sum_{i=1}^n w_i(x;h)
\{F_Z(z\mid X_i)-F_Z(z\mid x)\}.
\]
If \(w_i(x;h)>0\), then \(\|X_i-x\|\le R_Kh\). Therefore, by the H\"older condition on \(x'\mapsto F_Z(z\mid x')\),
\[
\sup_{z\in\mathcal Z}|B_n(z,x)|
\le
C_Z R_K^\beta h^\beta.
\]
Since \(\hat f_h(x)\) is bounded away from zero with probability tending to one, and since
\[
S_n(z,x)=\frac{1}{\hat f_h(x)}G_n(x,z),
\]
Lemma~\ref{lem:kernel_emp_process} gives
\[
\sup_{z\in\mathcal Z}|S_n(z,x)|
\le
\frac{1}{\hat f_h(x)}
\sup_{z\in\mathcal Z}|G_n(x,z)|
=
O_p\bigl((nh^p)^{-1/2}\bigr).
\]
Combining the bounds on \(S_n\) and \(B_n\) yields
\[
\sup_{z\in\mathcal Z}
\bigl|\hat F_Z(z\mid x)-F_Z(z\mid x)\bigr|
=
O_p\bigl(h^\beta+(nh^p)^{-1/2}\bigr).
\]
\end{proof}

\subsubsection{Proof of \Cref{prop:kernel_quantile_rate_aux}}

\begin{proof}
Let
\[
a_n:=h^\beta+(n h^p)^{-1/2}.
\]
By continuity of \(f_{Z\mid X}(z\mid x)\) and the lower bound at \(m_\tau(x)\), there exist \(\delta_0>0\) and \(c_0>0\) such that
\[
f_{Z\mid X}(z\mid x)\ge c_0
\qquad
\text{for all } |z-m_\tau(x)|\le \delta_0.
\]
Fix \(M>0\), and define
\[
m_-(x):=m_\tau(x)-M a_n,
\qquad
m_+(x):=m_\tau(x)+M a_n.
\]
For \(n\) large enough, both points lie in \(\mathcal Z\cap[m_\tau(x)-\delta_0,m_\tau(x)+\delta_0]\). By the mean value theorem,
\[
F_Z(m_+(x)\mid x)-\tau
=
f_{Z\mid X}(\bar m_+(x)\mid x)\,M a_n
\ge
c_0 M a_n,
\]
for some \(\bar m_+(x)\) between \(m_\tau(x)\) and \(m_+(x)\). Similarly,
\[
\tau-F_Z(m_-(x)\mid x)\ge c_0 M a_n.
\]

By Proposition~\ref{prop:kernel_cdf_local_uniform_aux},
\[
\sup_{z\in\mathcal Z}\bigl|\hat F_Z(z\mid x)-F_Z(z\mid x)\bigr|
=
O_p(a_n).
\]
Hence, for \(M\) sufficiently large,
\[
\hat F_Z(m_+(x)\mid x)>\tau,
\qquad
\hat F_Z(m_-(x)\mid x)<\tau,
\]
with probability tending to one. Since \(\hat m_\tau(x)\) is defined by inversion of \(\hat F_Z(\cdot\mid x)\), it follows that
\[
m_-(x)\le \hat m_\tau(x)\le m_+(x)
\]
with probability tending to one. Therefore
\[
\hat m_\tau(x)-m_\tau(x)
=
O_p(a_n)
=
O_p\!\bigl(h^\beta+(n h^p)^{-1/2}\bigr).
\]

For the second claim, the leading mean squared error is
\[
h^{2\beta}+(n h^p)^{-1}.
\]
Balancing squared bias and variance gives
\[
h^\star \asymp n^{-1/(2\beta+p)}.
\]
Substituting this into the previous rate yields
\[
\hat m_\tau(x)-m_\tau(x)
=
O_p\!\left(n^{-\beta/(2\beta+p)}\right).
\]
\end{proof}

\subsubsection{Proof of \Cref{prop:kernel_cdf_uniform_aux}}

\begin{proof}
Let \(R_K<\infty\) be such that \(K(v)=0\) whenever \(\|v\|>R_K\). Write
\[
\hat f_h(x):=\frac{1}{n}\sum_{i=1}^n K_h(X_i-x),
\qquad
m_h(x):=\E\{\hat f_h(x)\}.
\]

By Lemma~\ref{lem:kernel_denominator_uniform},
\[
\sup_{x\in\mathcal X}|\hat f_h(x)-m_h(x)|
=
O_p\!\left(\sqrt{\frac{\log n}{nh^p}}\right).
\]
The design-density and kernel-mass conditions imply
\[
\inf_{x\in\mathcal X}m_h(x)\ge c_Xc_{\mathcal X}
\]
for all sufficiently small \(h\). Since \(nh^p/\log n\to\infty\), it follows that
\[
\inf_{x\in\mathcal X}\hat f_h(x)
\ge c_Xc_{\mathcal X}/2
\]
with probability tending to one.

For every \(x\in\mathcal X\), decompose
\[
\hat F_Z(z\mid x)-F_Z(z\mid x)
=
S_n(z,x)+B_n(z,x),
\]
where
\[
S_n(z,x)
:=
\sum_{i=1}^n w_i(x;h)
\{\one\{Z_i\le z\}-F_Z(z\mid X_i)\},
\]
and
\[
B_n(z,x)
:=
\sum_{i=1}^n w_i(x;h)
\{F_Z(z\mid X_i)-F_Z(z\mid x)\}.
\]
If \(w_i(x;h)>0\), then \(\|X_i-x\|\le R_Kh\). Therefore,
\[
\sup_{x\in\mathcal X,\;z\in\mathcal Z}|B_n(z,x)|
\le
C_ZR_K^\beta h^\beta.
\]
For the stochastic term,
\[
S_n(z,x)=\frac{1}{\hat f_h(x)}G_n(x,z).
\]
Since \(\inf_{x\in\mathcal X}\hat f_h(x)\) is bounded away from zero with
probability tending to one, Lemma~\ref{lem:kernel_emp_process} gives
\[
\sup_{x\in\mathcal X,\;z\in\mathcal Z}|S_n(z,x)|
\le
\left(\inf_{x\in\mathcal X}\hat f_h(x)\right)^{-1}
\sup_{x\in\mathcal X,\;z\in\mathcal Z}|G_n(x,z)|
=
O_p\!\left(\sqrt{\frac{\log n}{nh^p}}\right).
\]
Combining the bounds on \(S_n\) and \(B_n\) gives
\[
\sup_{x\in\mathcal X,\; z\in \mathcal Z}
\bigl|\hat F_Z(z\mid x)-F_Z(z\mid x)\bigr|
=
O_p\!\left(h^\beta+\sqrt{\frac{\log n}{nh^p}}\right).
\]
\end{proof}

\subsection{Wrapped response distribution}
\label{app:generated_pseudo}

The estimator \(\hat u_\tau(x)\) is constructed from the generated pseudo-responses
\[
\hat U_i:=\hat F^L(Y_i^H\mid X_i^H),
\]
rather than the oracle responses
\[
U_i:=F^L(Y_i^H\mid X_i^H).
\]
The following lemma controls the effect of this perturbation.

Define the oracle weighted CDF
\[
\bar F^U(u\mid x):=\sum_{i=1}^{n_H} w_i^U(x)\,\one\{U_i\le u\},
\]
and the corresponding oracle quantile estimator
\[
\bar u_\tau(x):=\inf\{u:\bar F^U(u\mid x)\ge \tau\}.
\]
Also define
\[
\hat F^U(u\mid x):=\sum_{i=1}^{n_H} w_i^U(x)\,\one\{\hat U_i\le u\},
\qquad
\hat u_\tau(x):=\inf\{u:\hat F^U(u\mid x)\ge \tau\}.
\]

\begin{lemma}
\label{lem:u_generated}
If
\[
\sup_{x\in\mathcal X,\; y\in \mathcal Y}
\bigl|\hat F^L(y\mid x)-F^L(y\mid x)\bigr|
=
O_p\!\left(
 h_L^{\beta_L}
 +
 \sqrt{\frac{\log n_L}{n_L h_L^p}}
\right),
\]
then
\[
\hat u_\tau(x)-\bar u_\tau(x)
=
O_p\!\left(
 h_L^{\beta_L}
 +
 \sqrt{\frac{\log n_L}{n_L h_L^p}}
\right).
\]
Consequently, if
\[
\bar u_\tau(x)-u_\tau(x)=O_p(s_n),
\]
then
\[
\hat u_\tau(x)-u_\tau(x)
=
O_p\!\left(
 s_n
 +
 h_L^{\beta_L}
 +
 \sqrt{\frac{\log n_L}{n_L h_L^p}}
\right).
\]
\end{lemma}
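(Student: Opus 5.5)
The plan is a sandwich argument on the probability scale. Let
\[
\varepsilon_n:=\sup_{x\in\mathcal X,\,y\in\mathcal Y}\bigl|\hat F^L(y\mid x)-F^L(y\mid x)\bigr|,
\]
which is \(O_p(h_L^{\beta_L}+\sqrt{\log n_L/(n_Lh_L^p)})\) by hypothesis. By Assumption~\ref{ass:global_locality}(ii), every \(Y_i^H\) lies in \(\mathcal Y\). Hence, deterministically on the event defining \(\varepsilon_n\),
\[
|\hat U_i-U_i|\le\varepsilon_n\quad\text{for all } i=1,\dots,n_H,
\]
since \(\hat U_i-U_i=\hat F^L(Y_i^H\mid X_i^H)-F^L(Y_i^H\mid X_i^H)\) is evaluated at a point of \(\mathcal X\times\mathcal Y\). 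The uniformity in \(x\) is what lets this single bound cover all HF design points simultaneously. The crucial observation is that \(\hat F^U\) and \(\bar F^U\) use the \emph{same} kernel weights \(w_i^U(x)\), which are nonnegative and sum to one. The perturbation therefore acts only through the indicators, and we get the monotone sandwich
\[
\bar F^U(u-\varepsilon_n\mid x)\le \hat F^U(u\mid x)\le \bar F^U(u+\varepsilon_n\mid x)\qquad\text{for all }u.
\]
This follows because \(U_i\le u-\varepsilon_n\) implies \(\hat U_i\le u\), and \(\hat U_i\le u\) implies \(U_i\le u+\varepsilon_n\).

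Next, I will invert the sandwich through the generalized-inverse definitions of \(\hat u_\tau\) and \(\bar u_\tau\), exactly as in Step~2 of the proof of Theorem~\ref{thm:wrapper_nonparam}. At \(u=\bar u_\tau(x)+\varepsilon_n\) we have \(\hat F^U(u\mid x)\ge\bar F^U(\bar u_\tau(x)\mid x)\ge\tau\), so \(\hat u_\tau(x)\le\bar u_\tau(x)+\varepsilon_n\). For any \(u<\bar u_\tau(x)-\varepsilon_n\), we have \(\hat F^U(u\mid x)\le\bar F^U(u+\varepsilon_n\mid x)<\tau\), so \(\hat u_\tau(x)\ge\bar u_\tau(x)-\varepsilon_n\). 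Together these give
\[
|\hat u_\tau(x)-\bar u_\tau(x)|\le\varepsilon_n,
\]
which yields the first claim with no further probabilistic work. The bound is purely deterministic given \(\varepsilon_n\), so the lack of independence between \(\hat F^L\) and the HF sample plays no role: \(\hat F^L\) uses only \(\mathcal D^L\), but even that is not needed here. The second claim then follows from the triangle inequality \(|\hat u_\tau-u_\tau|\le|\hat u_\tau-\bar u_\tau|+|\bar u_\tau-u_\tau|\).

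The only point requiring care is the role of the \(\inf\) in the generalized inverse when the empirical step functions have ties or jumps. I would handle it as above, using right-continuity of the weighted empirical CDFs so that \(\bar F^U(\bar u_\tau(x)\mid x)\ge\tau\). With that in place, both directions of the inversion are clean, and no smoothness of \(F^U\) is needed for this lemma.
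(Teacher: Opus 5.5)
Your proposal is correct and matches the paper's proof essentially step for step. The paper sets $\Delta_n:=\max_{1\le i\le n_H}|\hat U_i-U_i|$ and bounds it by the uniform LF CDF error using $Y_i^H\in\mathcal Y$. It then derives the same indicator sandwich $\bar F^U(u-\Delta_n\mid x)\le \hat F^U(u\mid x)\le \bar F^U(u+\Delta_n\mid x)$ from the shared nonnegative weights, inverts it to obtain $|\hat u_\tau(x)-\bar u_\tau(x)|\le\Delta_n$, and finishes with the triangle inequality. You work with the uniform supremum directly instead of the maximum over HF points, and you spell out the generalized-inverse step via right-continuity more carefully than the paper does, but the argument is the same.
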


\begin{proof}
Let
\[
\Delta_n:=\max_{1\le i\le n_H}|\hat U_i-U_i|.
\]
Since \(Y_i^H\in \mathcal Y\) almost surely under Assumption~\ref{ass:global_locality},
\[
\Delta_n
=
\max_{1\le i\le n_H}
\bigl|\hat F^L(Y_i^H\mid X_i^H)-F^L(Y_i^H\mid X_i^H)\bigr|
\le
\sup_{x\in\mathcal X,\; y\in \mathcal Y}
\bigl|\hat F^L(y\mid x)-F^L(y\mid x)\bigr|.
\]
By Proposition~\ref{prop:kernel_cdf_uniform_aux},
\[
\Delta_n
=
O_p\!\left(
h_L^{\beta_L}
+
\sqrt{\frac{\log n_L}{n_L h_L^p}}
\right).
\]

For every \(u\in\mathbb R\),
\[
\one\{U_i\le u-\Delta_n\}\le \one\{\hat U_i\le u\}\le \one\{U_i\le u+\Delta_n\}.
\]
Multiplying by the nonnegative weights \(w_i^U(x)\) and summing over \(i\) gives
\[
\bar F^U(u-\Delta_n\mid x)\le \hat F^U(u\mid x)\le \bar F^U(u+\Delta_n\mid x).
\]
By inversion of these monotone functions,
\[
\bar u_\tau(x)-\Delta_n\le \hat u_\tau(x)\le \bar u_\tau(x)+\Delta_n.
\]
Hence
\[
|\hat u_\tau(x)-\bar u_\tau(x)|\le \Delta_n
=
O_p\!\left(
h_L^{\beta_L}
+
\sqrt{\frac{\log n_L}{n_L h_L^p}}
\right).
\]
The second claim follows from the triangle inequality.
\end{proof}

\subsection{Proof of Corollary~\ref{cor:wrapper_rate_transfer}}

\begin{proof}
By Assumption~\ref{ass:kernel_rate_setup}, we have \(\beta_q,\beta_u,\beta_L\in(0,1]\). The bounded-density and kernel-mass conditions in Assumption~\ref{ass:kernel_rate_setup} are exactly the covariate-design conditions required by Propositions~\ref{prop:kernel_quantile_rate_aux} and \ref{prop:kernel_cdf_uniform_aux} for each application below.

Applying Proposition~\ref{prop:kernel_quantile_rate_aux} with \(Z=Y^H\), \(m_\tau=q_\tau\), and \(\beta=\beta_q\) gives
\[
\hat q_\tau^H(x)-q_\tau(x)
=
O_p\!\left(n_H^{-\beta_q/(2\beta_q+p)}\right),
\]
when \(h_q\) is chosen at the MSE-optimal order \(h_q^\star\asymp n_H^{-1/(2\beta_q+p)}\).

Next define the oracle pseudo-responses
$U_i:=F^L(Y_i^H\mid X_i^H),$
and let \(\bar u_\tau(x)\) be the local constant kernel quantile estimator constructed from \(\{(X_i^H,U_i)\}_{i=1}^{n_H}\). By Assumption~\ref{ass:kernel_rate_setup}(ii), the wrapped response \(U\) satisfies the conditions of Proposition~\ref{prop:kernel_quantile_rate_aux} with \(\mathcal Z=\mathcal U^L\), \(m_\tau=u_\tau\), and \(\beta=\beta_u\). Therefore,
\[
\bar u_\tau(x)-u_\tau(x)
=
O_p\!\left(n_H^{-\beta_u/(2\beta_u+p)}\right),
\]
when \(h_u\) is chosen at the MSE-optimal order \(h_u^\star\asymp n_H^{-1/(2\beta_u+p)}\).

Applying Proposition~\ref{prop:kernel_cdf_uniform_aux} to the LF sample with bandwidth
\[
h_L^\star \asymp \left(\frac{\log n_L}{n_L}\right)^{1/(2\beta_L+p)},
\]
gives
\[
\sup_{x\in\mathcal X,\; y\in \mathcal Y}
\bigl|\hat F^L(y\mid x)-F^L(y\mid x)\bigr|
=
O_p\!\left(
\left(\frac{\log n_L}{n_L}\right)^{\beta_L/(2\beta_L+p)}
\right).
\]
Hence, by Lemma~\ref{lem:u_generated},
\[
\hat u_\tau(x)-\bar u_\tau(x)
=
O_p\!\left(
\left(\frac{\log n_L}{n_L}\right)^{\beta_L/(2\beta_L+p)}
\right),
\]
and therefore
\[
\hat u_\tau(x)-u_\tau(x)
=
O_p\!\left(
 n_H^{-\beta_u/(2\beta_u+p)}
 +
 \left(\frac{\log n_L}{n_L}\right)^{\beta_L/(2\beta_L+p)}
\right).
\]

Therefore, setting
\[
a_n:=n_H^{-\beta_u/(2\beta_u+p)}
+
\left(\frac{\log n_L}{n_L}\right)^{\beta_L/(2\beta_L+p)},
\qquad
b_n:=\left(\frac{\log n_L}{n_L}\right)^{\beta_L/(2\beta_L+p)},
\]
and applying Theorem~\ref{thm:wrapper_nonparam} yields
\[
\tilde q_\tau(x)-q_\tau(x)
=
O_p\!\left(
 n_H^{-\beta_u/(2\beta_u+p)}
 +
 \left(\frac{\log n_L}{n_L}\right)^{\beta_L/(2\beta_L+p)}
\right).
\]
\end{proof}

\subsection{Proof of Theorem~\ref{thm:multiple_bias_nonparam}}

\begin{proof}
Fix \(x\) and \(\tau\), and suppress them from the notation. Write
\[
q:=q_\tau(x),\qquad
\tilde q:=\tilde q_\tau(x),\qquad
e:=e_\tau(x)=\tilde q-q,
\]
and
\[
F:=F^H(\cdot\mid x),\qquad
f:=f^H(\cdot\mid x),\qquad
\nu:=\nu_\tau(x),\qquad
\eta:=\eta_\tau(x).
\]
By definition,
\[
\tilde q_{\tau,\gamma}(x)-q
=
e-\gamma\frac{\hat F^H(\tilde q\mid x)-\tau}{\hat f^H(\tilde q\mid x)}.
\]

Since \(F(q)=\tau\) and \(F\) is twice continuously differentiable,
\[
F(\tilde q)-\tau
=
f(q)e+O_p(e^2).
\]
Hence
\[
\hat F^H(\tilde q\mid x)-\tau
=
F(\tilde q)-\tau+\nu
=
f(q)e+\nu+O_p(e^2).
\]

Likewise, since \(f\) is continuously differentiable,
\[
f(\tilde q)=f(q)+O_p(|e|),
\quad 
\hat f^H(\tilde q\mid x)=f(\tilde q)+\eta=f(q)+\eta+O_p(|e|).
\]
Because \(e=o_p(1)\) and \(\eta=o_p(1)\) by Assumption~\ref{ass:pilot_local}, we have
\[
\hat f^H(\tilde q\mid x)=f(q)+o_p(1),
\]
so \(\hat f^H(\tilde q\mid x)\) is bounded away from zero in probability. The reciprocal expansion then yields
\[
\frac{1}{\hat f^H(\tilde q\mid x)}
=
\frac{1}{f(q)}+O_p(|e|+|\eta|).
\]

Multiplying the numerator and denominator expansions,
\begin{align*}
\frac{\hat F^H(\tilde q\mid x)-\tau}{\hat f^H(\tilde q\mid x)}
&=
\bigl(f(q)e+\nu+O_p(e^2)\bigr)
\Bigl(\frac{1}{f(q)}+O_p(|e|+|\eta|)\Bigr) \\
&=
e+\frac{\nu}{f(q)}
+
O_p\!\bigl(e^2+e\eta+e\nu+\nu\eta\bigr).
\end{align*}
Substituting this into the definition of \(\tilde q_{\tau,\gamma}(x)-q\) gives
\[
\tilde q_{\tau,\gamma}(x)-q
=
(1-\gamma)e
-\gamma\frac{\nu}{f(q)}
+
O_p\!\bigl(e^2+e\eta+e\nu+\nu\eta\bigr),
\]
which is exactly the stated expansion.
\end{proof}

\subsection{Proof of Proposition~\ref{prop:iterated_onestep_main}}
\label{sect:proof_steps}

\begin{proof}
Work on the event in Assumption~\ref{ass:multistep}, whose probability tends to
one. 
Let \(C_N:=L/2c\). By the assumption, \(e_0\le \delta_0\) and
\(C_N\delta_0<1\).
Write
\[
\Psi(q):=\Psi_n(q;x),\qquad
\hat q:=\hat q_\tau^H(x),\qquad
q_m:=q_\tau^{(m)}(x),
\qquad
e_m:=|q_m-\hat q|.
\]

We first show by induction that \(e_m\le \delta_0\) for all \(m\ge 0\). The
claim holds for \(m=0\). Suppose it holds for some \(m\). Then
\(q_m\in[\hat q-\delta_0,\hat q+\delta_0]\subset I_x\), so Taylor's theorem can
be applied on \(I_x\). Since \(\Psi(\hat q)=0\), there exists a point
\(\xi_m\) between \(q_m\) and \(\hat q\) such that
\[
0
=
\Psi(q_m)+\Psi'(q_m)(\hat q-q_m)
+\frac{1}{2}\Psi''(\xi_m)(\hat q-q_m)^2.
\]
Because
\[
\Psi'(q)=\hat f^H(q\mid x),
\qquad
q_{m+1}=q_m-\frac{\Psi(q_m)}{\Psi'(q_m)},
\]
we obtain
\[
q_{m+1}-\hat q
=
\frac{\Psi''(\xi_m)}{2\Psi'(q_m)}(q_m-\hat q)^2.
\]
By Assumption~\ref{ass:multistep},
\[
|\Psi''(\xi_m)|
=
|\partial_q\hat f^H(\xi_m\mid x)|
\le L,
\qquad
\Psi'(q_m)=\hat f^H(q_m\mid x)\ge c.
\]
Therefore $e_{m+1}\le C_N e_m^2$.
Using the induction hypothesis \(e_m\le\delta_0\), we get
\[
e_{m+1}
\le C_N e_m^2
\le C_N\delta_0 e_m
< e_m
\le \delta_0.
\]
Thus \(e_m\le\delta_0\) for all \(m\ge0\), and all iterates remain in \(I_x\).

Moreover, again by induction,
\[
e_m\le C_N^{-1}(C_N\delta_0)^{2^m}.
\]
Indeed, for \(m=0\),
\[
e_0\le \delta_0=C_N^{-1}(C_N\delta_0).
\]
If the bound holds at step \(m\), then
\[
e_{m+1}
\le C_N e_m^2
\le
C_N\Bigl\{C_N^{-1}(C_N\delta_0)^{2^m}\Bigr\}^2
=
C_N^{-1}(C_N\delta_0)^{2^{m+1}}.
\]
Since \(C_N\delta_0<1\), it follows that \(e_m\to0\). Hence
\[
q_\tau^{(m)}(x)\to \hat q_\tau^H(x)
\qquad\text{as }m\to\infty.
\]
Because the event in Assumption~\ref{ass:multistep} has probability tending to
one, the stated conclusion follows.
\end{proof}

\end{document}